\pdfoutput=1
\documentclass[a4paper,11pt]{article}
\usepackage[margin=25mm]{geometry}
\usepackage[T1]{fontenc}
\usepackage{lmodern}
\usepackage{amsmath,amssymb,amsthm,booktabs,graphicx,xcolor}
\usepackage{array}
\newcommand{\exptag}[1]{\textsf{Exp-#1}}
\usepackage{enumitem}
\usepackage{float}
\usepackage{xurl}
\usepackage{algorithm}
\usepackage[noend]{algpseudocode}
\usepackage[colorlinks=true,linkcolor=blue!45!black,urlcolor=blue!45!black,citecolor=blue!45!black]{hyperref}
\setlist{nosep}

\numberwithin{equation}{section}
\numberwithin{figure}{section}
\numberwithin{table}{section}
\numberwithin{algorithm}{section}
\newtheorem{theorem}{Theorem}[section]
\newtheorem{proposition}[theorem]{Proposition}
\newtheorem{lemma}[theorem]{Lemma}
\newtheorem{corollary}[theorem]{Corollary}
\newtheorem{definition}[theorem]{Definition}

\newcommand{\E}{\mathbb E}
\newcommand{\Var}{\operatorname{Var}}

\newcommand{\Pois}{\operatorname{Pois}}
\newcommand{\Skellam}{\operatorname{Skellam}}

\let\plaineqref\eqref
\renewcommand{\eqref}[1]{Eq.~\plaineqref{#1}}
\title{Random-Sign IBLTs: Calibration-Free Self-Sizing After Decoding Failure}
\author{Min Wu\\ \normalsize\texttt{wumin@hdu.edu.cn}}
\begin{document}
\maketitle
\begin{abstract}
In a two-round set reconciliation protocol based on an Invertible Bloom Lookup Table (IBLT), the first round sends a fixed-capacity table. If an oversized difference causes decoding failure, the receiver must estimate a safe upper bound on the unknown difference size $d$ from the sum of squared counts in the difference table to size the second-round table adaptively. The standard IBLT uses each element's set direction as its accumulation sign. This couples the hypergraph-overlap bias induced by failure selection with the sign composition of the two-sided difference. A one-sided difference amplifies the estimation bias through contributions of the same sign, forcing existing schemes to rely on extensive offline simulation calibration over a grid of configurations.

We propose the \emph{random-sign IBLT}. A pseudorandom sign independent of the position mapping decouples failure selection from sign composition while preserving the number and width of cell fields and the peeling recovery capability. We prove that the estimator is exactly unbiased on every fixed failed hypergraph. For large differences, we prove a Gaussian limit for the estimation error conditioned on failure and derive a closed-form Gaussian upper bound. For small differences, we characterize a discrete limit distribution dominated by duplicate edge pairs and construct a Skellam upper-bound rule whose conditional miss rate is asymptotically at most the target risk, uniformly over all $d\ge2$ without offline calibration. Five hundred million paired trials and end-to-end experiments with the real protocol spanning several orders of magnitude in load show that the random-sign IBLT eliminates direction sensitivity. For small differences, the conditional miss rate of the Gaussian upper bound exceeds the target by more than a factor of 8, whereas the Skellam upper-bound rule keeps it at or below the target level within sampling error across all 189 load configurations. The total two-round communication cost approaches that of an oracle using the true difference size in the same protocol.
\end{abstract}

\section{Introduction}\label{sec:intro}
Many distributed systems require two network endpoints holding sets $A$ and $B$ to find their symmetric difference $A\triangle B$, the elements present on only one side. This problem is called set reconciliation. The Invertible Bloom Lookup Table (IBLT)\cite{gm,eppstein2011whatsdifference} is the standard data structure for this task. Each side builds a table of $M$ cells and hashes each element into the count, keySum, and checksum fields of $k$ cells. The receiver then subtracts the tables cell by cell. Shared elements contribute identical values to the same cells and cancel exactly, leaving only the contributions of the $d=|A\triangle B|$ differing elements. An element present only in $A$ has direction $+1$, and an element present only in $B$ has direction $-1$. When the difference size $d$ is small enough relative to the table size $M$ (the load $d/M$ is below the peeling threshold), the receiver repeatedly finds a pure cell containing a single differing element and recovers all differing elements through peeling. Once $d$ exceeds the table's capacity, peeling stalls and decoding fails.

An IBLT presents a basic tension: the table size $M$ must be fixed at construction, yet the difference size $d$ remains unknown until synchronization completes. In practice, $d$ for the same pair of nodes can vary by several orders of magnitude over time. During routine synchronization, the two sides are nearly identical and $d$ is small. After recovery from a network partition or a batch write, $d$ can grow sharply. Allocating a large table for the worst case wastes bandwidth in most routine synchronizations. Allocating a small table causes decoding failure when the difference is large. The two-round protocol\cite{wu2026selfsizing} addresses this tension. The first round sends a small table of fixed size, allowing routine small differences to complete in one round. If an oversized difference causes decoding failure, the count array of the difference table still retains the superimposed contributions of all $d$ differing elements. The receiver estimates an upper bound $U$ on $d$ from the sum of squared counts in this table. The second round allocates sufficient capacity based on $U$ and completes synchronization. The key requirement is to control the underestimate risk conditioned on failure (the conditional miss rate). If $U<d$ (a miss), insufficient capacity puts the second-round table at risk of another failure. An overly conservative $U$ wastes second-round bandwidth.

The method in \cite{wu2026selfsizing} uses the count sum and the sum of squared counts in the first-round difference table to construct an unbiased estimator $\widehat d$ and obtain an upper bound $U$ by inversion. Estimating $d$ after failure is difficult because decoding failure itself selects the sample, and this selection couples with the sign composition of the set difference (the ratio of differing elements from $A$ to those from $B$). When we estimate $d$ with this sum-of-squares method, each element's own contribution is fixed. The estimation error comes entirely from cross terms between differing elements that fall into the same cell. The standard IBLT uses each element's set-membership direction ($+1$ or $-1$) as its accumulation sign in count. Two elements with the same direction add when they share a cell, increasing the sum of squares. Elements with opposite directions cancel. Peeling success depends on how the position mappings of differing elements to cells overlap. Conditioning on failure therefore changes the overlap distribution. Together, these effects produce systematic bias in the failure sample. For a one-sided difference, all differing elements come from the same side. This is common in one-way synchronization, for example when a replica lacks only data newly written to the primary. All cross terms then have the same sign, so the change in overlap induced by failure is amplified in the sum of squared counts, substantially biasing the estimate. In two-way synchronization where each side writes new data, the numbers of differing elements on the two sides are comparable. For this balanced difference, positive and negative cross terms largely cancel, leaving the post-failure mean estimate nearly unchanged. The bias magnitude varies continuously with the ratio of differing elements on the two sides, and its sign reverses with the number of hashes $k$ (Section~\ref{sec:model}). After decoding failure, the receiver does not know whether the difference comes from one side or both. The observed sum of squared counts therefore cannot reveal how much is due to the true $d$ and how much to the direction bias induced by failure selection. Thus, except in deep overload, the upper-bound rule for the standard IBLT in \cite{wu2026selfsizing} relies in practice on offline simulation calibration for each configuration; its analytic bound is too conservative for small $M$ and may yield no usable quantile.

Our core insight is that separating an element's set direction from its accumulation sign removes the source of this coupling. We therefore propose the random-sign IBLT. The change is lightweight: the two sides share a random-sign hash independent of the position mapping, assigning each key a random $+1$ or $-1$. On insertion, the element's random sign multiplies its contributions to all three fields: count, keySum, and checksum.

This modification supports both decoding and estimation. Shared elements receive the same random sign in both tables and still cancel exactly after subtraction. A cell containing one unrecovered differing element has count $+1$ or $-1$. Multiplying keySum by count cancels the random sign exactly (since $(\pm1)^2=1$) and recovers the original key. Computing the key's random sign and comparing it with count then reveals whether the key belongs to set $A$ or $B$. Under ideal purity, the random-sign IBLT therefore peels exactly the same elements as the standard IBLT at every step. The decoding threshold, failure probability, and number and byte width of cell fields remain unchanged. For estimation, each differing element's coding coefficient in the difference table becomes an independent uniform random sign, whether the difference is one-sided or balanced. Peeling success depends only on which cells the elements select (the mapping), while cancellation of the estimation error depends on the random signs. These are independent. For every mapping selected by failure, the random signs remain independent and uniform, so their contributions cancel in expectation. Using random signs to cancel cross terms is a classical idea from the AMS sketch and CountSketch\cite{alon1999spacecomplexity,charikar2004countsketch}. We implement it in a peelable data structure, enabling one table to support both difference recovery and difference-size estimation and allowing rigorous statistical analysis of the post-failure estimator.

\paragraph{Main results and contributions.}
We prove that the random-sign IBLT preserves the standard IBLT's decoding behavior under ideal purity and that the joint distribution of the difference-table statistic and the mapping is identical for all sign compositions. Section~\ref{sec:design} presents the encoding and peeling algorithm that achieves this decoupling. We then prove that the estimator $\widehat d$ is exactly unbiased on every fixed mapping, giving conditional mean $d$ after failure, and fully characterize the conditional distribution of the estimation error across all load scales (Sections~\ref{sec:measurement} and \ref{sec:inversion}). For large differences, we derive a Gaussian rule from the conditional Gaussian limit. For small differences, we characterize a discrete limit distribution and construct a Skellam rule whose conditional miss rate is asymptotically at most the target risk, uniformly over all difference sizes without offline simulation calibration. We also lift the known duplicate-pair mechanism to the distribution level: at low load, the failure-conditioned distribution is close in total variation to the planted distribution, which reduces the conditional analysis of rare failures to direct computation under the planted distribution and yields an efficient planted sampling method. The experiments (Section~\ref{sec:exp}) verify that random-sign and standard encoding have identical decoding behavior. They show that the Skellam rule keeps the conditional miss rate at or below the target level within sampling error across all 189 configurations spanning several orders of magnitude in load, whereas the Gaussian rule exceeds the target by more than a factor of 8 for small differences. With this rule, the two-round protocol reliably completes synchronization at a total transmission cost close to that of an oracle using the true $d$ in the same protocol.

\paragraph{Structure of the paper.}
Section~\ref{sec:model} introduces the standard IBLT model and analyzes its post-failure estimation bias. Section~\ref{sec:design} presents the random-sign IBLT and the two-round protocol. Section~\ref{sec:measurement} proves exact unbiasedness given the mapping and derives the Gaussian rule for large differences. Section~\ref{sec:inversion} derives the discrete limit distribution for small differences and the Skellam rule for all scales. Section~\ref{sec:exp} reports the experimental evaluation. Sections~\ref{sec:related} and \ref{sec:conclusion} discuss related work, rule selection, and computational cost. The appendix contains complete proofs and supplementary experiments.

\section{Model and post-failure estimation bias}\label{sec:model}
We introduce the standard IBLT and the ideal peeling model, define the conditional miss rate required by the two-round protocol, characterize failure at low load, and explain how the standard IBLT's post-failure bias depends on sign composition. Appendix~\ref{app:notations} collects the notation used throughout the paper.

\subsection{The standard IBLT and the ideal peeling model}
The two sides hold sets $A$ and $B$ and seek their symmetric difference $A\triangle B$, whose size $d=|A\triangle B|$ is unknown in advance. The standard IBLT consists of $M$ cells, each with three fields: count, keySum, and checksum. To insert a key $x$, the cell hash selects $k$ distinct cells from $[M]$ and adds $1$, $x$, and $\mathrm{chk}(x)$ to their respective fields. Each side builds a table, and the tables are subtracted cell by cell. Shared keys contribute identical values to the same cells in both tables, so their contributions cancel in all three fields. Only the contributions of the $d$ differing keys remain. A key present only in $A$ has direction $s_i=+1$, and a key present only in $B$ has direction $s_i=-1$. We denote the numbers of keys in the two directions by $d_+$ and $d_-$. The ordered pair $(d_+,d_-)$ is the \emph{sign composition}, and $b=d_+-d_-=|A|-|B|$ is the difference between the set sizes. We use two extreme sign compositions throughout: a \emph{one-sided difference}, where all differing keys come from one side ($|b|=d$), and a \emph{balanced difference}, where both sides have the same number of differing keys ($d_+=d_-$, $b=0$). Both correspond to common synchronization scenarios. A one-sided difference arises in one-way synchronization when only one side inserts or deletes entries. A balanced difference arises in two-way synchronization when each side modifies a small number of entries and the insertions and deletions roughly cancel. Practical loads usually lie between these extremes, depending on how much each side changes. Decoding proceeds by peeling. We repeatedly find a pure cell containing one unrecovered differing key, recover the key by multiplying keySum by count, verify it with the checksum, and remove its contribution from all $k$ cells it selects. Removal may make other cells pure. Peeling continues until no pure cell remains.

Under ideal purity, the decoder accepts a cell as pure exactly when it contains one unrecovered differing key. Cells with count $\pm1$ that contain multiple differing keys are rejected by the checksum. Under this assumption, peeling success depends entirely on the mapping of differing keys to cells. View the $M$ cells as vertices and the $k$ cells selected by each differing key as a hyperedge. The difference table then corresponds to a random $k$-uniform hypergraph: the $d$ differing keys independently and uniformly choose $k$-subsets $S_i\subseteq[M]$, with different keys allowed to choose the same subset. We denote the unsigned mapping with key labels by $\mathcal G$ and the overlap between two edges by $H_{ij}=|S_i\cap S_j|$, the number of cells shared by differing keys $i$ and $j$. On $\mathcal G$, pure cells are exactly the unsigned degree-one vertices. Peeling repeatedly deletes edges incident to these vertices. The deletion order does not affect the terminal state, which is always the hypergraph's 2-core. A nonempty 2-core causes decoding failure, denoted by event $F$, with failure probability $\Pr(F)$. Thus $F$ depends only on the unsigned mapping $\mathcal G$. When the load $d/M$ is below the critical value $\alpha_c$ (about $0.818$ for $k=3$ and $0.772$ for $k=4$\cite{molloy,dw}), peeling succeeds with high probability. Rare failures arise from small cores such as duplicate edge pairs, and their probability decays polynomially in $M$. In the critical window of width of order $M^{-1/2}$ around $\alpha_c$, the failure probability rises from near zero to near one\cite{dm}. Above $\alpha_c$, the failure probability tends to one.

\subsection{The two-round protocol and the conditional miss rate}
Because the capacity $M$ must be fixed before $d$ is known, the two-round protocol\cite{wu2026selfsizing} first sends a small table of fixed size $M_1$. For a small difference, peeling succeeds in the first round. After a first-round failure, the receiver obtains an upper bound $U$ on $d$ from the difference table's count array and chooses the second-round size $m(U)$ based on $U$. The second round re-encodes and decodes using a fresh, independent cell hash.

Let $C_v$ denote the count of cell $v$ in the difference table. Following \cite{wu2026selfsizing}, we define the estimator $\widehat d$ of $d$ using the count sum $\sum_vC_v$ and sum of squares $\sum_vC_v^2$ saved before peeling begins:
\begin{equation}
\widehat d=\frac{1}{k(1-k/M)}\biggl(\sum_vC_v^2-\frac1M\Bigl(\sum_vC_v\Bigr)^2\biggr),\label{eq:dhat}
\end{equation}
and denote its unconditional standard deviation scale by $s(n)$:
\begin{equation}
s(n)=\sqrt{\frac{2n(n-1)}{M-1}}.\label{eq:s}
\end{equation}
For the standard IBLT, $\widehat d$ is exactly unbiased for $d$ and has variance $s(d)^2$ for every sign composition. This expectation is unconditional: it averages over all runs, regardless of failure $F$. If we evaluated the upper bound $U$ over all runs, adding a few standard deviations to $\widehat d$ would control the underestimate probability. In the two-round protocol, a successful first round ends the protocol immediately, without computing $U$. Only after decoding failure (event $F$) does the receiver compute $U$ to size the second-round table. If $U<d$, the second-round capacity falls below the prescribed margin, risking another failure. We call the event $U<d$ a \emph{miss} and its probability conditioned on first-round failure, $\Pr(U<d\mid F)$, the \emph{conditional miss rate}. The sizing rule must therefore satisfy
\begin{equation}
\Pr(U<d\mid F)\le u,\label{eq:target}
\end{equation}
where $u$ is the target risk. In deployment, we can choose $0.01$ or $0.05$, corresponding to a 1 \% or 5 \% underestimate risk.

The two-round protocol incurs a miss only when the first round fails and the upper bound underestimates $d$. This event has \emph{joint miss rate} $\Pr(F,\,U<d)=\Pr(F)\Pr(U<d\mid F)$. Controlling the conditional miss rate at $u$ bounds the joint miss rate by $u\Pr(F)$, both in overload, where failure is almost certain, and at low load, where it is rare. For failure samples without a miss ($U\ge d$), the second-round table has sufficient capacity margin to decode with high probability. Section~\ref{sec:protocol} gives the complete decomposition of the total two-round failure rate.

\subsection{Failure at low load and the planted distribution}\label{sec:plant}
\eqref{eq:target} must hold at all loads. Low load requires separate treatment. When $d$ is much smaller than $M$, the failure probability decays polynomially in $M$. Failure is then rare, making the distribution conditioned on $F$ difficult to analyze directly and requiring many simulated mappings for each failure sample. The structure of failure in this region is simple. Two edges with the same mapping form a duplicate edge pair. A single edge cannot form a 2-core, and two edges form a 2-core exactly when they are a duplicate edge pair. When $d=o(M)$, 2-cores with three or more edges are negligible relative to duplicate edge pairs. Failure is therefore approximately equivalent to the existence of a duplicate edge pair. Conditioning on $F$ is close to planting a duplicate edge pair in a random mapping. We thus define the \emph{planted distribution}: choose a pair of keys uniformly, assign both the same uniformly random $k$-subset, and let all remaining keys choose their mappings independently and uniformly.

\begin{proposition}[Failure is determined by duplicate edge pairs when $d=o(M)$]\label{prop:plant}
Fix $k\ge3$ and let $2\le d=o(M)$. The failure probability is asymptotically equal to the expected number of duplicate edge pairs: $\Pr(F)\sim\binom d2/\binom Mk$. The total variation distance between the distribution of $\mathcal G$ conditioned on $F$ and the planted distribution tends to zero.
\end{proposition}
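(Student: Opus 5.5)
The plan is to sandwich $\Pr(F)$ between the probability of the event $D$ that some duplicate edge pair exists and $\Pr(D)$ plus the probability of a 2-core containing no duplicate pair. The total variation claim will then follow from a coupling of the conditioned law with the planted law. Let $N=\binom Mk$ and let $X$ be the number of duplicate edge pairs. Then $\E X=\binom d2/N$. Since $D\subseteq F$, we have $\Pr(D)\le\Pr(F)$.

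For the lower bound, a second-moment (Bonferroni) bound gives $\Pr(D)\ge \E X-\E\binom X2$. Two distinct duplicate pairs either share a key, forming a triple with identical edges, which has probability $N^{-2}$ and number $O(d^3)$, or are disjoint, with probability $N^{-2}$ and number $O(d^4)$. Hence $\E\binom X2=O(d^3/N^2+d^4/N^2)=\E X\cdot O(d^2/N)$, which is $o(\E X)$ because $d=o(M)$ and $k\ge3$.

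For the upper bound, I would show that the probability of a 2-core containing no duplicate edge pair is $o(\binom d2/N)$. This is the main obstacle. Such a core contains a minimal 2-core (every vertex of degree $\ge2$) on $j\ge3$ edges that are pairwise distinct. I would bound the expected number of such substructures by a first-moment count. A 2-core with $j$ edges spans at most $v\le kj/2$ vertices. Choosing the key set costs at most $d^j$, and the probability that $j$ random $k$-sets cover a given configuration on $v$ vertices is roughly $\binom Mv (v/M)^{kj}\cdot$const$^j$. The expected number of $j$-edge cores is therefore at most $C^j d^j M^{v-kj}\le C^j d^j M^{-kj/2}$.

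Compare this with $d^2M^{-k}$. For $j=3$ and $k\ge3$ (with no duplicate pairs forced), one checks $v\le\lfloor 3k/2\rfloor$, which gives a bound of $d^3M^{-\lceil 3k/2\rceil}$. Its ratio to $d^2M^{-k}$ is $d\,M^{-\lceil k/2\rceil}=o(1)$. For general $j$ the ratio is $C^j(d/M^{k/2})^{j-2}$, which is summable and $o(1)$ since $d=o(M)\le o(M^{k/2})$. Very large cores, say $j\ge \epsilon d$, need care because the naive union over vertex sets blows up. For these I would invoke the standard fact that below $\alpha_c$ (here the load is $\to0$) the probability of a 2-core with $\Theta(d)$ edges is exponentially small, or equivalently the expansion argument of Molloy cited in the paper. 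I would carry out the middle range $3\le j\le \epsilon d$ with the above sum, using $\binom dj(jk/M)^{kj/2}$-type bounds, which decay geometrically when $d/M\to0$.

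For the total variation claim, $\Pr(F\triangle D)=o(\Pr(F))$, so $\mathcal L(\mathcal G\mid F)$ and $\mathcal L(\mathcal G\mid D)$ are $o(1)$-close in total variation. It remains to compare $\mathcal L(\cdot\mid D)$ with the planted law $P$. The planted law is exactly the size-biased law: $P(g)=X(g)\Pr(g)/\E X$. Meanwhile $\Pr(g\mid D)=\mathbf 1[X(g)\ge1]\Pr(g)/\Pr(D)$. Their total variation is bounded by $\E|X/\E X-\mathbf 1[X\ge1]/\Pr(D)|$. This is at most $\E[(X-1)^+]/\E X+|1-\E X/\Pr(D)|$, and both terms are $O(\E\binom X2/\E X)=o(1)$ by the lower-bound computation. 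Combining the two estimates gives $\Pr(F)\sim\binom d2/\binom Mk$ and the total variation convergence.
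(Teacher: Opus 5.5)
Your outline matches the paper's proof of Lemma~\ref{lem:plant}. You use $\{X\ge1\}\subseteq F$, a Bonferroni lower bound (two distinct duplicate pairs are simultaneously duplicated with probability $\binom Mk^{-2}$), a first-moment count of cores with $j\ge3$ pairwise distinct edges to get $\Pr(F\setminus\{X\ge1\})=o(\E X)$, and the size-biased density $X/\E X$ of the planted law for the total variation bound. Those pieces are correct.

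\emph{The gap: cores with $j\ge\epsilon d$ edges.} This step does not hold as written. The subcritical linear-core results you want to cite (Kim, Molloy, Dembo--Montanari) bound the probability of a core with at least $\delta M$ vertices at a fixed load $\alpha\in(0,\alpha_c)$; the paper uses such a bound only in Lemma~\ref{lem:propL}. Here a core has at most $d$ edges, hence at most $kd/2=o(M)$ vertices, so those results say nothing about it. Moreover, for bounded $d$ the range $j\ge\epsilon d$ contains $j=3$. What you need there is a bound $o(d^2M^{-k})$, which is only polynomially small in $M$, so exponential smallness in $d$ is neither available nor the right comparison. As written, your upper bound on $\Pr(F)$ is therefore unproved, at least for small $d$.

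\emph{The fix.} No separate large-core argument is needed. The union bound over vertex sets would blow up only for $j$ of order $M$, and here $j\le d=o(M)$. With $v=\lfloor kj/2\rfloor$, the first-moment bound satisfies
\[
\binom dj\binom Mv\Bigl(\binom vk\Big/\binom Mk\Bigr)^j\le\Bigl(\frac{C_kd}{M}\Bigr)^j\Bigl(\frac jM\Bigr)^{\lceil(k-2)j/2\rceil}.
\]
For $j\ge4$, drop the ceiling. The resulting envelope has consecutive-term ratio $O\bigl((d/M)((j+1)/M)^{(k-2)/2}\bigr)\to0$ uniformly over $j\le d$, so $\sum_{j=4}^{d}=O(d^4M^{-2k})$. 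The $j=3$ term is $O(d^3M^{-\lceil3k/2\rceil})$. Both are $o(d^2M^{-k})$, and this is exactly how the paper closes the upper bound.

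\emph{A second, smaller error.} Your intermediate bound $C^jd^jM^{-kj/2}$ is valid only for bounded $j$. The quantity $\binom Mv(v/M)^{kj}$ is of order $C^j(v/M)^{kj-v}$, not $C^jM^{v-kj}$, so a factor of order $j^{kj/2}$ was lost (partly offset by using $\binom dj\le(\mathrm e\,d/j)^j$ in place of $d^j$). Your later form $\binom dj(jk/M)^{kj/2}$ is the correct one, and running it all the way to $j=d$ repairs the proof.

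\emph{Remaining steps.} Your Bonferroni estimate and your total variation bound via $\E[(X-1)^+]\le\E\binom X2$ are fine. The paper bounds the same distance slightly more sharply, by $(\E X-\Pr(X\ge1))/\E X\le\E X/2$.
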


Goodrich and Mitzenmacher\cite{gm} identify duplicate edge pairs as the main source of IBLT failure. The finite-length analysis of Yugawa and Wadayama\cite{yw} likewise shows that stopping sets of size 2 dominate. Proposition~\ref{prop:plant} extends this observation to a total variation approximation of the entire mapping for all $d=o(M)$. Lemma~\ref{lem:plant} in Appendix~\ref{app:graph} gives the quantitative error bound and proof. Proposition~\ref{prop:plant} supports both theory and experiments. It replaces conditioning on failure with the planted distribution, under which direct calculations are straightforward. The limit theorems in Sections~\ref{sec:measurement} and \ref{sec:inversion} use this approximation to handle conditioning on failure when $d=o(M)$. The risk decomposition in Section~\ref{sec:protocol} also uses it to estimate the first-round failure probability. For experiments, it enables direct sampling from the planted distribution, where every sample yields a failure. Section~\ref{sec:exp-setup} uses this planted sampling to evaluate low-load configurations beyond the reach of direct sampling and checks its finite-size accuracy where both methods are feasible.

\subsection{Post-failure bias of the standard IBLT}
Controlling \eqref{eq:target} is difficult because the failure event $F$ selects the mapping $\mathcal G$. This selection changes the distribution of the overlaps $H_{ij}$. At low load, failure samples mainly contain duplicate edge pairs (Proposition~\ref{prop:plant}), whose overlap is $k$, increasing the conditional mean of $H_{ij}$. In the critical window and at higher load, failure corresponds to mappings with large cores. Such a core is a local structure, and the direction of the change in the conditional mean of $H_{ij}$ depends on $k$, as measured by $c_F$ below. Statistics that depend on $H_{ij}$ therefore require separate analysis after conditioning on $F$.

In the standard IBLT, the difference-table count is $C_v=\sum_{i:v\in S_i}s_i$, and each differing key's coding coefficient is its fixed set direction $s_i\in\{+1,-1\}$. In the sum of $C_v^2$, the diagonal terms contribute exactly $kd$. The estimation error comes entirely from the cross terms $s_is_jH_{ij}$ between keys sharing cells. Excess overlap raises $\widehat d$ for same-sign pairs and lowers $\widehat d$ for opposite-sign pairs. The conditional relative bias of $\widehat d$ thus factors exactly into two terms.
\begin{proposition}[Bias decomposition of the standard IBLT]\label{prop:projection}
In the standard IBLT, for any sign composition $(d_+,d_-)$,
\begin{equation}
\frac{\E[\widehat d\mid F]}{d}-1=c_F\Bigl(\frac{b^2}{d}-1\Bigr),\qquad c_F=\frac{\E[H_{12}\mid F]-k^2/M}{k(1-k/M)}.\label{eq:projection}
\end{equation}
\end{proposition}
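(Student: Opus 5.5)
Since the signs $s_i$ are deterministic given the composition $(d_+,d_-)$ and $F$ depends only on the unsigned mapping $\mathcal G$, the plan is to expand $\widehat d$ algebraically into diagonal and cross terms in the overlaps $H_{ij}$, and then take the conditional expectation given $F$ term by term using exchangeability of the keys.

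\textbf{Step 1: expand the quadratic form.} With $C_v=\sum_{i:v\in S_i}s_i$ we have $\sum_vC_v=k\sum_is_i=kb$, so $\frac1M(\sum_vC_v)^2=k^2b^2/M$. Also
\begin{equation*}
\sum_vC_v^2=\sum_{i,j}s_is_jH_{ij}=kd+\sum_{i\ne j}s_is_jH_{ij},
\end{equation*}
using $H_{ii}=k$ and $s_i^2=1$.

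\textbf{Step 2: take the conditional expectation using exchangeability.} The event $F$ (nonempty 2-core) is invariant under relabeling of the keys. The unconditional law of $(S_1,\dots,S_d)$ is i.i.d., hence exchangeable. Therefore the conditional law given $F$ is exchangeable, and $\E[H_{ij}\mid F]=\E[H_{12}\mid F]=:h$ for all $i\ne j$. Since the $s_i$ are fixed, this gives
\begin{equation*}
\E\Bigl[\sum_vC_v^2\Bigm|F\Bigr]=kd+h\sum_{i\ne j}s_is_j=kd+h(b^2-d).
\end{equation*}
This is the one point needing care. It must be argued that the signs play no role in $F$, which holds because ideal purity makes $F$ a function of $\mathcal G$ alone, and that conditioning preserves symmetry across pairs.

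\textbf{Step 3: assemble.} The estimator becomes
\begin{equation*}
\E[\widehat d\mid F]=\frac{kd+h(b^2-d)-k^2b^2/M}{k(1-k/M)}.
\end{equation*}
Subtract $d=\frac{kd-k^2d/M}{k(1-k/M)}$ to obtain
\begin{equation*}
\E[\widehat d\mid F]-d=\frac{(h-k^2/M)(b^2-d)}{k(1-k/M)}=c_F(b^2-d).
\end{equation*}
Dividing by $d$ gives \eqref{eq:projection}. As a consistency check, $\E[H_{12}]=k^2/M$ unconditionally, so $c=0$ recovers the unconditional unbiasedness quoted earlier.

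The main obstacle is only the justification in Step 2. I would state it as a short lemma: for any permutation $\pi$ of the keys, $F$ is invariant, so $(H_{\pi(i)\pi(j)})$ given $F$ has the same law as $(H_{ij})$ given $F$. The rest is routine algebra.
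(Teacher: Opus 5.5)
Your proof is correct and follows essentially the paper's argument. You expand $\widehat d$ into diagonal and cross terms, use invariance of $F$ under key relabeling to make $\E[H_{ij}\mid F]$ the same for all pairs, and count $\sum_{i\ne j}s_is_j=b^2-d$; the only cosmetic difference is that you keep the deterministic term $k^2b^2/M$ separate, whereas the paper centers each cross term by $k^2/M$ before taking expectations.
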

The proof is in Appendix~\ref{app:std}. Since $F$ is invariant under permutations of key labels, $\E[H_{ij}\mid F]$ is identical for all $i\ne j$. The formula uses differing keys $1$ and $2$ as a representative pair. The factor $c_F$ depends only on failure selection of the mapping and measures the excess of inter-edge overlap in the failure sample over its unconditional mean $k^2/M$. The factor $b^2/d-1$ depends only on the sign composition $(d_+,d_-)$. For a one-sided difference ($|b|=d$), every intersecting edge pair has sign product $+1$, so $b^2/d-1=d-1$. All excess overlap contributes with the same sign, fully amplifying its effect on $\widehat d$. For a balanced difference ($b=0$), there are $d/2$ fewer same-sign pairs than opposite-sign pairs, giving a relative bias of $-c_F$, whose magnitude is $1/(d-1)$ of the one-sided value.
\begin{figure}[!htbp]
\centering\includegraphics[width=\linewidth]{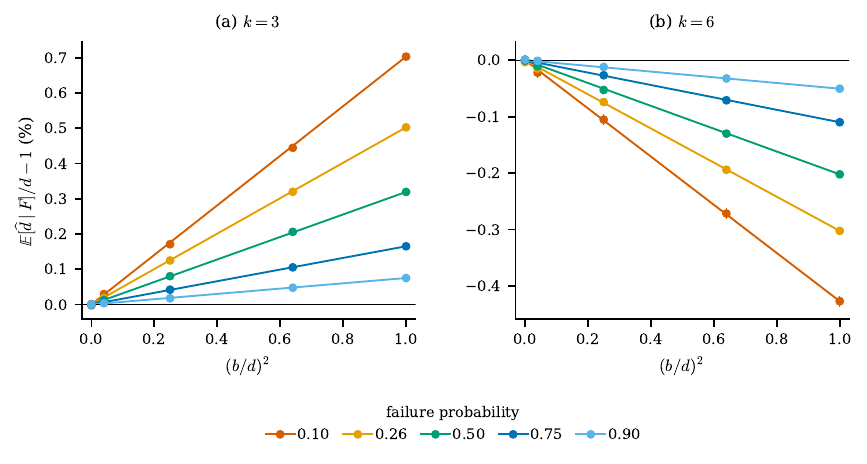}
\caption{Post-failure relative bias of the standard IBLT versus $(b/d)^2$ ($M=16384$). (a) $k=3$. (b) $k=6$. Each color denotes a failure probability. All configurations lie in the critical window: $d/M$ is 0.812--0.822 for $k=3$ and 0.633--0.639 for $k=6$. Points are measurements, and lines follow \eqref{eq:projection}.}
\label{fig:bias}
\end{figure}

Figure~\ref{fig:bias} shows this decomposition at $M=16384$ for five failure probabilities (0.10 to 0.90) each for $k=3$ and $k=6$. Within each configuration, all five sign compositions share the same batch of mappings and therefore identical peeling outcomes. Since $b^2/d-1=d(b/d)^2-1$, the relative bias at each failure probability is linear in $(b/d)^2$, with slope determined by $c_F$. We determine $c_F$ from the one-sided point. The remaining predictions of \eqref{eq:projection} differ from the measurements by at most twice the standard error of the difference. For $k=3$ at failure probability 0.10, the one-sided relative bias is $+0.70\%$, corresponding to a bias of about 0.64 standard deviations $s(d)$. As the failure probability increases, the failure sample approaches the full sample, weakening selection and reducing $|c_F|$. The sign of $c_F$ also depends on $k$. For $k=6$, $c_F<0$, so the one-sided failure sample is biased low, with relative bias $-0.43\%$ at failure probability 0.10 (Figure~\ref{fig:bias}(b)). The bias changes sign with $k$ and varies with sign composition according to \eqref{eq:projection}. A fixed correction therefore cannot offset it (Figure~\ref{fig:composition}).

In practice, the receiver does not know $(d_+,d_-)$ in advance. The difference table's count array alone cannot reveal whether a larger observed $\widehat d$ reflects a larger true $d$ or failure-induced bias for a one-sided difference. The upper-bound rule in \cite{wu2026selfsizing} therefore relies on simulation calibration by load interval outside deep overload, and its analytic bound is too conservative for small $M$. By \eqref{eq:projection}, the bias arises because the standard IBLT uses the set direction $s_i$ as the accumulation sign in count, with sign composition determining the factor that amplifies $c_F$. For a balanced difference, this factor is $-1$, leaving a relative bias of $-c_F$. This suggests encoding differing keys with independent, uniform random signs so that every cross term has zero mean for each mapping, while preserving peeling's ability to recover each key's original set direction. The post-failure estimator would then be unbiased for every sign composition. We develop this idea in the random-sign IBLT in Section~\ref{sec:design}.

\section{The random-sign IBLT}\label{sec:design}
We introduce the random-sign IBLT. The design multiplies each key by a random sign shared by both sides during insertion and deletion. The cell fields, field widths, and peeling process remain unchanged. We first give the encoding and decoding rules and prove ideal decoding equivalence with the standard IBLT and independence of the coding coefficients from the mapping. We then explain how the two-round protocol uses the lower bound $|b|$ to determine the second-round capacity.

\subsection{Encoding rules, decoding equivalence, and sign independence}
The two sides agree in advance on a sign hash $\varepsilon:\mathcal U\to\{-1,+1\}$ independent of the cell hash. Each sign is generated directly from the key and a shared seed, requiring no per-key storage. Algorithm~\ref{alg:encode} gives the complete encoding and decoding procedure. The lines marked $\star$ show all changes from the standard IBLT: applying the sign during insertion and deletion and recovering the direction on output. The number of cell fields, field widths, and peeling control flow remain unchanged. We describe encoding and decoding in turn.

\begin{algorithm}[!htbp]
\caption{Encoding and decoding of the random-sign IBLT. Lines marked $\star$ differ from the standard IBLT; field arithmetic is modulo $2^{64}$.}
\label{alg:encode}
\begin{algorithmic}[1]
\Procedure{Insert}{$x$}
\For{each cell $v$ of $x$}
\State $\mathrm{count}_v\gets\mathrm{count}_v+\varepsilon(x)$\Comment{$\star$ standard IBLT adds $1$}
\State $\mathrm{keySum}_v\gets\mathrm{keySum}_v+\varepsilon(x)\,x$\Comment{$\star$ standard IBLT adds $x$}
\State $\mathrm{checksum}_v\gets\mathrm{checksum}_v+\varepsilon(x)\,\mathrm{chk}(x)$\Comment{$\star$ standard IBLT adds $\mathrm{chk}(x)$}
\EndFor
\EndProcedure
\Procedure{Delete}{$x$}
\State as in \Call{Insert}{$x$}, subtract the same quantities from the three fields
\EndProcedure
\Procedure{Peel}{difference table}
\State add cells with count $\pm1$ to the candidate queue
\While{the candidate queue is nonempty}
\State remove cell $v$; if $C_v\ne\pm1$, process the next candidate
\State $x\gets C_v\,\mathrm{keySum}_v$
\If{$\mathrm{checksum}_v=C_v\,\mathrm{chk}(x)$ and the cells of $x$ include $v$}
\State output key $x$ and direction $C_v\,\varepsilon(x)$\Comment{$\star$ standard IBLT outputs $C_v$}
\State if the direction is $+1$, call \Call{Delete}{$x$}, otherwise call \Call{Insert}{$x$}
\State add cells whose count becomes $\pm1$ to the candidate queue
\EndIf
\EndWhile
\State \Return whether all three fields of every cell are zero
\EndProcedure
\end{algorithmic}
\end{algorithm}

Encoding follows lines 1--7 of Algorithm~\ref{alg:encode}. To insert a key $x$, we add $\varepsilon(x)$, $\varepsilon(x)\,x$, and $\varepsilon(x)\,\mathrm{chk}(x)$ to the three fields of each of its $k$ cells. Deletion subtracts the same quantities. We accumulate keySum and checksum modulo $2^{64}$. In the set model, these two fields also support the bitwise XOR implementation commonly used in the standard IBLT (Appendix~\ref{app:xor}). Our analysis and experiments use the additive form. Subtracting the two tables cell by cell cancels the identical contributions of each shared key in all three fields. A differing key $i$ has coding coefficient $s_i\varepsilon_i$ in the difference table, where $\varepsilon_i=\varepsilon(x_i)$. Thus, cell $v$ in the difference table has fields
\begin{equation}
C_v=\sum_{i:v\in S_i}s_i\varepsilon_i,\qquad
\mathrm{keySum}_v=\sum_{i:v\in S_i}s_i\varepsilon_i\,x_i,\qquad
\mathrm{checksum}_v=\sum_{i:v\in S_i}s_i\varepsilon_i\,\mathrm{chk}(x_i).\label{eq:fields}
\end{equation}
We treat $\varepsilon_i$ as iid uniform signs independent of the mapping $\mathcal G$.

Decoding follows lines 8--17 and uses the standard IBLT peeling process. A pure cell $v$ containing a single unrecovered differing key $x_i$ has count $C_v=s_i\varepsilon_i\in\{-1,+1\}$. Multiplying $\mathrm{keySum}_v$ by $C_v$ cancels the sign $s_i\varepsilon_i$ and yields the candidate key $x_i$ (line 12). Once the candidate passes the checksum and cell-membership checks (line 13), we multiply $C_v$ by $\varepsilon(x_i)$. Since $\varepsilon_i^2=1$, this recovers the original set direction $s_i=C_v\,\varepsilon(x_i)$ (line 14). We then call \textsc{Delete} or \textsc{Insert} according to the recovered direction $s_i$ (line 15). Both operations apply $\varepsilon(x_i)$ again in all $k$ cells of $x_i$. The contribution subtracted from each field is therefore exactly the key's contribution multiplied by its coding coefficient $s_i\varepsilon_i=C_v$. After each peeling step, the three fields still equal the encoded sum of the unrecovered keys. As in the standard IBLT, peeling succeeds when all three fields of every cell are zero (line 17).

Figure~\ref{fig:structure} compares the standard and random-sign IBLTs under the same mapping in a small one-sided example ($s_1=\dots=s_4=+1$). Both encodings have the same pure cells: the degree-$1$ cells 1, 4, 7, and 8, shaded in the figure. All have counts in $\{\pm1\}$. In pure cell 4, for example, the random sign gives $C_4=-1$. Multiplying by $\varepsilon_2=-1$ recovers the true direction of $x_2$ as $s_2=C_4\varepsilon_2=+1$. The two encodings follow identical peeling processes. Their sums of squared counts differ substantially. The single-key diagonal terms sum to $kd=12$ in both encodings. In the shared cells (cells 2, 3, 5, and 6), the standard IBLT accumulates contributions with the same sign. Each shared cell has count $2$, all cross terms are positive, and $\sum_vC_v^2=20$. In the random-sign IBLT, contributions along solid lines ($\varepsilon_i=+1$) cancel those along dashed lines ($\varepsilon_i=-1$) in the shared cells. Each shared cell in this example contains exactly one positive and one negative key. All cross terms are negative, the counts cancel to $0$, and $\sum_vC_v^2=4$. The next two propositions generalize the decoding equivalence and sign independence illustrated in Figure~\ref{fig:structure}.

\begin{figure}[!htbp]
\centering\includegraphics[width=.92\linewidth]{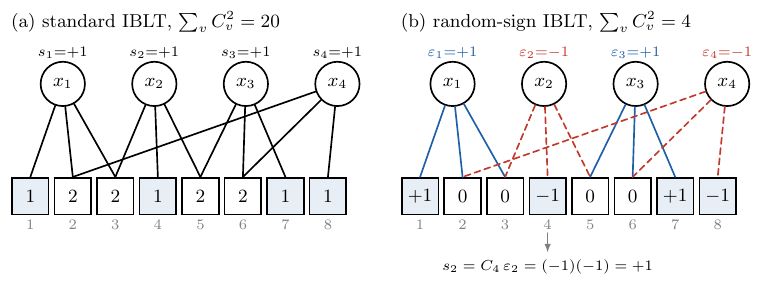}
\caption{The standard IBLT and the random-sign IBLT under the same mapping ($k=3$, $M=8$, one-sided difference $d=4$). The shaded cells have count $\pm1$; in (b), solid and dashed lines correspond to $\varepsilon_i=+1$ and $\varepsilon_i=-1$, respectively.}
\label{fig:structure}
\end{figure}

Random signs allow a degree-three cell with coding coefficients $(+1,+1,-1)$ to have count $+1$. This cannot occur in the standard IBLT under a one-sided difference. The candidate key read from such a cell fails the checksum and cell-membership checks, so peeling skips the cell until it contains a single key. Sign cancellation thus preserves correct key deletion and the stopping condition. Proposition~\ref{prop:decode} formalizes this property.

\begin{proposition}[Ideal decoding equivalence]\label{prop:decode}
Assume that neither the pure-cell test nor the termination check makes a false decision (the checksum error event $E$ does not occur) and that field arithmetic does not overflow. At every intermediate state, the random-sign IBLT can remove exactly the keys removable by unsigned degree-one peeling. Under the same cell hash, the random-sign and standard IBLTs therefore leave the same 2-core. The failure event $F$ depends only on the unsigned mapping $\mathcal G$. The decoding threshold, failure probability, and recovered result are identical to those of the standard IBLT.
\end{proposition}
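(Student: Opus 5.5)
We will prove the statement by induction on the number of peeling steps, maintaining a single invariant about the state of the difference table. Let $R\subseteq[d]$ be the set of differing keys not yet recovered. The invariant is \begin{equation*} C_v=\sum_{i\in R:\,v\in S_i}s_i\varepsilon_i,\qquad \mathrm{keySum}_v=\sum_{i\in R:\,v\in S_i}s_i\varepsilon_i\,x_i,\qquad \mathrm{checksum}_v=\sum_{i\in R:\,v\in S_i}s_i\varepsilon_i\,\mathrm{chk}(x_i) \end{equation*} for every cell $v$. This is exactly \eqref{eq:fields} restricted to the unrecovered keys. Initially $R=[d]$, and the invariant holds by the encoding rules, because shared keys cancel in all three fields.

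\textbf{Inductive step.} Suppose the decoder accepts cell $v$. Because the event $E$ does not occur, acceptance happens exactly when $v$ has unsigned degree one in the residual hypergraph $\{S_i:i\in R\}$. Say $v$ contains only the key $x_i$.
\begin{itemize}
\item The count is $C_v=s_i\varepsilon_i\in\{\pm1\}$.
\item Since $(s_i\varepsilon_i)^2=1$ and there is no overflow in arithmetic modulo $2^{64}$, line 12 returns $C_v\,\mathrm{keySum}_v=x_i$.
\item The checksum and membership tests hold, and line 14 outputs the direction $C_v\varepsilon(x_i)=s_i$.
\item Line 15 subtracts $s_i\varepsilon_i$ times the key's unit contribution from each of the $k$ cells in $S_i$. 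This holds in both cases: \textsc{Delete} when $s_i=+1$ and \textsc{Insert} when $s_i=-1$.
\end{itemize}
Hence the invariant holds again with $R\setminus\{i\}$.

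\textbf{Rejected cells.} If a cell has count $\pm1$ but is not unsigned degree one, for example coefficients $(+1,+1,-1)$, then the absence of $E$ means it is rejected. No state change occurs.

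\textbf{Maximality of the candidate queue.} We must also check that every residual degree-one vertex eventually enters the queue. Such a cell has count $\pm1$, so it is enqueued either initially or at the moment its count becomes $\pm1$ after a removal. When it is dequeued, its count is still $\pm1$, because it is still degree one or has just been emptied. If its key was already peeled through another cell, the count is $0$ and the cell is skipped.

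\textbf{Main obstacle.} The delicate case is a residual degree-one vertex whose count was already $\pm1$ at an earlier time, so it is not re-enqueued, but which was popped while it still contained several keys. I would handle this by tracking the count at each change: a cell's degree drops only when some key is removed from it. If that removal leaves the cell with degree one, its count becomes $\pm1$ at that moment, and line 16 enqueues it. Making this airtight requires reading line 16 as ``cells whose count is $\pm1$ after the update,'' and I would state that explicitly.

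\textbf{Conclusion.} The two facts together give the claim. First, the decoder removes only keys that are removable by unsigned degree-one peeling. Second, it stops only when no residual degree-one vertex remains. Therefore the terminal $R$ is the edge set of the 2-core of $\mathcal G$, which is independent of peeling order, as recalled in Section~\ref{sec:model}.

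The standard IBLT under the same cell hash satisfies the same invariant with $\varepsilon_i\equiv1$, so it reaches the same 2-core. The termination check on line 17 is correct when $E$ does not occur, since by the invariant all fields are zero exactly when $R=\emptyset$. Hence $F=\{\text{2-core nonempty}\}$ is a function of $\mathcal G$ alone. The failure probability and threshold are therefore those of the standard IBLT, and on success both tables output the same keys with their correct directions $s_i$.
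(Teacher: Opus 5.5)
Your proposal is correct and is essentially the paper's proof: the same three-field invariant over the unrecovered keys, induction on peeling steps with the no-$E$ hypothesis excluding acceptance of any non-degree-one cell, and order-independence of peeling to conclude that both encodings stop at the 2-core of $\mathcal G$. Your candidate-queue check makes explicit what the paper leaves implicit, and the obstacle you flag does not arise under either reading of line 16: a cell's count has the parity of its unsigned degree, so it changes from an even value to $\pm1$ exactly when the degree drops from two to one. The one small slip is that the invariant only gives $R=\emptyset\Rightarrow$ all fields zero; the converse at termination comes from the no-$E$ hypothesis, not from the invariant.
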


\begin{proof}
We induct on the number of peeling steps. By the field invariant, every degree-one cell has count $\pm1$ and passes the check. Ideal purity excludes cells of any other degree, so the two processes can follow the same cell-selection order. The full proof is in Appendix~\ref{app:proof-decode}.
\end{proof}

In an implementation, the finite 64-bit checksum gives a nonzero false-acceptance probability. Under an ideal hash, each check of a non-pure cell falsely accepts with probability at most $2^{-64}$. Section~\ref{sec:protocol} accounts for this term separately in the risk upper bound for the two-round protocol. Section~\ref{sec:exp-decode} verifies the implemented decoding equivalence in five hundred million paired trials with real IBLTs. Table~\ref{tab:decode-paired} in the appendix summarizes success-failure agreement and recovery checks for each configuration. Section~\ref{sec:exp-cost} reports decoding time.

Random signs retain their distribution after conditioning on failure. Failure changes the overlap distribution of the mapping, so conditioning usually changes the distribution of random variables associated with the mapping. This selection causes the standard IBLT's bias to depend on sign composition. By Proposition~\ref{prop:decode}, failure depends only on the mapping. The signs come from a sign hash independent of the cell hash and remain iid uniform after we fix the mapping. Proposition~\ref{prop:composition} states this property precisely.

\begin{proposition}[Signs are independent of the mapping]\label{prop:composition}
For any fixed direction vector $(s_i)$, the coding coefficients $(s_i\varepsilon_i)$ are iid uniform signs independent of the mapping $\mathcal G$. Thus, failure $F$ is independent of the coding coefficients. Given $F$, or even the entire $\mathcal G$, the coding coefficients remain iid uniform signs. Statistics constructed from $(C_v)$, $\mathcal G$, and $F$ have the same joint distribution for all sign compositions $(d_+,d_-)$.
\end{proposition}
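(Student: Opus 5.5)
The plan is to work directly from the modeling assumption stated after \eqref{eq:fields}: the signs $(\varepsilon_i)$ are iid uniform on $\{-1,+1\}$ and independent of the mapping $\mathcal G$. The direction vector $(s_i)$ is fixed and deterministic. The first step is to show that the product vector $(s_i\varepsilon_i)$ is again iid uniform. For each $i$, the map $\varepsilon_i\mapsto s_i\varepsilon_i$ is a bijection of $\{-1,+1\}$, so $s_i\varepsilon_i$ is uniform. Because the map acts coordinatewise on an independent vector, the coordinates stay independent. Also, $(s_i\varepsilon_i)$ is a deterministic function of $(\varepsilon_i)$, which is independent of $\mathcal G$. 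Hence, for every sign pattern $\eta\in\{\pm1\}^d$ and every measurable set $A$ of mappings,
\[
\Pr\bigl((s_i\varepsilon_i)=\eta,\ \mathcal G\in A\bigr)=2^{-d}\Pr(\mathcal G\in A).
\]
This single identity gives independence from $\mathcal G$. It also shows that, given $\mathcal G$, the coefficients are iid uniform.

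The second step handles $F$. By Proposition~\ref{prop:decode}, under ideal purity $F$ is a measurable function of the unsigned mapping $\mathcal G$; it is the event that the 2-core is nonempty. Conditioning on $F$ is therefore conditioning on an event in $\sigma(\mathcal G)$. Whenever $\Pr(F)>0$, the displayed identity gives $\Pr((s_i\varepsilon_i)=\eta\mid F)=2^{-d}$, together with full independence of the coefficients from $F$. The same reasoning applies with $F$ replaced by $\mathcal G$ itself, which covers the ``even the entire $\mathcal G$'' clause.

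The third step proves invariance across sign compositions. By \eqref{eq:fields}, each $C_v$ is a fixed function of $\mathcal G$ and the coefficient vector $(s_i\varepsilon_i)$. Any statistic built from $(C_v)$, $\mathcal G$ and $F$ is therefore a measurable function $\Phi(\mathcal G,(s_i\varepsilon_i))$. The joint law of $(\mathcal G,(s_i\varepsilon_i))$ equals the law of $\mathcal G$ times the uniform product measure on $\{\pm1\}^d$, and this product does not involve $(s_i)$. So the law of $\Phi$ is the same for all $(d_+,d_-)$.

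I do not expect a real obstacle; the argument is a change of variables. The one point needing care is that $F$ must be a function of $\mathcal G$ alone. In the implemented decoder, $F$ could in principle depend on the signs through checksum false acceptances. I would therefore state explicitly that the argument works on the ideal-purity event, invoking Proposition~\ref{prop:decode}. The event $E$ is handled separately, as the paper indicates for Section~\ref{sec:protocol}.
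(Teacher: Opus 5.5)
Your proof is correct and follows the paper's own argument essentially step for step. Like the paper, you use the coordinatewise bijection $(\varepsilon_i)\mapsto(s_i\varepsilon_i)$, which preserves the uniform product law and independence from $\mathcal G$; then Proposition~\ref{prop:decode} to make $F$ a function of $\mathcal G$; then \eqref{eq:fields} to write every statistic as a function of $(\mathcal G,(s_i\varepsilon_i))$. Your explicit product-measure identity, and your remark that the argument holds on the ideal-purity event with $E$ handled separately, only spell out what the paper states more tersely.
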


\begin{proof}
For fixed $(s_i)$, the map $(\varepsilon_i)\mapsto(s_i\varepsilon_i)$ is a bijection on $\{\pm1\}^d$ that preserves the uniform distribution. Since $(\varepsilon_i)$ is independent of $\mathcal G$, $(s_i\varepsilon_i)$ is an iid uniform sign sequence independent of $\mathcal G$. The joint distribution of $(s_i\varepsilon_i,\mathcal G)$ therefore does not depend on $(s_i)$. By Proposition~\ref{prop:decode}, $F$ is a function of $\mathcal G$, so conditioning on $F$ or $\mathcal G$ preserves the distribution of the coding coefficients. By \eqref{eq:fields}, $(C_v)$ is determined by $(s_i\varepsilon_i)$ and $\mathcal G$. Statistics constructed from $(C_v)$, $\mathcal G$, and $F$ thus have a joint distribution independent of $(s_i)$ and identical for all $(d_+,d_-)$.
\end{proof}

Section~\ref{sec:exp-e2e} and Table~\ref{tab:vldb} in the appendix illustrate this property numerically in the protocol. For a fixed configuration and calibration factor, random-sign encoding gives similar conditional miss rates under one-sided and balanced differences. Figure~\ref{fig:encoding} compares the post-failure distributions of the standard and random-sign encodings.

Propositions~\ref{prop:decode} and \ref{prop:composition} underpin our analysis. Decoding success depends only on $\mathcal G$, and the estimator's randomness comes from the independent signs. We can therefore analyze the post-failure distribution by first fixing the mapping and then studying the quadratic form in iid signs on the overlap graph. The results hold for all sign compositions. The joint distribution of the coding coefficients $s_i\varepsilon_i$ is independent of the direction vector $(s_i)$, so we can treat $s_i\varepsilon_i$ and $\varepsilon_i$ identically in the analysis. We write both as $\varepsilon_i$ below.

\subsection{\texorpdfstring{The two-round protocol flow and the lower bound $|b|$}{The two-round protocol flow and the lower bound |b|}}\label{sec:flow}
In the two-round protocol, the receiver subtracts the tables cell by cell and runs peeling on the difference table. For the standard IBLT, the count sum $\frac1k\sum_vC_v=b=d_+-d_-$ directly gives the deterministic lower bound $d\ge|b|$. For the random-sign IBLT, $\frac1k\sum_vC_v=\sum_i\varepsilon_i$ depends only on the random signs and no longer encodes the set-size difference $b$. The sender therefore attaches the set size $|A|$ (8 bytes) to the first-round table, allowing the receiver to compute $b=|A|-|B|$ directly.

If first-round peeling succeeds, the protocol ends immediately. After a first-round failure, the receiver computes an upper bound $U$ from the raw counts $(C_v)$ of the difference table. The computation uses only the two scalars $\sum_vC_v^2$ and $\sum_vC_v$, which the receiver accumulates while subtracting cells one by one, and peeling can modify the difference table in place. It incorporates the known lower bound $|b|$ by taking $U_b=\max(U,|b|)$ (for a one-sided difference, $|b|=d$). The second-round table has size $m(U_b)=\lceil U_b/(0.8\alpha_c)\rceil$. We re-encode and decode using a fresh cell hash independent of the first round. Whenever $U_b\ge d$, the second-round load is at most $0.8\alpha_c$, leaving a margin below the peeling threshold. The second round therefore succeeds with high probability. The analysis applies to any constant target load below $\alpha_c$; we use 0.8. Sections \ref{sec:measurement} and \ref{sec:inversion} analyze the post-failure distribution of the sum of squared counts and give the rule for computing $U$.

\section{Unbiased estimation after failure and the Gaussian upper bound}\label{sec:measurement}
We use the independence of signs and mapping established in Section~\ref{sec:design} to analyze $\widehat d$ after decoding failure and construct an upper bound for $d$. Cancellation between random signs of different keys makes $\widehat d$ exactly unbiased for every fixed mapping (Proposition~\ref{prop:moments}). It therefore remains unbiased even after arbitrarily rare failures. Controlling the conditional miss rate at the target risk $u$ also requires the distribution of post-failure fluctuations of $\widehat d$ around $d$. When $d\gg\sqrt M$, the standardized estimation error converges to the standard Gaussian distribution (Theorem~\ref{thm:clt}), yielding the Gaussian rule (Definition~\ref{def:gauss}).

\subsection{Conditional unbiasedness and conditional variance given the mapping}
We expand the sum of squares and the total sum in~\eqref{eq:dhat} by key. Each differing key occupies $k$ cells and satisfies $\varepsilon_i^2=1$, so the diagonal contributions of the two terms sum to exactly $d$. The estimation error consists of the cross terms between different keys:
\begin{equation}
\widehat d-d=\frac{2}{k(1-k/M)}\sum_{i<j}\varepsilon_i\varepsilon_j\Bigl(H_{ij}-\frac{k^2}{M}\Bigr).\label{eq:rs-expand}
\end{equation}
The right-hand side is a quadratic form in the random signs. Each coefficient $H_{ij}-k^2/M$ measures the overlap relative to its unconditional mean and depends entirely on the mapping $\mathcal G$. For the standard IBLT, the same expansion replaces the sign products with fixed values $s_is_j$. Selection of overlaps by failure then translates directly into bias (Proposition~\ref{prop:projection}). With random signs, each $\varepsilon_i\varepsilon_j$ takes $\pm1$ with equal probability even on a fixed mapping where failure occurs (Proposition~\ref{prop:composition}). The cross terms therefore have mean zero on every mapping. Intuitively, failure selects the mapping and hence the cross-term coefficients, while each sign product remains equally likely to be positive or negative. The error remains centered at zero, while its conditional variance can change. Proposition~\ref{prop:moments} expresses this intuition through the conditional mean and variance given the mapping.

\begin{proposition}[Conditional unbiasedness and conditional moments given the mapping]\label{prop:moments}
For every fixed mapping $\mathcal G$,
\begin{equation}
\E[\widehat d\mid\mathcal G]=d,\qquad
\Var(\widehat d\mid\mathcal G)=\frac{4}{k^2(1-k/M)^2}\sum_{i<j}\Bigl(H_{ij}-\frac{k^2}{M}\Bigr)^2\le2d(d-1).\label{eq:condmoments}
\end{equation}
Thus, for any mapping event $\mathcal S$ of positive probability, $\E[\widehat d\mid\mathcal S]=d$. In particular, $\E[\widehat d\mid F]=d$ holds for every finite $M$ and every sign composition, even for arbitrarily rare failures. Averaging over the random mapping $\mathcal G$ gives $\E\widehat d=d$ and $\Var\widehat d=s(d)^2$.
\end{proposition}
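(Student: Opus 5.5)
The plan is to condition on the mapping $\mathcal G$ throughout and reduce everything to the quadratic form \eqref{eq:rs-expand} in the coding coefficients. By Proposition~\ref{prop:composition}, these coefficients are iid uniform signs $\varepsilon_i$ given $\mathcal G$.

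\textbf{Step 1: derive \eqref{eq:rs-expand}.} Using \eqref{eq:fields}, expand
\[
\sum_vC_v^2=\sum_{i,j}\varepsilon_i\varepsilon_jH_{ij}=kd+2\sum_{i<j}\varepsilon_i\varepsilon_jH_{ij},
\]
since $H_{ii}=k$ and $\varepsilon_i^2=1$. Likewise $\sum_vC_v=k\sum_i\varepsilon_i$, so
\[
\frac1M\Bigl(\sum_vC_v\Bigr)^2=\frac{k^2d}{M}+\frac{2k^2}{M}\sum_{i<j}\varepsilon_i\varepsilon_j.
\]
Subtracting, the diagonal part equals $kd(1-k/M)$. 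Dividing by $k(1-k/M)$ gives $\widehat d-d$ as in \eqref{eq:rs-expand}. The coefficients $a_{ij}=H_{ij}-k^2/M$ are $\mathcal G$-measurable.

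\textbf{Step 2: conditional moments.} Given $\mathcal G$, we have $\E[\varepsilon_i\varepsilon_j]=0$ for $i\ne j$, which yields $\E[\widehat d\mid\mathcal G]=d$. For the variance, the products $\varepsilon_i\varepsilon_j$ over distinct unordered pairs are orthonormal: $\E[\varepsilon_i\varepsilon_j\varepsilon_{i'}\varepsilon_{j'}]=1$ if $\{i,j\}=\{i',j'\}$, and $0$ otherwise, because some index then appears an odd number of times. Hence
\[
\Var(\widehat d\mid\mathcal G)=\frac{4}{k^2(1-k/M)^2}\sum_{i<j}a_{ij}^2.
\]

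\textbf{Step 3: the deterministic bound.} I need $|a_{ij}|\le k(1-k/M)$ for every pair. Since $S_i,S_j$ are $k$-subsets of $[M]$, we have $\max(0,2k-M)\le H_{ij}\le k$. The upper endpoint gives deviation exactly $k(M-k)/M$. At the lower endpoint there are two cases.
\begin{itemize}
\item If $M\ge2k$, the lower endpoint is $0$, giving deviation $k^2/M$. This is at most $k(M-k)/M$ because $k\le M-k$.
\item If $M<2k$, the lower endpoint is $2k-M$, giving deviation $(M-k)^2/M$. This is at most $k(M-k)/M$ because $M-k\le k$.
\end{itemize}
Summing the per-pair bound over the $\binom d2$ pairs gives $\Var(\widehat d\mid\mathcal G)\le2d(d-1)$. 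This case check is the only slightly delicate point; the rest is routine.

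\textbf{Step 4: conditioning on events and averaging.} For a mapping event $\mathcal S$ of positive probability (in particular $F$, which is $\mathcal G$-measurable by Proposition~\ref{prop:decode}), the tower property gives $\E[\widehat d\mid\mathcal S]=\E[\E[\widehat d\mid\mathcal G]\mid\mathcal S]=d$. Taking $\mathcal S$ to be the whole space gives $\E\widehat d=d$.

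For the unconditional variance, apply the law of total variance. The conditional mean is the constant $d$, so $\Var\widehat d=\E\Var(\widehat d\mid\mathcal G)$. For a fixed pair $i<j$, $H_{ij}$ is hypergeometric with mean $k^2/M$. Its variance is
\[
k\cdot\frac kM\Bigl(1-\frac kM\Bigr)\frac{M-k}{M-1}=\frac{k^2(1-k/M)^2}{M-1},
\]
so $\E a_{ij}^2=k^2(1-k/M)^2/(M-1)$. Summing over the $\binom d2$ pairs by linearity, with no independence between pairs needed, gives
\[
\Var\widehat d=\frac{4}{k^2(1-k/M)^2}\binom d2\frac{k^2(1-k/M)^2}{M-1}=\frac{2d(d-1)}{M-1}=s(d)^2.
\]
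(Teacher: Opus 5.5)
Your proof is correct and follows the paper's argument step for step. It uses the same expansion of $\sum_vC_v^2$ and $(\sum_vC_v)^2/M$, the orthogonality of the sign products $\varepsilon_i\varepsilon_j$ over distinct unordered pairs, the tower property for $\mathcal G$-measurable events, and the law of total variance with the hypergeometric variance $k^2(1-k/M)^2/(M-1)$.

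The only local difference is how you bound $|H_{ij}-k^2/M|\le k(1-k/M)$. You verify it by a case check on the range $\max(0,2k-M)\le H_{ij}\le k$. The paper instead writes $H_{ij}-k^2/M=\langle w_i,w_j\rangle$ for the centered incidence vectors $w_i$, with $\|w_i\|^2=k(1-k/M)$, and applies Cauchy--Schwarz. That route needs no cases and also exhibits $\widehat d=\|\sum_i\varepsilon_iw_i\|^2/(k(1-k/M))\ge0$.
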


\begin{proof}[Proof sketch]
Given $\mathcal G$, $(\varepsilon_i)$ are iid uniform signs. Taking expectations term by term in~\eqref{eq:rs-expand} and using $\E[\varepsilon_i\varepsilon_j\mid\mathcal G]=0$ proves conditional unbiasedness. The sign products for distinct unordered pairs $(i,j)$ are pairwise uncorrelated, yielding the conditional variance in~\eqref{eq:condmoments}. The Cauchy--Schwarz inequality gives the upper bound $2d(d-1)$. The full proof is in Appendix~\ref{app:proof-moments}.
\end{proof}

For a fixed mapping, $\widehat d$ is a Rademacher quadratic form with the Gram matrix of the centered incidence vectors. Its conditional mean equals the trace, and its conditional variance comes only from the off-diagonal terms. These are the same identities used by the Hutchinson trace estimator\cite{hutchinson1990trace}. Proposition~\ref{prop:moments} expresses them in terms of overlap coefficients, establishing conditional unbiasedness for any mapping event.

Figure~\ref{fig:composition} illustrates this unbiasedness numerically. The post-failure bias of the standard encoding changes direction with $k$, while the mean for the random-sign encoding stays near zero. Section~\ref{sec:exp-decode} and Appendix~\ref{app:dist} compare conditional moments across additional configurations.

\subsection{\texorpdfstring{The Gaussian limit after failure when $d\gg\sqrt M$}{The Gaussian limit after failure when d >> sqrt(M)}}
Proposition~\ref{prop:moments} establishes the center of the estimator after failure. Constructing an upper bound also requires the shape of the error distribution around this center. Failure affects $\widehat d$ only through the mapping. By~\eqref{eq:condmoments}, the sum of squared overlaps $\sum_{i<j}(H_{ij}-k^2/M)^2$ determines the conditional error scale. Failure selects mappings with different values of this sum and may therefore change the error distribution. We standardize the error by the unconditional standard deviation $s(d)$:
\begin{equation}
X=\frac{\widehat d-d}{s(d)},\label{eq:X}
\end{equation}
and study the conditional distribution of $X$ given $F$.

We first fix a mapping. By~\eqref{eq:rs-expand}, the error is a weighted sum of sign products $\varepsilon_i\varepsilon_j$. The main contributions come from key pairs sharing a cell: the intersecting edge pairs with $H_{ij}\ge1$ in $\mathcal G$. Their expected number is about $k^2d^2/(2M)$. Sign products for distinct key pairs are pairwise uncorrelated, and no small set of terms dominates the sum. The number of intersecting edge pairs grows with $d^2/M$ and tends to infinity when $d\gg\sqrt M$. The error then sums many $\pm1$ perturbations, and its distribution approaches a Gaussian. When $d$ is of the same order as $\sqrt M$, only finitely many intersecting edge pairs contribute. The error consists of a few $\pm1$ terms, and its distribution remains discrete. We distinguish these regimes by the scale parameter
\begin{equation}
c=\frac{d}{\sqrt M}.\label{eq:c}
\end{equation}
The expected number of intersecting edge pairs is about $k^2c^2/2$, so $c$ plays the role of the collision scale in the birthday paradox\cite{arratia1989chen,dw}. It and the load $d/M$ are independent parameters.

Conditioning on failure preserves the sign structure. On any mapping where failure occurs, $(\varepsilon_i)$ remain iid uniform signs (Proposition~\ref{prop:composition}). The Gaussian shape persists as long as the mapping still contains many intersecting edge pairs. This also holds for rare failures. Under a subcritical load, a typical failure is caused by a duplicate edge pair. This raises a single $H_{ij}$ to $k$ and leaves the overall structure of intersecting edge pairs almost unchanged.

\begin{theorem}[Gaussian limit after failure]\label{thm:clt}
Fix $k\ge3$ and let $d/\sqrt M\to\infty$, allowing any limiting behavior of $d/M$. Then
\begin{equation}
\sup_x\bigl|\Pr(X\le x\mid F)-\Phi(x)\bigr|\longrightarrow0.\label{eq:clt}
\end{equation}
This includes the critical window, rare failures under a fixed subcritical load, and arbitrary overload.
\end{theorem}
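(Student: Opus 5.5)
The plan is to prove the theorem by conditioning on the mapping and treating $X$ as a Rademacher quadratic (degree-two) chaos. By Proposition~\ref{prop:composition}, given $\mathcal G$ the signs $(\varepsilon_i)$ remain iid uniform. By \eqref{eq:rs-expand},
\[
X=\sum_{i<j}a_{ij}\varepsilon_i\varepsilon_j,\qquad a_{ij}=\frac{2(H_{ij}-k^2/M)}{k(1-k/M)s(d)}.
\]
Since $F$ is a function of $\mathcal G$ (Proposition~\ref{prop:decode}), it suffices to find mapping events $\mathcal T_M$ with $\Pr(\mathcal T_M\mid F)\to1$ on which $\sup_x|\Pr(X\le x\mid\mathcal G)-\Phi(x)|\to0$ uniformly. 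The bound \eqref{eq:clt} then follows by averaging over $\mathcal G$ given $F$.

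\textbf{Step 1: conditional CLT for a fixed mapping.} For a fixed matrix $(a_{ij})$ with zero diagonal, I would use the de Jong / fourth-moment theorem for homogeneous Rademacher chaos of degree two. If the variance $\sigma^2(\mathcal G)=\sum_{i<j}a_{ij}^2$ tends to $1$ and the maximal influence $\max_i\sum_j a_{ij}^2\to0$, then $\E[X^4\mid\mathcal G]-3\to0$. Equivalently, $\operatorname{tr}(A^4)\to0$ and the influence is small. This gives Gaussian convergence in Kolmogorov distance, with an explicit Berry--Esseen rate via the Nourdin--Peccati--Reinert invariance bounds.

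So the job is to show the following, with conditional probability tending to $1$ given $F$:
\begin{enumerate}
\item[(a)] $\sigma^2(\mathcal G)\to1$;
\item[(b)] $\max_i\sum_j a_{ij}^2\to0$;
\item[(c)] $\sum_{i,j,l,m}a_{ij}a_{jl}a_{lm}a_{mi}\to0$.
\end{enumerate}
Unconditionally these are moment computations on the random hypergraph. $\E\sigma^2=1$ by Proposition~\ref{prop:moments}. The fluctuation of $\sum_{i<j}(H_{ij}-k^2/M)^2$ is of relative order $(d^2/M)^{-1/2}+d^{-1/2}$, which vanishes because $d/\sqrt M\to\infty$. The influence of key $i$ is about $\sum_jH_{ij}^2\approx k^2d/M$, against a total of order $d^2/M$, so it is $O(1/d)$ up to logarithmic factors (via a union bound and Poisson/binomial tails for $\sum_jH_{ij}$). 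The $4$-cycle sum is controlled by counting overlap $4$-cycles, whose expectation is of order $d^4/M^4\cdot M^2$ relative to $(d^2/M)^2$. One also has to handle the centering $-k^2/M$ terms when $d/M$ is large; there the centered matrix is a Gram matrix of centered incidence vectors with spectrum of order $d/M$, so I would bound $\operatorname{tr}(A^4)\le\|A\|_{\mathrm{op}}^2\operatorname{tr}(A^2)$ directly.

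\textbf{Step 2: transfer to conditioning on $F$.} This is the main obstacle, and I would split into regimes.

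\emph{Failure probability bounded below.} If $\Pr(F)\ge\delta$ (the critical window and overload), then $\Pr(\mathcal T_M^c\mid F)\le\Pr(\mathcal T_M^c)/\Pr(F)\to0$, so nothing more is needed. Along subsequences this covers every case in which $\Pr(F)$ does not vanish.

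\emph{Rare failures, $\Pr(F)\to0$.} This forces subcritical load, where failure is dominated by small cores. For $d=o(M)$, Proposition~\ref{prop:plant} gives a total-variation reduction to the planted distribution. Under planting, the mapping is an unconditioned random mapping plus one duplicate pair. That pair changes a single $a_{ij}$ to order $k/s(d)\to0$ and alters the influence of two keys by $O(1/s(d)^2)$. Hence (a)--(c) persist under the planted law.

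For a fixed subcritical load $d/M=\alpha<\alpha_c$ with $\Pr(F)\to0$, I would use the known structure that the failure-conditioned $2$-core is $O(1)$ edges with high probability. I would then argue that conditioning on a core supported on a bounded set of $O(1)$ edges is close in total variation to planting that small structure into an otherwise unconditioned mapping, extending Lemma~\ref{lem:plant}. The planted core perturbs only $O(1)$ coefficients, each $O(1/s(d))$, so conditions (a)--(c) are again unaffected. Making this small-core planting approximation rigorous uniformly over loads up to $\alpha_c$ is the step I expect to require the most care. A fallback is to show directly that $\Pr(\mathcal T_M^c)=o(M^{-(k-2)})$, smaller than $\Pr(F)$, via high-moment concentration of $\sigma^2$ and the influences.
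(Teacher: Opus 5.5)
Your plan is workable, but it differs from the paper's proof mainly in Step~1.

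\textbf{Step 1, compared with the paper.} You condition on the mapping in every regime and apply a de Jong/fourth-moment criterion to the Rademacher chaos $\sum_{i<j}a_{ij}\varepsilon_i\varepsilon_j$. This gives one mechanism throughout and, in principle, an explicit Kolmogorov rate via Nourdin--Peccati--Reinert. The price is that you must control the variance, the influences and $\operatorname{tr}(A^4)$ of the random overlap matrix in every regime, including $d/M\to\infty$, where you need an operator-norm estimate.

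The paper conditions on the mapping only when $d/M$ stays in a compact subset of $(0,\infty)$. There it uses the cumulant method on a dependency graph of degree $O(\log M)$ over good graphs.

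For $d=o(M)$ and $d>M$, the paper reverses the roles of the two sources of randomness. It treats the signs as deterministic coefficients and proves an edge-exposure martingale CLT over the mapping, with the Lyapunov condition from Rosenthal's inequality. This CLT holds uniformly in the coefficients and also under the planted law $Q$; the paper then averages over the signs. This avoids random-matrix bounds in the dense regime, where $F$ is certain because full peeling needs $d\le M$. It also avoids degree control when $d=o(M)$. Your treatment of $d=o(M)$ by total-variation reduction to $Q$ matches the paper in substance.

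\textbf{Correction 1: the step you call the main obstacle.} That step is unnecessary, and its premise is not justified. $\Pr(F)\to0$ does not by itself make failures come from $O(1)$-edge cores. Consider a subsequence with $d/M\to\alpha_c$ from below, outside the $M^{-1/2}$ window. It has $\Pr(F)\to0$, yet it falls under neither of your sub-cases ($d=o(M)$, or fixed $\alpha<\alpha_c$). Nothing you cite rules out large near-critical cores there.

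Your fallback is exactly the paper's argument, and it handles every subsequence with $d/M\to\alpha\in(0,\infty)$ at once:
\begin{itemize}
\item $\Pr(F)\ge\Pr(S_1=S_2)=\binom{M}{k}^{-1}$.
\item Mappings violating the good-graph conditions have superpolynomially small probability; the paper gets this from degree truncation plus a typical bounded-differences inequality.
\item On good graphs, condition (c) follows from $\|A\|_{\mathrm{op}}\le\max_i\sum_j|a_{ij}|=O(\log M/\sqrt M)$.
\end{itemize}

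\textbf{Correction 2: statements in Step 1 that need repair.} The conclusions survive, but several claims are wrong as written.
\begin{itemize}
\item Variance tending to one with vanishing influences does not imply $\E[X^4\mid\mathcal G]\to3$. A counterexample is $n^{-1}\sum_{i,j\le n}\varepsilon_i\varepsilon'_j$, which tends to a product of two independent Gaussians. Your checklist (a)--(c) is still the right one, and (c) already implies (b), since $\operatorname{tr}(A^4)\ge\sum_i\bigl(\sum_ja_{ij}^2\bigr)^2$.
\item Your $4$-cycle count compares $d^4/M^2$ with $(d^2/M)^2$, a ratio of order one, so it proves nothing. Genuine $4$-cycles are dominated by four edges through a single cell and contribute $O(d^4/M^3)$ in expectation.
\item For $d=o(M)$ the maximal influence is of order $M\max_i\ell_i/d^2$, not $O(1/d)$ up to logarithms. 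This is far larger than $1/d$ when $d$ is near $\sqrt M$, though still $o(1)$ with high probability.
\end{itemize}
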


\begin{proof}[Proof sketch]
The proof has two steps. First, we fix the mapping and take randomness only over the signs. Suppose the number of intersecting edge pairs is of the same order as $M$, the ratio of the sum of squared overlaps $\Omega_2=\sum_{i<j}H_{ij}^2$ to its exact mean tends to one, and each key participates in overlaps of size at most $O(\log M)$. The higher cumulants of the standardized quadratic form~\eqref{eq:rs-expand} then tend to zero, and the conditional distribution of $X$ converges to $\mathcal N(0,1)$ uniformly over such mappings. Second, we show that, conditional on $F$, the mapping satisfies these conditions with probability tending to $1$. We consider three cases according to the limit of $d/M$. When $d/M$ has a positive finite limit, $\Pr(F)\ge\binom Mk^{-1}$. The probability that a mapping violates the conditions is superpolynomially small and remains negligible after conditioning. When $d=o(M)$, Proposition~\ref{prop:plant} shows that the mapping distribution given failure is asymptotically identical to the planted distribution in TV distance. The Gaussian limit holds for the planted distribution. When $d>M$, failure occurs with certainty, so the conditional and unconditional distributions coincide. The full proof is in Appendix~\ref{app:clt}.
\end{proof}

Figure~\ref{fig:encoding} compares the distributions of $X$ for failure samples and all samples in a configuration with $c\approx104$. For the standard encoding, we compute $X$ in the same way from its counts $C_v=\sum_{i:v\in S_i}s_i$. With random-sign encoding, the two distributions almost coincide. With standard encoding, the failure distribution is shifted as a whole.

\subsection{The Gaussian rule and its conditional miss rate}\label{sec:gauss}
By Theorem~\ref{thm:clt}, $\widehat d$ is approximately Gaussian after failure, with mean $d$ and standard deviation $s(d)$. If $d$ were known, we could set the margin using the fact that $\widehat d$ falls below $d+s(d)\Phi^{-1}(u)$ with probability about $u$. However, the scale $s(d)$ (\eqref{eq:s}) depends on the unknown $d$. Direct substitution of $s(\widehat d)$ would make the margin fluctuate with the observation. A low $\widehat d$ gives a small $s(\widehat d)$, reducing the margin precisely for samples that need a larger one. The risk analysis would then need to account for this additional error.

Test inversion is a classical method for constructing confidence sets\cite{neyman1937}. We reverse the reasoning and test candidate values individually. For each candidate $n$, we assume the true value is $n$. The scale $s(n)$ is then known, and $\widehat d$ falls below $n+s(n)\Phi^{-1}(u)$ with probability about $u$. An observation below this lower bound indicates that $n$ is too large, so we exclude $n$. The largest remaining candidate gives the upper bound $U$. The risk is controlled by a single event: $U<d$ implies that the true value $d$ is excluded. Under the true value $d$, this exclusion has probability about $u$. Definition~\ref{def:gauss} follows this construction, replacing $\Phi^{-1}(u)$ with a more accurate quantile for finite $M$.

\begin{definition}[Gaussian rule]\label{def:gauss}
Given an observation $\widehat d$ and a target risk $u\in(0,1/2)$, we test each candidate integer $n\ge2$ as the true value, for which the scale $s(n)$ is known. Let $\chi^2_{M-1,u}$ denote the lower $u$ quantile of the chi-square distribution with $M-1$ degrees of freedom. Set the acceptance lower bound to
\begin{equation}
t(n)=n+s(n)\,\frac{\chi^2_{M-1,u}-(M-1)}{\sqrt{2(M-1)}}.\label{eq:gauss}
\end{equation}
We accept candidate $n$ if $\widehat d\ge t(n)$ and reject it otherwise. We set $U$ to the largest accepted candidate, returning $+\infty$ if none is accepted.
\end{definition}

For fixed $M,k$ as $d\to\infty$, $\widehat d/d$ converges in distribution to $\chi^2(M-1)/(M-1)$\cite{wu2026selfsizing}, which has mean 1. After centering the chi-square variable at its mean and scaling it to unit variance, its lower $u$ quantile is $(\chi^2_{M-1,u}-(M-1))/\sqrt{2(M-1)}$. Equation~\eqref{eq:gauss} uses this quantile in place of $\Phi^{-1}(u)$ to capture skewness more accurately for finite $M$. The two quantiles agree as $M\to\infty$. In the experimental range $M\ge64$, $u\ge0.01$, $t(n)$ is strictly increasing for $n\ge2$ (Appendix~\ref{app:gauss-mono}), so we can find $U$ by monotone binary search. Outside this range, we check candidates individually as specified in the definition. The rule depends only on the observation $\widehat d$ and the known parameters $(M,k,u)$.

\begin{corollary}[Conditional miss rate of the Gaussian rule]\label{cor:gauss}
Under the conditions of Theorem~\ref{thm:clt}, for any fixed $u\in(0,1/2)$, $\limsup\Pr(U<d\mid F)\le u$.
\end{corollary}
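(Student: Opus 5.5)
The proof reduces the miss event to a single event at the true value $d$, in the test-inversion spirit described before Definition~\ref{def:gauss}. If $U<d$, then either no candidate is accepted, so $U=+\infty$ and this is not a miss, or the largest accepted candidate is below $d$. In the latter case $d$ itself must be rejected, since otherwise $U\ge d$. For $d\ge2$ the true value is one of the candidates, so
\[
\{U<d\}\subseteq\{\widehat d<t(d)\}=\Bigl\{X<q_M\Bigr\},\qquad q_M=\frac{\chi^2_{M-1,u}-(M-1)}{\sqrt{2(M-1)}},
\]
where we used $s(d)>0$ and the standardization \eqref{eq:X}. This inclusion is deterministic and needs no monotonicity of $t(n)$. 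It therefore holds outside the experimental range as well, because the definition checks candidates individually. Hence $\Pr(U<d\mid F)\le\Pr(X<q_M\mid F)$.

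Next we control the quantile $q_M$. By the central limit theorem for $\chi^2(M-1)$, its standardization converges in distribution to $\mathcal N(0,1)$, whose distribution function is continuous and strictly increasing. The lower $u$ quantiles therefore converge: $q_M\to\Phi^{-1}(u)$ as $M\to\infty$. Since $d/\sqrt M\to\infty$ forces $M\to\infty$ along the sequence considered in Theorem~\ref{thm:clt}, or we may take $M\to\infty$ as part of the asymptotic regime, this limit applies.

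Finally we combine this with Theorem~\ref{thm:clt}. Write
\[
\Pr(X<q_M\mid F)\le\Pr(X\le q_M\mid F)\le\Phi(q_M)+\sup_x\bigl|\Pr(X\le x\mid F)-\Phi(x)\bigr|.
\]
The supremum tends to $0$ by \eqref{eq:clt}, and $\Phi(q_M)\to u$ by continuity of $\Phi$. Taking $\limsup$ gives $\limsup\Pr(U<d\mid F)\le u$. The uniformity in $x$ in \eqref{eq:clt} is what allows evaluation at the moving point $q_M$.

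The main delicate point is the regime bookkeeping. The Gaussian limit is uniform in $x$ but requires $d/\sqrt M\to\infty$, and we must make sure $q_M$ is evaluated under the same sequence. One also has to confirm that $u<1/2$ keeps $q_M$ finite and well defined, which is immediate. No further property of $U$ beyond the inclusion above is needed, so the corollary is essentially a direct consequence of Theorem~\ref{thm:clt}.
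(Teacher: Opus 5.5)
Your proof is correct and follows the paper's argument: the deterministic inclusion $\{U<d\}\subseteq\{\widehat d<t(d)\}=\{X<(t(d)-d)/s(d)\}$, convergence of the standardized chi-square quantile to $\Phi^{-1}(u)$, and the uniform convergence in \eqref{eq:clt} evaluated at that moving point. One small correction: $d/\sqrt M\to\infty$ does not by itself force $M\to\infty$. For fixed $M$ and $d\to\infty$, $X$ converges to a standardized $\chi^2(M-1)$ variable rather than a Gaussian, so your fallback reading, which treats $M\to\infty$ as part of the asymptotic regime, is the one the argument needs.
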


\begin{proof}[Proof sketch]
We take $n=d$. Theorem~\ref{thm:clt} and $(t(d)-d)/s(d)\to\Phi^{-1}(u)$ give a rejection probability of $u+o(1)$ for the true candidate. The conclusion follows from $\{U<d\}\subseteq\{\widehat d<t(d)\}$. The full proof is in Appendix~\ref{app:clt}.
\end{proof}

By Corollary~\ref{cor:gauss}, the Gaussian rule has conditional miss rate asymptotically at most $u$ throughout the range $d\gg\sqrt M$ of Theorem~\ref{thm:clt}. This range includes the critical window and the overload region. Figure~\ref{fig:coverage} shows that the Gaussian rule's miss/u approaches the target level as $c$ grows. When $d$ is of the same order as $\sqrt M$ or smaller, the Gaussian approximation no longer holds, and the actual conditional miss rate exceeds $u$. Section~\ref{sec:inversion} explains this behavior and gives the Skellam rule, whose conditional miss rate is asymptotically at most $u$ uniformly over all $d\ge2$.

\section{The Skellam rule for all loads}\label{sec:inversion}
The Gaussian rule's conditional miss rate guarantee requires $d\gg\sqrt M$. If we size the IBLT table for the expected load, $d$ can be far below $\sqrt M$. Failure is then rare, but the two-round protocol must still use the post-failure observation to size the second-round table. In this range, the conditional miss rate of the Gaussian rule can be several times the target $u$ (Table~\ref{tab:gauss-c} in the appendix).

The reason lies in the shape of the post-failure error at low load. Failure then has essentially one source: two differing keys select exactly the same $k$ cells. The duplicate edge pair itself forms a 2-core. Even after the remaining keys have been peeled, these two keys remain. Under random signs, the two keys are equally likely to have the same or opposite signs. If we place each key in the table alone, it leaves a count of $\pm1$ in each of its $k$ cells. Together, their squared counts sum to $2k$. Dividing by $k$ gives their exact contribution of $2$. With opposite signs, the counts in all $k$ shared cells cancel to $0$. The entire sum of squares of $2k$ disappears, leaving the estimate $2$ below the true value. With equal signs, each shared cell has count $\pm2$ and squared count $4$. The sum of squares becomes $4k$, leaving the estimate $2$ above the true value. This jump is equally likely to be upward or downward, and its size is independent of $M$. The remaining keys also occasionally share a cell, almost always exactly one. Each such overlap shifts the estimate by $2/k$, again with either direction equally likely. The number of such key pairs is approximately Poisson\cite{arratia1989chen} and remains finite when $c$ is bounded. The post-failure error thus consists of one large jump of size $2$ and several small jumps of size $2/k$.

The Gaussian approximation cannot capture these two types of jumps. When $d$ is small relative to $\sqrt M$, the standard deviation $s(d)$ of the Gaussian rule tends to zero. The downward jump of size $2$ remains unchanged and lies arbitrarily many standard deviations into the Gaussian tail. The probability of rejecting the true value therefore tends to one half. When $c$ is of order one, only finitely many small jumps occur. The error still has a lattice distribution, with a visible departure from the Gaussian in the lower tail corresponding to the target risk (Appendix~\ref{app:limit}). We construct a null distribution that retains both types of jumps and approaches the Gaussian used by the Gaussian rule as $c$ grows (Figure~\ref{fig:aux}). The resulting Skellam rule controls the conditional miss rate for all $d\ge2$. Appendices~\ref{app:inversion} and~\ref{app:search} give the full proof and numerical implementation.

\subsection{The lattice-preserving test statistic \texorpdfstring{$\widetilde d$}{d~}}\label{sec:raw}
To keep these jumps exactly on lattice points, we need a discrete test statistic. The correction term $(\sum_vC_v)^2/M$ in $\widehat d$ shifts its values off the lattice. We therefore use the normalized sum of squares without subtracting the mean term,
\begin{equation}
\widetilde d=\frac1k\sum_{v=1}^M C_v^2.\label{eq:raw-energy}
\end{equation}
Expanding the sum of squares gives $d$ from the diagonal terms, leaving only overlap contributions to the error:
\begin{equation}
\widetilde d-d=\frac2k\sum_{i<j}H_{ij}\,\varepsilon_i\varepsilon_j.\label{eq:raw-expand}
\end{equation}
Every term on the right is an integer multiple of $2/k$, so $\widetilde d-d$ takes only integer multiples of $2/k$. The expansion shows both types of jumps directly. A duplicate edge pair has $H_{ij}=k$ and contributes exactly $2\varepsilon_i\varepsilon_j$, or $\pm2$ with equal probability. An edge pair sharing a single cell has $H_{ij}=1$ and contributes $\pm2/k$.

Without the correction term, $\widetilde d$ remains exactly unbiased for every mapping.

\begin{proposition}[Conditional unbiasedness of $\widetilde d$]\label{prop:raw-unbiased}
For every fixed mapping $\mathcal G$, $\E[\widetilde d\mid\mathcal G]=d$. Thus, for any mapping event $\mathcal S$ with positive probability, $\E[\widetilde d\mid\mathcal S]=d$. In particular, $\E[\widetilde d\mid F]=d$ holds for all finite $M$ and all sign compositions, however rare the failure.
\end{proposition}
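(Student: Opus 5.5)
The proof follows the same pattern as Proposition~\ref{prop:moments}, using the simpler expansion \eqref{eq:raw-expand} in place of \eqref{eq:rs-expand}. The first step is to verify \eqref{eq:raw-expand} from the field formulas \eqref{eq:fields}. Write $C_v=\sum_i \varepsilon_i \mathbf 1[v\in S_i]$, where, as agreed after Proposition~\ref{prop:composition}, $\varepsilon_i$ stands for the coding coefficient $s_i\varepsilon_i$. Squaring and summing over $v$ gives
\[
\sum_v C_v^2=\sum_i \varepsilon_i^2 |S_i| + 2\sum_{i<j}\varepsilon_i\varepsilon_j |S_i\cap S_j| = kd + 2\sum_{i<j} H_{ij}\varepsilon_i\varepsilon_j .
\]
This uses $\varepsilon_i^2=1$ and $|S_i|=k$, since the cell hash selects $k$ distinct cells. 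Dividing by $k$ yields \eqref{eq:raw-expand} exactly, for every mapping.

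Next, condition on $\mathcal G$. The coefficients $H_{ij}$ are then fixed. By Proposition~\ref{prop:composition}, the coding coefficients are iid uniform signs independent of $\mathcal G$, for any direction vector $(s_i)$. Hence $\E[\varepsilon_i\varepsilon_j\mid\mathcal G]=\E[\varepsilon_i]\E[\varepsilon_j]=0$ for $i\ne j$. The sum is finite, so linearity gives $\E[\widetilde d\mid\mathcal G]=d$ with no integrability issue. This holds for every finite $M$ and every mapping, including those with duplicate edge pairs.

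For a mapping event $\mathcal S$ with $\Pr(\mathcal S)>0$, apply the tower property:
\[
\E[\widetilde d\mid\mathcal S]=\E\bigl[\E[\widetilde d\mid\mathcal G]\mid\mathcal S\bigr]=d ,
\]
which is valid because $\mathcal S$ is $\mathcal G$-measurable. Equivalently, $\E[\widetilde d\mid\mathcal S]$ is a finite average of $\E[\widetilde d\mid\mathcal G=g]$ over the mappings $g\in\mathcal S$, weighted by their conditional probabilities. By Proposition~\ref{prop:decode}, $F$ is a function of $\mathcal G$ under ideal purity, so taking $\mathcal S=F$ gives $\E[\widetilde d\mid F]=d$ whenever $\Pr(F)>0$. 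Proposition~\ref{prop:composition} says the joint law of $(C_v)$ and $\mathcal G$ does not depend on $(d_+,d_-)$, so the conclusion holds for all sign compositions.

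There is no real obstacle. The only point needing care is that $F$ must be measurable with respect to the unsigned mapping alone, which is exactly what Proposition~\ref{prop:decode} supplies. Otherwise, conditioning on $F$ could bias the signs.
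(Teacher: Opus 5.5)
Your proposal is correct and follows essentially the same route as the paper: derive the expansion \eqref{eq:raw-expand} from $H_{ii}=k$ and $\varepsilon_i^2=1$, then use Proposition~\ref{prop:composition} to get $\E[\varepsilon_i\varepsilon_j\mid\mathcal G]=0$ for $i\ne j$ and take expectations term by term. Your explicit tower-property step and your remark that $F$ is $\mathcal G$-measurable (Proposition~\ref{prop:decode}) spell out what the paper leaves implicit.
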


\begin{proof}
Given $\mathcal G$, $H_{ij}$ is constant. By Proposition~\ref{prop:composition}, the signs remain iid and uniform, so $\E[\varepsilon_i\varepsilon_j\mid\mathcal G]=0$ for $i\ne j$. Taking expectations term by term in~\eqref{eq:raw-expand} gives the result.
\end{proof}

This property also follows from the random signs. In the standard IBLT, the cross terms have fixed sign products $s_is_j$. The same expansion gives the unconditional mean of the sum of squares as $d+\frac kM(b^2-d)$, which depends on the unknown $b$. The correction term in $\widehat d$ removes this dependence. With random signs, the cross terms have mean zero for every mapping. The sum of squares is therefore unbiased without the correction term.

By \eqref{eq:raw-decomp} in Appendix~\ref{app:ams}, $\widetilde d$ is a weighted sum of $\widehat d$ and $(\frac1k\sum_vC_v)^2$, with weights $1-k/M$ and $k/M$, respectively. The latter depends only on the random signs and is the classical AMS second-moment estimator\cite{alon1999spacecomplexity}. After standardization by $s(d)$, the difference tends to zero, so Theorem~\ref{thm:clt} also holds for $\widetilde d$ (Appendix~\ref{app:clt}).

\subsection{Constructing the null distribution from the two jumps}\label{sec:aux}
For each candidate $n$, we construct a reference distribution for $\widetilde d$ after failure when the true value is $n$. This is the null distribution, which we use in Section~\ref{sec:skellam} to test each candidate. We construct it by counting the occurrences of the two types of jumps.

The large jump occurs exactly once and contributes $2\varepsilon_1\varepsilon_2$. Its direction is determined by the sign product of the duplicate edge pair.

The small jumps come from the $n-2$ keys outside the duplicate edge pair. They form $\binom{n-2}2$ pairs. Each pair shares a cell with probability approximately $k^2/M$, producing a small jump. Upward and downward jumps are equally likely, so each has expected count
\begin{equation}
\lambda_n=\frac12\binom{n-2}2\frac{k^2}M=\frac{k^2(n-2)(n-3)}{4M}.\label{eq:lambda}
\end{equation}
Key pairs sharing a cell are sparse. By Poisson splitting, the upward and downward counts are approximately independent $\Pois(\lambda_n)$ counts. The upward count minus the downward count gives the net number of steps and follows a Skellam distribution, denoted $Z_n\sim\Skellam(\lambda_n,\lambda_n)$. Each step has size $2/k$.

Adding both types of jumps to the candidate $n$ gives the null distribution.
\begin{definition}[null distribution]\label{def:null}
For $\lambda\ge0$, let
\begin{equation}
J_\lambda=2\varepsilon_1\varepsilon_2+\frac2kZ,\qquad Z\sim\Skellam(\lambda,\lambda),\label{eq:J}
\end{equation}
where $\varepsilon_1\varepsilon_2$ takes the values $\pm1$ with equal probability and is independent of $Z$. We call $2\varepsilon_1\varepsilon_2$ the \textit{$\pm2$ term} and $\frac2kZ$ the \textit{Skellam term}. For a candidate integer $n\ge2$, we take $\lambda_n$ from~\eqref{eq:lambda}. The null distribution for candidate $n$ is the distribution of
\begin{equation}
n+J_{\lambda_n}=n+2\varepsilon_1\varepsilon_2+\frac2kZ_n\label{eq:aux}
\end{equation}
with CDF denoted by $G_n$.
\end{definition}
In Section~\ref{sec:uniform}, we prove that the post-failure distribution of $\widetilde d$ approaches $G_d$ uniformly over all $d\ge2$ (Theorems~\ref{thm:cross} and~\ref{thm:uniform}).

When $n=2,3$, fewer than two keys remain, so $\lambda_n=0$ and the null distribution assigns probability one half to each of $n\pm2$. In general, its mean is $n$. Since $\Var Z_n=2\lambda_n$, its variance is $4+2(n-2)(n-3)/M$.

\begin{figure}[!htbp]
\centering\includegraphics[width=\linewidth]{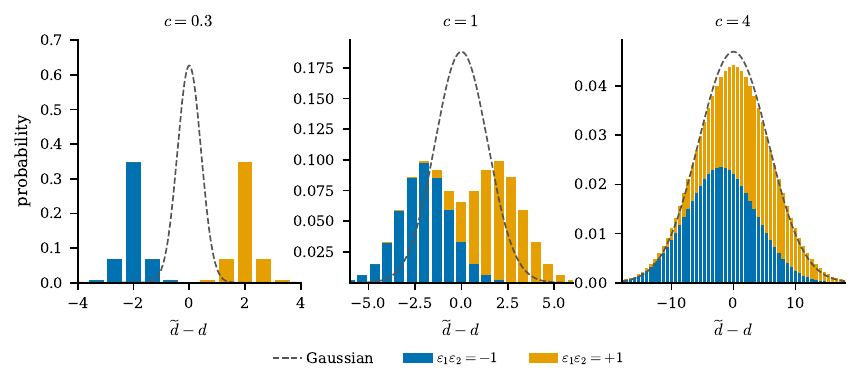}
\caption{Null distributions at $c=d/\sqrt M=0.3,1,4$, computed from the formula in the limit $M\to\infty$ ($k=3$, $\lambda_d=k^2c^2/4$). Blue and orange correspond to opposite and equal signs of the duplicate edge pair. The dashed line is the Gaussian used by the Gaussian rule.}
\label{fig:aux}
\end{figure}

The shape of this family explains why it applies at all loads. Figure~\ref{fig:aux} shows how the null distribution varies with $c$. When $c$ is small, the Skellam term is almost zero and the distribution is concentrated at $\pm2$. The Gaussian used by the Gaussian rule concentrates between these points, which lie beyond its tails. When $c\approx1$, the opposite-sign and equal-sign cases each spread into a lattice distribution with spacing $2/k$. These distributions overlap, but their combined shape still differs visibly from the Gaussian. As $c$ grows further, the width of the Skellam term far exceeds $2$, and the $\pm2$ term becomes a small contribution. The null distribution then approaches the Gaussian used by the Gaussian rule. This family thus connects the small-$c$ region to the region where the Gaussian rule applies. Section~\ref{sec:uniform} gives the rigorous approximation guarantee.

\subsection{The null distribution approximates the post-failure distribution at all loads}\label{sec:uniform}
Definition~\ref{def:null} constructs the null distribution from a counting argument. To use it to test candidates, we establish that it describes the post-failure distribution of $\widetilde d$ for all $d$. The receiver does not know the load region containing $d$ and cannot use the load to select a rule. The same family $G_d$ must therefore apply for both small and large $d$. We analyze these two cases and then combine them.

When $d$ is of the same order as $\sqrt M$ or smaller, overlaps are sparse. The model underlying $G_d$ consists of a duplicate edge pair and scattered single-cell overlaps among the remaining keys. The actual mapping may have additional structure: the duplicate edge pair may share a cell with a remaining key, two keys may share more than one cell, or one key may share cells with two other keys. Each case is much rarer than a single-cell overlap, and its probability tends to zero as $M$ grows. After excluding these cases, the actual mapping has exactly the structure underlying $G_d$. The actual distribution is therefore close to $G_d$ at every lattice point (Theorem~\ref{thm:cross}), as illustrated by the first two shapes in Figure~\ref{fig:aux}.

When $d\gg\sqrt M$, intersecting edge pairs are numerous and interwoven, so the sparse structure above no longer applies. The null distribution $G_d$ still applies because both distributions approach the same Gaussian. The statistic $\widetilde d$ is a sum of many small terms, each with equally likely signs, and approaches a Gaussian by Theorem~\ref{thm:clt}. The null distribution $G_d$ also approaches a Gaussian, as the third shape in Figure~\ref{fig:aux} shows, with width matching $s(d)$. The two distributions differ in detail but share the same Gaussian limit. Theorem~\ref{thm:uniform} combines these cases to give a uniform approximation over all $d\ge2$.

\begin{theorem}[Discrete limit distribution when $c$ is bounded]\label{thm:cross}
Fix $k\ge3$ and a finite constant $C$.
\begin{enumerate}[label=(\roman*)]
\item For $2\le d\le C\sqrt M$, the total variation distance between the conditional distribution of $\widetilde d$ given $F$ and the null distribution $G_d$ tends to zero uniformly and is at most
\begin{equation}
O_k\Bigl(\frac dM+\frac{d^3}{M^2}\Bigr).\label{eq:tv-error}
\end{equation}
\item If $d/\sqrt M\to c\in(0,\infty)$, the conditional distribution of $\widetilde d-d$ given $F$ converges in total variation to the distribution of $J_{k^2c^2/4}$, whose variance is $4+2c^2$.
\item The limiting distributions in (ii) satisfy
\begin{equation*}
J_{k^2c^2/4}\Rightarrow2\varepsilon_1\varepsilon_2\quad(c\to0),\qquad\frac{J_{k^2c^2/4}}{\sqrt2\,c}\Rightarrow\mathcal N(0,1)\quad(c\to\infty).
\end{equation*}
\end{enumerate}
\end{theorem}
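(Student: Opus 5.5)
The plan is to reduce conditioning on $F$ to the planted distribution, then couple the planted mapping with the idealized structure behind $G_d$ on a high-probability good event. Since $d\le C\sqrt M=o(M)$, Proposition~\ref{prop:plant} (with the quantitative bound of Lemma~\ref{lem:plant}) gives a TV distance between $\mathcal G\mid F$ and the planted distribution. We will check that this bound is $O_k(d/M)$ up to terms already in \eqref{eq:tv-error}. By Proposition~\ref{prop:composition}, the signs remain iid uniform given $\mathcal G$. Hence the TV distance between the laws of $\widetilde d$ is bounded by the TV distance between the mapping laws, and it suffices to work under the planted law.

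Under the planted law, relabel the duplicate pair as keys $1,2$ with common subset $S$. Let the good event $\mathcal E$ be the intersection of the following conditions:
\begin{itemize}
\item no remaining key meets $S$, which costs $O(dk^2/M)$;
\item no pair among the remaining keys has $H_{ij}\ge2$, which costs $O(d^2/M^2)$;
\item no remaining key lies in two intersecting pairs, i.e.\ there are no overlap paths of length two, which costs $O(d^3/M^2)$;
\item there is no further duplicate, which is smaller still.
\end{itemize}
On $\mathcal E$, \eqref{eq:raw-expand} gives
\[
\widetilde d-d=2\varepsilon_1\varepsilon_2+\tfrac2k\sum_{(i,j)\in P}\varepsilon_i\varepsilon_j ,
\]
where $P$ is a matching of pairs among keys $3,\dots,d$. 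Conditional on $\mathcal G\in\mathcal E$, the products $\varepsilon_i\varepsilon_j$ over the matching are iid uniform signs, independent of $\varepsilon_1\varepsilon_2$. Therefore $\widetilde d-d$ is $2\varepsilon_1\varepsilon_2+\frac2k(N_+-N_-)$, where $N_\pm$ is a fair binomial split of $N=|P|$. It remains to compare the law of $N$ with $\Pois(2\lambda_d)$. By Poisson splitting, a $\Pois(2\lambda_d)$ count thinned by fair coins gives exactly $\Skellam(\lambda_d,\lambda_d)$. The number of intersecting pairs among the $d-2$ keys is a sum of $\binom{d-2}2$ weakly dependent indicators with $p=\Pr(H_{ij}\ge1)=k^2/M+O(1/M^2)$. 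Chen--Stein with the dissociated neighbourhood (pairs sharing a key) gives a TV distance of order $\binom{d-2}2p^2+d^3p^2\lesssim d^3/M^2$. Replacing $p$ by $k^2/M$ shifts the mean by $O(d^2/M^2)$. Summing the error terms yields \eqref{eq:tv-error}, which is $O(C^3/\sqrt M)\to0$ uniformly, proving (i).

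The main obstacle is bookkeeping: the Chen--Stein estimate must be uniform in $d$ (including $d=2,3$, where the process is empty), and conditioning on $\mathcal E$ must not distort the Poisson approximation. To handle the latter, I would apply Chen--Stein to the unconditioned count, bound the TV effect of removing $\mathcal E^c$ by $\Pr(\mathcal E^c)$, and argue that on $\mathcal E$ the count equals $|P|$.

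For (ii), $\lambda_d=k^2d(d-1)\ldots/(4M)\to k^2c^2/4$, and Skellam laws are TV-continuous in $\lambda$; the error in (i) vanishes. The variance is $4+\frac4{k^2}\cdot2\cdot\frac{k^2c^2}{4}=4+2c^2$. For (iii), as $c\to0$ we have $\lambda\to0$, so $Z\to0$ in probability. As $c\to\infty$, the normalized $\Skellam(\lambda,\lambda)/\sqrt{2\lambda}\Rightarrow\mathcal N(0,1)$ by the CLT for Poisson differences (via characteristic functions). Since $\frac2k\sqrt{2\lambda}=\sqrt2c$ and the $\pm2$ term is $o(c)$, Slutsky's lemma gives the stated limit.
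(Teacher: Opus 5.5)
Your proposal is correct and follows essentially the same route as the paper. Both reduce to the planted distribution via Lemma~\ref{lem:plant}, exclude the same exceptional events (overlap with the duplicated edge, double overlaps, a non-matching intersection graph), apply the Arratia--Goldstein--Gordon Chen--Stein bound to the unconditioned overlap count with neighbourhoods given by shared keys, and split the independent matching sign marks by Poisson thinning into $\Skellam(\lambda_d,\lambda_d)$ through the same count-and-kernel coupling; the only slip is your expression for $\lambda_d$ in (ii), which should be $k^2(d-2)(d-3)/(4M)$, though the limit $k^2c^2/4$ is unaffected.
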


Figure~\ref{fig:crosscdf} compares the lower tail of this discrete limit distribution with the Gaussian approximation, using planted samples with $c$ ranging from 0.25 to 8. Section~\ref{sec:exp-dist} also reports the distance between the empirical and limit distributions at fixed $c$ as $M$ grows.

\begin{proof}[Proof sketch]
We prove (i) in two steps. First, we replace the actual failure with a duplicate edge pair. By Proposition~\ref{prop:plant}, we can replace the mapping conditional on failure with the planted distribution, incurring error $O_k(d/M^2+M^{1-k})$ in the range of the theorem (Lemma~\ref{lem:plant}). Second, we exclude the three cases above from the planted mapping. Their probabilities are $O_k(d/M)$, $O_k(d^2/M^2)$, and $O_k(d^3/M^2)$, respectively. The remaining structure consists of a duplicate edge pair and mutually disjoint single-cell overlaps. The duplicate edge pair gives the $\pm2$ term. By the Chen--Stein local dependence bound\cite[Theorem~1]{arratia1989chen}, the overlap count is approximately $\Pois(2\lambda_d)$. Independent signs split this count into two independent $\Pois(\lambda_d)$ counts. Their difference gives the Skellam term, so $\widetilde d-d$ is close to $J_{\lambda_d}$. The Poisson approximation has error $O_k((d^2+d^3)/M^2)$. Since $2\le d\le M$ and $k\ge3$, each of $d/M^2$, $d^2/M^2$, and $M^{1-k}$ is at most $d/M$. Also, $(d^2+d^3)/M^2$ is at most $2d^3/M^2$. Combining these bounds gives~\eqref{eq:tv-error}.
(ii) By (i), it suffices to show that $J_{\lambda_d}$ converges in total variation to $J_{k^2c^2/4}$. When $d/\sqrt M\to c$, we have $\lambda_d\to k^2c^2/4$. The required convergence follows from continuity in the Poisson parameter. Independence of the two terms gives $\Var J_\lambda=4+8\lambda/k^2$. Substituting $\lambda=k^2c^2/4$ gives $4+2c^2$.
(iii) As $c\to0$, $\lambda\to0$ and the Skellam term tends to zero in probability, leaving only the $\pm2$ term of $J_\lambda$. As $c\to\infty$, $\Var J_{k^2c^2/4}\sim2c^2$ and the Skellam term satisfies the CLT. The $\pm2$ term divided by $\sqrt2\,c$ tends to zero. Appendix~\ref{app:proof-cross} gives the full argument.
\end{proof}

\begin{theorem}[Uniform CDF approximation]\label{thm:uniform}
Fix $k\ge3$. The Kolmogorov distance between the conditional distribution of $\widetilde d$ given $F$ and the null distribution $G_d$ satisfies
\begin{equation}
\sup_{d\ge2}\ \sup_{x\in\mathbb R}\bigl|\Pr(\widetilde d\le x\mid F)-G_d(x)\bigr|\longrightarrow0
\quad(M\to\infty).\label{eq:uniform}
\end{equation}
\end{theorem}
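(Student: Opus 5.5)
The plan is a two-regime argument with a diagonal (subsequence) step to make it uniform in $d$. Define
\[
\Delta_M(d)=\sup_x\bigl|\Pr(\widetilde d\le x\mid F)-G_d(x)\bigr|.
\]
It suffices to show that for every sequence $d=d_M\ge2$ we have $\Delta_M(d_M)\to0$. If uniformity failed, some $\eta>0$ and some choice $d_M$ along a subsequence would have $\Delta_M(d_M)\ge\eta$, and this would contradict the sequential statement. Passing to a further subsequence, we may assume that $d_M/\sqrt M$ converges in $[0,\infty]$. Two cases then remain: $d_M/\sqrt M$ bounded, and $d_M/\sqrt M\to\infty$.

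\textbf{Bounded case.} If $d_M\le C\sqrt M$ for some finite $C$, Theorem~\ref{thm:cross}(i) gives total variation distance at most $O_k(d/M+d^3/M^2)$. This is $O_k(C/\sqrt M+C^3/\sqrt M)\to0$. Kolmogorov distance is bounded by total variation distance, so $\Delta_M(d_M)\to0$. This case needs nothing new.

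\textbf{Large case.} If $d_M/\sqrt M\to\infty$, I would compare both distributions to the same Gaussian $\mathcal N(d,s(d)^2)$ via the triangle inequality. The Kolmogorov distance is invariant under the common affine map $x\mapsto(x-d)/s(d)$.

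For the post-failure side, Theorem~\ref{thm:clt} gives the Gaussian limit for $X=(\widehat d-d)/s(d)$, and the paper states that it transfers to $\widetilde d$ (Appendix~\ref{app:clt}). Concretely, $\widetilde d=(1-k/M)\widehat d+(k/M)(\frac1k\sum_vC_v)^2$, and the second piece is $O_p(kd/M)$, which is $o_p(s(d))$ when $d\gg\sqrt M$. The first piece changes the scale by a factor $1+O(1/M)$. Since $\Phi$ is continuous, these $o_p(1)$ perturbations of the standardized variable preserve uniform convergence of the CDF (Slutsky plus Pólya). One point needs care: $(k/M)\E(\sum\varepsilon_i)^2/k^2=d/(kM)\cdot k$ is of order $d/M$, which is also $o(s(d))$ since $s(d)\asymp d/\sqrt M$.

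For the null side, $G_d$ is the law of $d+2\varepsilon_1\varepsilon_2+\frac2kZ_d$ with $\Var(\frac2kZ_d)=2(d-2)(d-3)/M\sim s(d)^2$ and $\lambda_d\to\infty$. A Skellam variable is a sum of i.i.d.\ centered bounded pieces (split the Poisson parameter), so Berry--Esseen gives Kolmogorov distance $O(\lambda_d^{-1/2})$ to the Gaussian after standardization. The $\pm2$ term is $o(s(d))$, and the variance ratio tends to $1$. Together these give convergence of $G_d$, standardized by $s(d)$, to $\Phi$ uniformly in $x$.

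\textbf{Main obstacle.} I expect the difficulty to lie in the large case, where Theorem~\ref{thm:clt} must apply to arbitrary sequences with $d/\sqrt M\to\infty$, including $d/M$ oscillating across the critical window or exceeding $M$. Two points need checking. First, Theorem~\ref{thm:clt} is stated for sequences with "any limiting behavior of $d/M$", so I would pass to a further subsequence on which $d/M$ converges in $[0,\infty]$; this makes it applicable. Second, the $\widetilde d$ versus $\widehat d$ transfer must be uniform conditional on $F$. This is where I would use Proposition~\ref{prop:composition}: given $\mathcal G$, $\sum_i\varepsilon_i$ has variance exactly $d$ regardless of $F$. Chebyshev conditional on $F$ then bounds the correction without any control of the mapping. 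With both cases established, the subsequence argument yields \eqref{eq:uniform}.
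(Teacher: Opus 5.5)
Your proposal is correct and follows the paper's proof essentially step for step. Both use the subsequence reduction on $d/\sqrt M$, both invoke Theorem~\ref{thm:cross} for bounded $c$, and for $c=\infty$ both compare the conditional law of $\widetilde d$ (via Theorem~\ref{thm:clt} plus the $\widehat d$-to-$\widetilde d$ transfer) and $d+J_{\lambda_d}$ to the same Gaussian at scale $s(d)$. Your Berry--Esseen step for the Skellam term (which needs only uniformly bounded third moments of the split Poisson pieces, not boundedness) and your Chebyshev transfer using the independence of $\sum_i\varepsilon_i$ from $F$ are only minor quantitative variants of the paper's CLT-plus-P\'olya and per-graph $L^2$ arguments.
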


\begin{proof}[Proof sketch]
When $d\gg\sqrt M$, Gaussian limits let us handle a dense intersection graph without extending the matching structure. By Theorem~\ref{thm:clt}, the conditional distribution of $\widetilde d$, standardized by $s(d)$, approaches the standard Gaussian. The null distribution is that of $d+J_{\lambda_d}$. When $\lambda_d\to\infty$, the Skellam term of $J_{\lambda_d}$ satisfies the CLT. The ratio of $\Var J_{\lambda_d}=4+2(d-2)(d-3)/M$ to $s(d)^2$ tends to one, and the $\pm2$ term divided by $s(d)$ tends to zero. Both distributions approach the same continuous Gaussian CDF, so their Kolmogorov distance tends to zero. This argument applies to $d=o(M)$, a positive load ratio, and unbounded overload. We obtain uniformity by a subsequence argument. Take a subsequence along which $d/\sqrt M$ converges. Theorem~\ref{thm:cross} handles finite limits, and the Gaussian argument above handles infinite limits (Appendix~\ref{app:proof-uniform}).
\end{proof}

The same family of null distributions $G_d$ therefore describes the post-failure distribution of $\widetilde d$ for all $d\ge2$. For small $d$, the approximation holds pointwise on the lattice. For large $d$, both distributions approach the same Gaussian. We can therefore use this family without knowing the load region. In Section~\ref{sec:skellam}, we use it to construct the Skellam rule.

\subsection{The Skellam rule controls the conditional miss rate at all loads}\label{sec:skellam}
The Skellam rule uses the test inversion of the Gaussian rule (Section~\ref{sec:gauss}), replacing the Gaussian lower tail with the lattice lower tail of $G_n$. If the observation falls in the lower $u$ tail of $G_n$, candidate $n$ is too large to be the true value, so we reject candidate $n$. We take $U$ to be the largest accepted candidate.

\begin{definition}[Skellam rule]\label{def:skellam}
Let $\widetilde d$ be the observation and $u\in(0,1/2)$ the target risk. For each candidate integer $n\ge2$, we accept $n$ if $G_n(\widetilde d)>u$ and reject it otherwise. We define the upper bound as the supremum of all accepted candidates,
\begin{equation}
U=\sup\bigl\{n\ge2:\ G_n(\widetilde d)>u\bigr\},\label{eq:skellam-U}
\end{equation}
with $U=+\infty$ when the accepted candidates are unbounded.
\end{definition}

The null distribution for candidate $2$ takes only the values $0$ and $4$. Since $\widetilde d\ge0$, we always accept candidate $2$, so $U\ge2$. For practical table sizes, $U$ is also finite (for $u=0.01$, $M>18$ suffices). We can compute it exactly by a finite search, at a cost roughly proportional to $\lambda_d$ per computation (Appendix~\ref{app:search}).

The behavior of the Skellam rule at small and large loads explains why it applies throughout. Consider first small $c$. The Skellam term is then almost zero, the error after failure comes almost entirely from the duplicate edge pair, and the observation $\widetilde d$ equals $d-2$ or $d+2$ with probability one half each. The scale $s(n)$ of the Gaussian rule tends to zero in this region, so its upper bound $U$ stays close to the observation. When the duplicate edge pair has opposite signs, the observation is $d-2$, $U$ falls below $d$, and the conditional miss rate tends to one half. The null distribution $G_n$ of the Skellam rule retains this structure and assigns half its mass to $n-2$ and half to $n+2$. We accept candidate $n$ exactly when the smaller value $n-2$ does not exceed the observation, giving $U\approx\widetilde d+2$. The rule thus adds back a potentially lost large jump. With opposite signs, $U$ equals $d$ exactly; with equal signs, $U$ is about $d+4$. Neither case produces a miss. Consider next large $c$. The Skellam term then dominates and $G_n$ approaches the Gaussian used by the Gaussian rule, so the two rules give the same upper bound, and the Gaussian rule already controls the conditional miss rate in this region (Corollary~\ref{cor:gauss}).

Theorem~\ref{thm:coverage} gives the main result of this section. As the table size tends to infinity, the probability that the Skellam rule underestimates $d$ after failure is asymptotically at most the target risk $u$, uniformly over all $d\ge2$. This guarantee holds for any sign composition. The rule uses only $\sum_vC_v^2$ from the same table and the known parameters $(M,k,u)$. The Gaussian rule provides this guarantee only when $d\gg\sqrt M$ (Corollary~\ref{cor:gauss}). In Section~\ref{sec:protocol}, we apply this guarantee to the risk decomposition of the two-round protocol.

\begin{theorem}[Uniform asymptotic upper bound on the conditional miss rate]\label{thm:coverage}
Fix $k\ge3$ and $u\in(0,1/2)$. For any sign composition, the Skellam rule satisfies
\begin{equation}
\limsup_{M\to\infty}\ \sup_{d\ge2}\ \Pr(U<d\mid F)\le u.\label{eq:coverage}
\end{equation}
\end{theorem}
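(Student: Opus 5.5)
The proof follows the test-inversion pattern used for Corollary~\ref{cor:gauss}, with the Gaussian approximation replaced by the uniform approximation of Theorem~\ref{thm:uniform}. By Proposition~\ref{prop:composition}, the conditional law of $\widetilde d$ given $F$ is the same for every sign composition, so it suffices to treat one composition. Fix $M$ and $d\ge2$. If $U<d$, then $d$ is not an accepted candidate. Since $U$ is the supremum of accepted candidates, the candidate $d$ itself must be rejected, so
\[
\{U<d\}\subseteq\{G_d(\widetilde d)\le u\}.
\]
It therefore suffices to show that
\[
\limsup_{M\to\infty}\sup_{d\ge2}\Pr\bigl(G_d(\widetilde d)\le u\mid F\bigr)\le u .
\]

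Let $q_d=\sup\{x: G_d(x)\le u\}$, the right end of the lower $u$ tail of $G_d$. Since $G_d$ is a right-continuous step CDF, $\{x: G_d(x)\le u\}$ is an interval $(-\infty,q_d)$ or $(-\infty,q_d]$. Suppose first that the set is the closed interval $(-\infty,q_d]$. Then
\[
\Pr\bigl(G_d(\widetilde d)\le u\mid F\bigr)=\Pr(\widetilde d\le q_d\mid F)\le G_d(q_d)+\delta_M,
\]
where $\delta_M=\sup_{d\ge2}\sup_x\bigl|\Pr(\widetilde d\le x\mid F)-G_d(x)\bigr|$ is the Kolmogorov distance of Theorem~\ref{thm:uniform}. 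In this case $G_d(q_d)\le u$ by construction. If instead the set is the open interval, $G_d(q_d)>u$ at the endpoint. Then we bound $\Pr(\widetilde d<q_d\mid F)$ by the left limit $G_d(q_d-)\le u$ plus the same Kolmogorov error. This works because the Kolmogorov distance also controls strict inequalities, since left limits of CDFs differ by at most the same supremum.

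In either case the bound is $u+\delta_M$ uniformly in $d$. Theorem~\ref{thm:uniform} gives $\delta_M\to0$, which yields~\eqref{eq:coverage}.

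The remaining points are bookkeeping:
\begin{itemize}
\item The case $U=+\infty$ never produces a miss.
\item The supremum over an infinite candidate set is harmless, because the argument only uses the rejection of the single candidate $d$.
\item Uniformity in $d$ is inherited directly from the uniformity in~\eqref{eq:uniform}.
\end{itemize}
The substantive difficulty is therefore entirely contained in Theorem~\ref{thm:uniform}. Its proof combines the discrete total-variation regime of Theorem~\ref{thm:cross} with the Gaussian regime of Theorem~\ref{thm:clt} through a subsequence argument, and I would take it as given. Within the present proof, the only delicate step is handling the atoms of the lattice distribution $G_d$ at the quantile boundary, which the left-limit argument above addresses.
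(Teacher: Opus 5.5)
Your proposal is correct and follows essentially the same route as the paper. Both arguments use the inclusion $\{U<d\}\subseteq\{G_d(\widetilde d)\le u\}$, the fact that the rejection region is a lower set with $G_d$-probability at most $u$, a transfer to the conditional law via the uniform Kolmogorov bound of Theorem~\ref{thm:uniform}, and Proposition~\ref{prop:composition} to cover all sign compositions. The only cosmetic difference is that the paper writes the rejection region as $\{\widetilde d\le y^*\}$, with $y^*$ the largest lattice point satisfying $G_d(y^*)\le u$. That form implicitly uses $\widetilde d\in d+\frac2k\mathbb Z$, whereas your left-limit treatment of the boundary atom does not need this lattice fact.
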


\begin{proof}[Proof sketch]
If the observation has distribution $G_n$ exactly, we reject $n$ when it falls in a lower tail with total probability at most $u$. By the definition of the CDF, the rejection probability is at most $u$, including for discrete distributions. The rejection region is a lower set, so the rejection probabilities under the actual distribution and $G_n$ differ by at most their Kolmogorov distance. A miss implies rejection of the true value: $\{U<d\}\subseteq\{G_d(\widetilde d)\le u\}$. Taking $n=d$ and applying Theorem~\ref{thm:uniform} gives~\eqref{eq:coverage}. By Proposition~\ref{prop:composition}, the joint distribution of the coding coefficients is independent of the sign composition. The result therefore holds for any sign composition. Appendix~\ref{app:proof-coverage} gives the full proof.
\end{proof}

Theorem~\ref{thm:coverage} is asymptotic in the table size. Two effects arise on finite tables. First, $G_n$ is a lattice distribution, so the rejection region contains only whole probability masses. The actual conditional miss rate can therefore be strictly below $u$. Second, the null distribution omits overlaps between the duplicate edge pair and the remaining keys. Each of the other $d-2$ keys has a cell among the $k$ cells of the pair with probability about $k^2/M$, so these overlaps occur at a total rate of about $(d-2)k^2/M$. They can push the conditional miss rate slightly above $u$ on small tables. Section~\ref{sec:exp-cover} evaluates finite-table behavior experimentally (Figures~\ref{fig:coverage} and~\ref{fig:confirm}). At $M=64$, the largest miss/u is 1.08, an excess beyond sampling error. All configurations with $M\ge256$ meet the target within sampling error. At fixed $c$, this occurrence rate is about $ck^2/\sqrt M$. The omitted overlaps become rarer as $M$ grows, and extra capacity decreases as well. When the first-round table must be small, the second-round sizing margin $1/(0.8\alpha_c)$ absorbs slight underestimation (Section~\ref{sec:protocol}). If more margin is needed, we can reduce the target load from 0.8, using a larger second-round table to lower the two-round failure rate.

\subsection{From the conditional miss rate to the failure probability of the two-round protocol}\label{sec:protocol}
We now connect the post-failure conditional miss rate to the overall success rate of two-round reconciliation. The protocol follows Section~\ref{sec:flow}. We denote first-round failure by $F_1$ and second-round failure by $F_2$. Let $E$ denote a checksum error: a non-pure cell passes the check, a key that passes the check lies outside the set difference, or all fields are zero at termination despite a nonempty core.

\begin{theorem}[Two-round risk decomposition]\label{thm:protocol}
Under the ideal hash model, let $\Pr_m(F)$ denote the probability of peeling failure for $d$ differing keys in a fresh table with $m$ cells. Then
\begin{equation}
\Pr\big((F_1\cap F_2)\cup E\big)
\le\Pr(F_1,\ U_b<d)+\Pr(F_1)\max_{m\ge m(d)}\Pr\nolimits_m(F)+\Pr(F_1,\ U_b=+\infty)+\Pr(E).\label{eq:risk}
\end{equation}
\end{theorem}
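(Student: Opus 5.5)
The plan is a union bound over the ways the protocol can end in failure, arranged so that each term of \eqref{eq:risk} covers one event. I would first isolate the checksum event: it suffices to bound $\Pr((F_1\cap F_2)\cap E^c)$ and add $\Pr(E)$. On $E^c$ the ideal purity assumption of Proposition~\ref{prop:decode} holds in both rounds, so $F_1$ and $F_2$ are exactly the events that the respective unsigned hypergraphs have nonempty 2-cores. Since $U_b\ge |b|$ is always at least $2$ or infinite, the event $F_1\cap F_2\cap E^c$ is contained in the union of three pieces:
\[
\{F_1,\ U_b<d\},\qquad \{F_1,\ U_b=+\infty\},\qquad \{F_1,\ d\le U_b<\infty,\ F_2\}.
\]
The first two are already terms of \eqref{eq:risk}.

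For the third piece, I would condition on the first-round data, that is, the first-round mapping, the signs, and hence $U_b$. The second round uses a fresh cell hash independent of the first round. Given the first round, the second-round mapping is therefore a uniform random $k$-uniform hypergraph with $d$ edges on $m(U_b)$ vertices. Consequently
\[
\Pr(F_2\mid \text{round 1})=\Pr\nolimits_{m(U_b)}(F)
\]
under $E^c$. The sign hash may be reused, but by Proposition~\ref{prop:decode} failure does not depend on the signs.

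On $\{d\le U_b<\infty\}$, monotonicity of $m(\cdot)=\lceil\cdot/(0.8\alpha_c)\rceil$ gives $m(U_b)\ge m(d)$. Hence
\[
\Pr\nolimits_{m(U_b)}(F)\le\max_{m\ge m(d)}\Pr\nolimits_m(F).
\]
I would state the bound with the max over $m$ precisely to avoid needing monotonicity of $\Pr_m(F)$ in $m$; monotonicity is plausible by a coupling but is not obvious. Integrating over $F_1\cap\{d\le U_b<\infty\}$ then yields at most $\Pr(F_1)\max_{m\ge m(d)}\Pr_m(F)$.

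The main subtlety is making the independence step rigorous when $E$ can occur in round one. Here $E$ also covers events in the first round that could affect whether the second round is even triggered, or how $U_b$ is computed. I would handle this by defining all events on the idealized process, with $U_b$ computed from the raw counts $(C_v)$ before peeling so that it does not depend on checksum outcomes. The actual protocol coincides with the idealized one outside $E$. Everything else is bookkeeping. The finiteness restriction matters only because $m(+\infty)$ is undefined, and that case is charged to its own term.
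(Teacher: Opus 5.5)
Your proof is correct and follows the paper's argument. Both use a union bound that separates $E$, then condition on the first-round data so that the fresh second-round mapping is independent of $U_b$. The conditional second-round failure probability is bounded by one on $\{U_b<d\}$ and $\{U_b=+\infty\}$, and by $\Pr_{m(U_b)}(F)\le\max_{m\ge m(d)}\Pr_m(F)$ on $\{d\le U_b<+\infty\}$.

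Your added care makes explicit what the paper's terse proof leaves implicit: you define $F_1,F_2$ on the idealized, checksum-exact process, compute $U_b$ from the raw counts before peeling, and note that the shared sign hash is irrelevant. Your remark that $U_b$ is at least $2$ or infinite is unnecessary, since the three-way split covers every value of $U_b$.
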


\begin{proof}
By the union bound, it suffices to bound $F_1\cap F_2$ and $E$ separately. The upper bound $U_b$ is a function of the first-round counts and the set sizes. The second-round mapping is independent of it. Under ideal purity, $F_2$ depends only on the second-round mapping and $d$. On $\{U_b<d\}$ and $\{U_b=+\infty\}$, we bound the conditional second-round failure probability by one. On $\{d\le U_b<+\infty\}$, we have $m(U_b)\ge m(d)$ and the conditional second-round failure probability is $\Pr_{m(U_b)}(F)\le\max_{m\ge m(d)}\Pr_m(F)$. Adding the parts gives~\eqref{eq:risk}.
\end{proof}

The first two of the four terms are the core risks. The first is a sizing error, the joint miss rate $\Pr(F_1)\Pr(U_b<d\mid F_1)$: for the Skellam rule it is asymptotically at most $u\Pr(F_1)$ uniformly (Theorem~\ref{thm:coverage}), and it is zero for a one-sided difference; the Gaussian rule has a larger conditional miss rate when $c$ is bounded, but failure itself is then rare. The second is a failure of second-round peeling with sufficient capacity; the second-round load is at most $0.8\alpha_c$, and this term decays polynomially in $d$ (Lemma~\ref{lem:propL}). The last two terms are marginal. The third is the absence of a finite upper bound: the Skellam rule never produces it when $M_1>4\log(1/u)$, and the Gaussian rule essentially only for $d=2$ or $3$. The fourth is a checksum error, of order $2(M_1+m+kd)\cdot2^{-64}$ under a 64-bit checksum. Appendix~\ref{app:risk} gives the derivation and numerical values of each term.

As an example, take $k=3$, $M_1=1024$, $d=838$ ($d/M_1\approx0.82$, corresponding to first-round load $d/(\alpha_cM_1)=1$ in Section~\ref{sec:exp}), $u=0.01$, and a balanced difference. In the protocol experiments (\exptag{PROTO}) of Section~\ref{sec:exp-e2e}, 19,187 of 26,667 trials fail in the first round, giving $\Pr(F_1)\approx0.72$. Substituting empirical frequencies into the terms of~\eqref{eq:risk}, the Skellam rule's first term, the joint miss rate, is about $6.6\times10^{-3}$ (177 misses). For the second term, we use the oracle's ($U=d$) measured final failure rate at second-round size $m(d)$, about $5.6\times10^{-4}$. The third term is zero, and the fourth is about $5\times10^{-16}$ by the formula in Appendix~\ref{app:risk}, giving a total of about $7\times10^{-3}$. The first term counts every miss as a two-round failure, whereas most misses are slight underestimates, and the second-round table sized at target load $0.8\alpha_c$ retains enough margin to succeed. At the same configuration, the measured two-round failure rates are $3.8\times10^{-4}$ for the Skellam rule, $7.5\times10^{-4}$ for the Gaussian rule, and $5.6\times10^{-4}$ for the oracle.

\section{Experimental evaluation}\label{sec:exp}
The experiments address three questions. First, does random-sign encoding remove the dependence of the post-failure estimate on sign composition while preserving decoding (Section~\ref{sec:exp-decode})? Second, what is the shape of the post-failure statistic's lower tail (Section~\ref{sec:exp-dist}), and how does this shape determine the conditional miss rate and extra capacity of the two rules at finite size (Section~\ref{sec:exp-cover})? Third, when the upper bound is used in a two-round protocol on a complete IBLT, what are the reliability and communication cost (Section~\ref{sec:exp-e2e}), what is the computational cost, and how do these results guide rule choice (Section~\ref{sec:exp-cost})?

The experiments fall into two categories. The first uses only a peeling sampler, without constructing a complete IBLT, to study the distribution of the post-failure statistic. Its main component is the statistical experiments (\exptag{STAT}), configured as in Section~\ref{sec:exp-setup}; the planted-sampling control (\exptag{PLANT}), conditional-unbiasedness test (\exptag{UNB}), and worst-configuration test (\exptag{WORST}) verify planted sampling, conditional unbiasedness, and the configurations with the highest risk, respectively. The second category uses complete IBLTs. The protocol experiments (\exptag{PROTO}) run the two-round protocol, configured as in Section~\ref{sec:exp-e2e}; the decoding-equivalence test (\exptag{DEC}) and computational-cost measurement (\exptag{COST}) verify decoding results and measure time and memory, respectively. Appendix~\ref{app:exp-overview} summarizes each experiment's purpose and scale. The main conclusions are measured on failure samples, and failure probabilities vary by orders of magnitude across configurations. Before launch, a single feasibility pilot estimates each configuration's failure probability. We use these estimates to set the trial counts needed to collect enough failures, targeting about 20,000 for the statistical and protocol experiments, and use planted sampling when direct sampling is too costly (Section~\ref{sec:exp-setup}). We then fix configurations, seeds, stopping rules, and sample sizes, with no adjustments during the formal runs. A few figures and tables use an additional single configuration to illustrate the distribution's shape, with parameters given in the caption.

\subsection{Configurations and sampling for the statistical experiments}\label{sec:exp-setup}
The statistical experiments use a sampler that independently generates $d$ uniform random hyperedges and runs peeling, without constructing a complete IBLT. On each mapping, we compute $X=(\widehat d-d)/s(d)$ for the standard encoding under one-sided and balanced differences, together with the exact integers $\sum_vC_v$ and $\sum_vC_v^2$ for the random-sign encoding. We then compute both rules' upper bounds from the same observation.

Each configuration is specified by $(k,M,d)$. We describe its position along two axes: $c=d/\sqrt M$ (\eqref{eq:c}) and the first-round load $d/(\alpha_cM)$. These axes control different effects. The parameter $c$ determines the statistical scale of the estimation error. For small $c$, the post-failure $\widehat d-d$ is dominated by the $\pm2$ offsets of a few duplicate edge pairs. For large $c$, it approaches a continuous distribution. The load determines the position relative to the peeling phase transition: subcritical below 0.95, near the critical load from 0.95 to 1.05, and overloaded above 1.05. For fixed $k,M$, the axes satisfy $d/(\alpha_cM)=c/(\alpha_c\sqrt M)$. Thus, changing $M$ is necessary to vary the load while keeping $c$ fixed and separating the two effects.

The statistical experiments use $k=3,4,5$ and $M\in\{256,1024,4096,16384\}$. For each pair $(k,M)$, $d$ ranges over the union of three sets: the 8 values corresponding to $c\in\{0.5,1,2,4,8,16,32,64\}$, the 5 values corresponding to load $\in\{0.95,1,1.05,1.5,3\}$, and the small differences $d=2,3,6$. Excluding loads above 3 leaves 189 configurations, with about 20,000 failures collected for each. Applying the same rule at $M=64$ gives 42 configurations, which we use separately to test the rules' limits of applicability. Figure~\ref{fig:exp-grid} in the appendix shows the configurations on both axes.

\paragraph{Planted sampling.}
We obtain failure samples by direct sampling or planted sampling, depending on the load. Near or above the critical load, failure is common. We generate random mappings and retain failures, obtaining samples from the true failure-conditioned distribution. At low load, the failure probability is small and direct sampling is too costly. By Proposition~\ref{prop:plant}, when $d=o(M)$, failure is almost always caused by a pair of keys with identical mappings. We can construct such samples directly: choose a random pair of difference keys to share the same set of $k$ cells, then draw the mappings and signs of the remaining keys independently. We use direct sampling for the 84 statistical configurations with high failure rates and planted sampling for the remaining 105. The figures distinguish the two sampling methods (Appendix~\ref{app:sampling-impl}).

The planted-sampling control (\exptag{PLANT}) compares the two sample types on 12 planted configurations where direct sampling remains affordable. Among directly sampled failures, the fraction caused by a duplicate edge pair ranges from 0.988 to 1.000. The KS distance for $X$ ranges from 0.0048 to 0.0079, comparable to sampling error at this sample size, and the Skellam rule's miss/u values also agree. The planted approximation deviates only near the critical load at small $M$. The planted configurations in the statistical experiments lie outside this region (Appendix~\ref{app:planted}).

\subsection{Random-sign encoding removes dependence on sign composition}\label{sec:exp-decode}
We verify two properties of random-sign encoding: the post-failure estimate is independent of sign composition, and decoding gives exactly the same result as the standard encoding.

\paragraph{Conditional bias.}
Under random-sign encoding, the post-failure $\widehat d$ is unbiased for every sign composition. Under the standard encoding, conditioning the mapping on failure introduces a bias in $\widehat d$ that depends on sign composition. Figure~\ref{fig:encoding} compares three conditions on the same batch of 4 million mappings, with identical peeling outcomes in each trial. Under the standard encoding with a one-sided difference, the failure samples shift right, with mean $+0.599$ and KS distance $0.242$ from all samples (Figure~\ref{fig:encoding}(a)). With a balanced difference, the failure samples and all samples almost coincide (Figure~\ref{fig:encoding}(b)). Under random-sign encoding, the failure samples have mean $+0.001$ and KS distance $0.001$, almost coinciding with all samples and $N(0,1)$ (Figure~\ref{fig:encoding}(c)).

\begin{figure}[!htbp]
\centering\includegraphics[width=\linewidth]{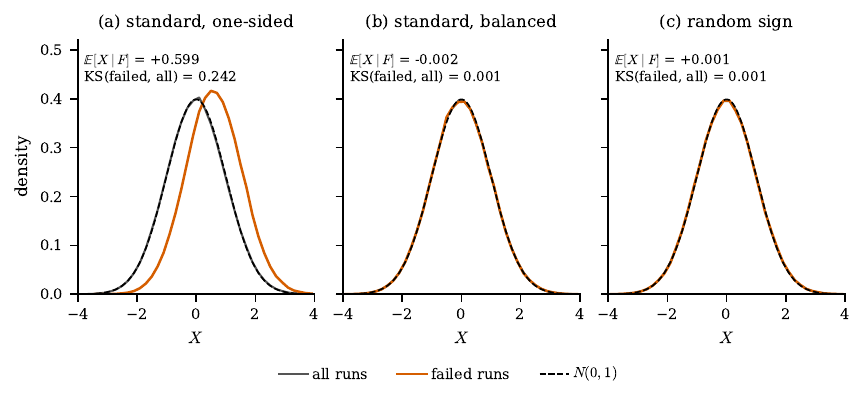}
\caption{$X=(\widehat d-d)/s(d)$ under three conditions on the same batch of mappings ($k=3$, $M=16384$, $d=13314$, failure rate 0.125). (a)(b) Standard encoding with one-sided and balanced differences; (c) random-sign encoding. The dashed curve is $N(0,1)$.}
\label{fig:encoding}
\end{figure}

The standard encoding's bias depends on both sign composition and $k$. The horizontal axis in Figure~\ref{fig:composition}(a) covers five sign compositions, with $d_-/d$ ranging from $0$ (one-sided) to $0.5$ (balanced). The solid lines for the standard encoding are near zero at the balanced point and deviate further as the difference becomes more one-sided. The direction depends on $k$: for a one-sided difference, the bias is $+0.638$ at $k=3$ and $-0.387$ at $k=6$ (mechanism in Appendix~\ref{app:std}). The receiver knows $k$ and hence the direction of the bias. Determining the correction's magnitude also requires $(d_+,d_-)$, which is unknown in advance. The dashed lines for random-sign encoding remain near zero. Even with the vertical axis magnified about 40 times (Figure~\ref{fig:composition}(b)), the five sign compositions show no discernible difference. This agrees with Proposition~\ref{prop:composition}: the coding coefficients $s_i\varepsilon_i$ are iid random signs with a distribution independent of $(d_+,d_-)$. The 84 direct-sampling configurations in the statistical experiments give the same conclusion (Appendix~\ref{app:dist}).

\begin{figure}[!htbp]
\centering\includegraphics[width=\linewidth]{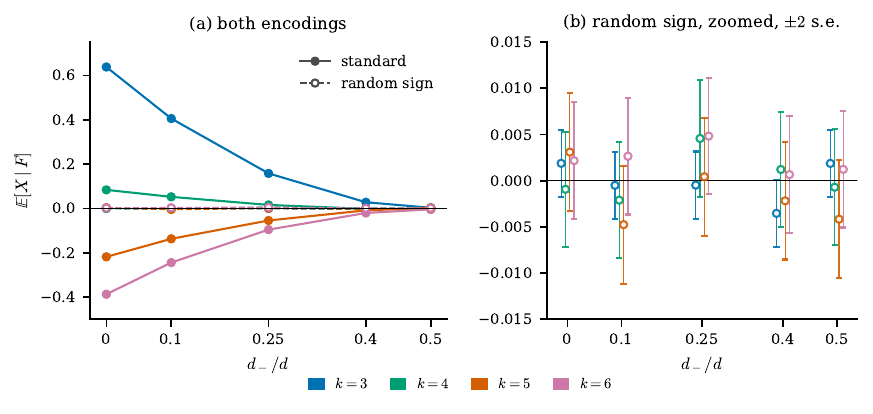}
\caption{Mean post-failure $X$ versus sign composition ($M=16384$, failure probability about 0.1; both encodings share the same mappings for each configuration). The horizontal positions $d_-/d=0$ and $0.5$ correspond to one-sided and balanced differences, respectively. Color indicates $k$. (a) Standard encoding: solid lines and filled markers. Random-sign encoding: dashed lines and open markers. (b) Random-sign data from (a), with the vertical axis magnified about 40 times and error bars of two standard errors.}
\label{fig:composition}
\end{figure}

\paragraph{Decoding equivalence.}
Random-sign encoding preserves the set of keys recovered by decoding (Proposition~\ref{prop:decode}) and therefore inherits the standard IBLT's threshold and failure probability. The decoding-equivalence test (\exptag{DEC}) verifies this in paired trials on a complete IBLT. Both encodings share the same keys and cell hash. Each trial checks the first-round outcome, every accepted key, and the final recovered set against an independent degree-one peeling reference. We observe zero disagreements in more than 500 million paired trials (Appendix~\ref{app:decode-exp}). Table~\ref{tab:decode} reports failure probabilities near the threshold. After normalizing the load by the respective $\alpha_c$, the curves for the five values of $k$ almost coincide. The transition band narrows as $M$ grows, consistent with the critical window of width of order $M^{-1/2}$ in Section~\ref{sec:model}.

\begin{table}[!htbp]
\centering
\caption{First-round decoding failure probability. Columns give the load $d/(\alpha_cM)$. Both encodings agree in every trial, so the values apply to both.}
\label{tab:decode}
\begin{tabular}{rr@{\hspace{1.2em}}ccc@{\hspace{1.2em}}ccc@{\hspace{1.2em}}ccc}
\toprule
& & \multicolumn{3}{c}{$M=1024$} & \multicolumn{3}{c}{$M=16384$} & \multicolumn{3}{c}{$M=65536$}\\
\cmidrule(lr){3-5}\cmidrule(lr){6-8}\cmidrule(lr){9-11}
$k$ & $\alpha_c$ & 0.98 & 1 & 1.02 & 0.98 & 1 & 1.02 & 0.98 & 1 & 1.02\\
\midrule
3 & 0.818 & 0.298 & 0.715 & 0.958 & 0.000 & 0.648 & 1.000 & 0.000 & 0.619 & 1.000\\
4 & 0.772 & 0.266 & 0.743 & 0.977 & 0.000 & 0.654 & 1.000 & 0.000 & 0.620 & 1.000\\
5 & 0.702 & 0.268 & 0.756 & 0.975 & 0.000 & 0.661 & 1.000 & 0.000 & 0.629 & 1.000\\
6 & 0.637 & 0.284 & 0.740 & 0.971 & 0.000 & 0.665 & 1.000 & 0.000 & 0.627 & 1.000\\
10 & 0.461 & 0.383 & 0.759 & 0.966 & 0.000 & 0.663 & 1.000 & 0.000 & 0.630 & 1.000\\
\bottomrule
\end{tabular}
\end{table}

Proposition~\ref{prop:decode} assumes that the checksum makes no errors. In actual decoding, the checksum must reject cells of degree greater than one with count $\pm1$. Random-sign encoding produces such cells even with one-sided differences: a degree-three cell with coding coefficients $+1,+1,-1$ has count $+1$. All such candidates are rejected at the checksum step, with zero false acceptances. Their frequency under random-sign encoding is independent of sign composition and equals that under standard encoding with a balanced difference (Figure~\ref{fig:decode} in the appendix). Section~\ref{sec:exp-cost} reports the resulting additional decoding time.

\subsection{The discrete limit distribution accurately describes the post-failure lower tail at finite size}\label{sec:exp-dist}
The discrete limit distribution $G_d$ coincides with the empirical post-failure lower tail without fitting, allowing the finite-size risk of both rules to be read directly from the figure. Figure~\ref{fig:crosscdf} uses six configurations with $c$ ranging from 0.25 to 8 at $k=3$, $M=16384$. It shows the empirical CDF of post-failure $(\widetilde d-d)/s(d)$ (orange), $G_d$ from Theorem~\ref{thm:cross} (black), and $N(0,1)$ (dashed), on a logarithmic vertical axis to display the lower tail. We first examine the shape and then the risks of the two rules.

\begin{figure}[!htbp]
\centering\includegraphics[width=\linewidth]{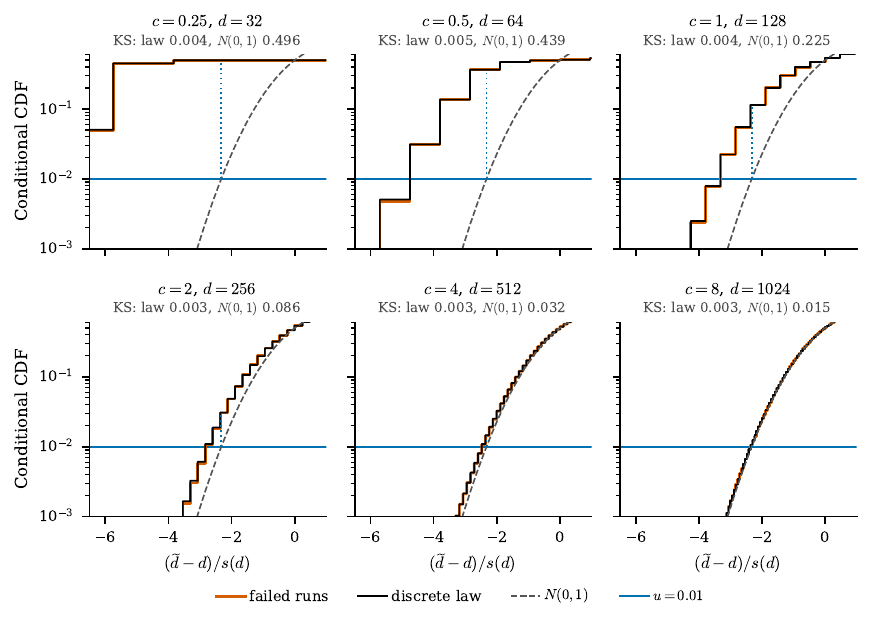}
\caption{Post-failure lower tail of $(\widetilde d-d)/s(d)$ at different $c$ ($k=3$, $M=16384$; 400,000 planted samples per configuration for $c=1,2,4$, and 100,000 for the rest). Below each title are the KS distances from the empirical CDF to $G_d$ and to $N(0,1)$. The blue horizontal line marks CDF $u=0.01$; the blue dotted vertical line sits at the 1\% quantile $-2.33$ of $N(0,1)$ and ends at the empirical CDF there.}
\label{fig:crosscdf}
\end{figure}

\paragraph{Lower-tail shape.}
The black and orange curves almost coincide in all six configurations, with KS distances from the empirical CDF to $G_d$ of 0.003--0.005. The quantiles of $G_d$ thus match the true post-failure quantiles. The lower tail changes with $c$: at small $c$, it has only a few coarse steps; as $c$ grows, the steps become denser and approach the dashed curve. The steps arise from duplicate edge pairs. At small $c$, failure is almost always caused by one such pair. If the edges have opposite signs, the observation is 2 below $d$. This $-2$ offset is large relative to $s(d)$: at $c=0.25$, it lies at a standardized error of about $-5.7$, and about four tenths of the failure samples fall on this step, visible as the large step in the $c=0.25$ panel. As $c$ grows, overlap among the remaining edges contributes many $\pm2/k$ offsets that fill in the steps (Appendix~\ref{app:limit}). The KS distance to $N(0,1)$ drops from 0.496 at $c=0.25$ to 0.015 at $c=8$. Finite-size error decreases as $M$ grows: across $k=3,4,5$, the KS distance to $G_d$ at $c=1,2,4$ drops from at most 0.034 at $M=256$ to at most 0.012 at $M=16384$ (Table~\ref{tab:limit-ks} in the appendix).

\paragraph{Risk of the two rules.}
The horizontal line marks CDF $u=0.01$. The horizontal coordinate where a curve meets it is that distribution's $u$ quantile. On this scale, the Gaussian rule's acceptance lower bound~\eqref{eq:gauss} is approximately the $u$ quantile of $N(0,1)$, or $-2.33$, where the dashed curve meets the horizontal line. A standardized error below this value causes rejection of the true $d$. The height of the orange curve at $-2.33$, marked by the top of the dotted vertical line, therefore approximates the Gaussian rule's conditional miss rate. At $c=8$, all three curves meet the horizontal line at the same point, and the conditional miss rate is about $1.1u$. At $c=1$, the orange curve has height about 0.11 at $-2.33$, or $11u$. At $c=0.25$, $-2.33$ lies to the right of the large step at $-5.7$, and the orange curve has height about 0.49: roughly half the failure samples lead to underestimation. These readings agree with the miss/u values measured in the statistical experiments (\exptag{STAT}) in Section~\ref{sec:exp-cover} (Table~\ref{tab:gauss-c} in the appendix). For candidate $d$, the Skellam rule uses the black curve as its null distribution. Its rejection region is determined by the intersection of the black curve and the horizontal line, placing the risk near $u$. At small $c$, the horizontal line often crosses a vertical segment between two steps. The rule can reject only observations on the lower step and below, giving risk below $u$ and explaining its conservatism at small $c$. Section~\ref{sec:exp-cover} tests both observations across all 189 configurations.

\subsection{Conditional miss rate and extra capacity at finite size}\label{sec:exp-cover}
We test both rules' risk at finite table sizes and measure the extra capacity required by the upper bound. Theorem~\ref{thm:coverage} guarantees that the Skellam rule's conditional miss rate is asymptotically at most $u$ for all $d\ge2$. The Gaussian rule has the same guarantee only when $c\to\infty$ (Corollary~\ref{cor:gauss}). The conditional miss rate is the fraction of failure samples with a miss ($U<d$), reported as miss/$u$. Values at most 1 indicate risk within the target. Extra capacity is $(U-d)/d$. We first examine all configurations in Figure~\ref{fig:coverage}, then retest those closest to the risk upper bound in Figure~\ref{fig:confirm}.

\paragraph{Conditional miss rate.}
Figure~\ref{fig:coverage} reports miss/u for all 189 configurations of the statistical experiments (\exptag{STAT}). The columns correspond to the two values of $u$; the top and bottom rows use direct and planted sampling, respectively. At very small $c$, failure is so rare that only planted sampling supplies samples (Section~\ref{sec:exp-setup}). The circles for the Skellam rule lie near the dashed line miss/u $=1$ when $c\ge8$ and fall below it as $c$ decreases. Values near 1 indicate that the rule uses the risk budget without exceeding it: the maximum is 1.04, while with about 20,000 failures per configuration, the standard error of miss/u at $u=0.01$ is about 0.07. Values below 1 result from the steps in Figure~\ref{fig:crosscdf}: when $u$ falls between two steps, the rule can reject only the lower step and below, giving risk below $u$. This is especially visible at $u=0.01$, where circles in the bottom row lie on the lower boundary, indicating zero misses.

\begin{figure}[!htbp]
\centering\includegraphics[width=\linewidth]{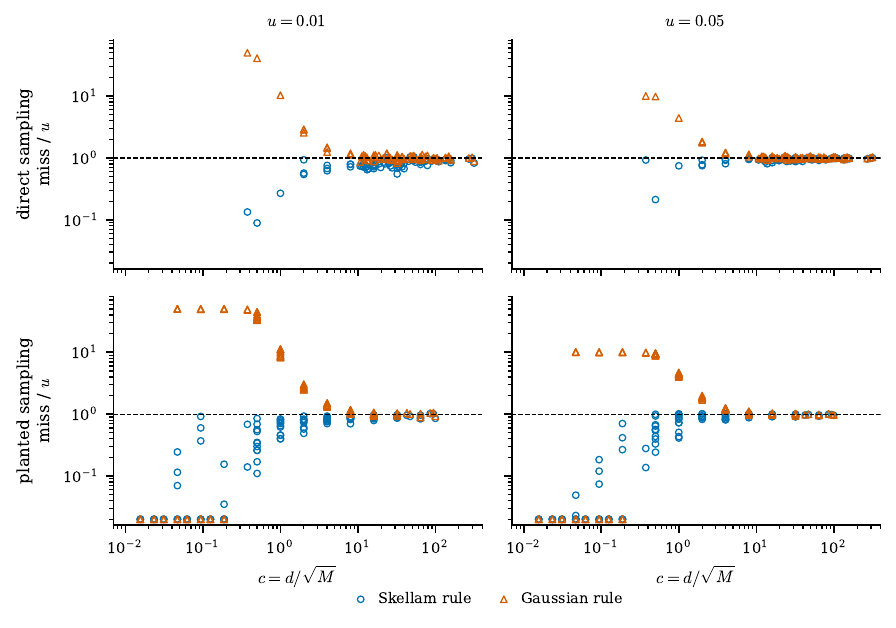}
\caption{miss/u versus $c=d/\sqrt M$ in the statistical experiments ($k=3,4,5$, $M=256$--16384, about 20,000 failures per configuration; configurations with zero misses appear on the lower boundary). Top: direct sampling. Bottom: planted sampling. Circles: Skellam rule. Triangles: Gaussian rule.}
\label{fig:coverage}
\end{figure}

\begin{figure}[!htbp]
\centering\includegraphics[width=\linewidth]{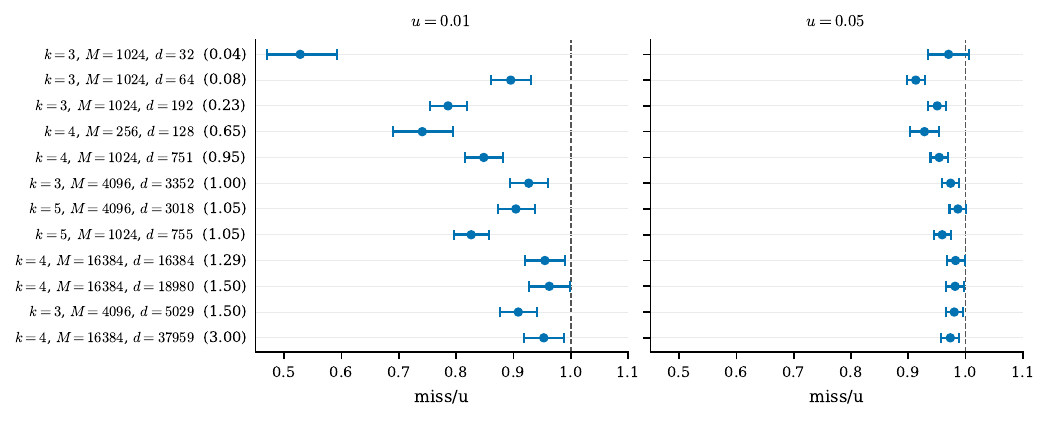}
\caption{Estimated miss/u for the Skellam rule with two-sided 95\% intervals (Clopper--Pearson) on the \exptag{WORST} configurations. Configurations are ordered by load, shown in parentheses. The dashed line marks miss/u $=1$.}
\label{fig:confirm}
\end{figure}

The triangles for the Gaussian rule coincide with the circles only when $c\ge8$ and rise rapidly as $c$ decreases. At $u=0.01$, miss/u is about 8--11 at $c=1$ and exceeds 33 at $c=0.5$ (Table~\ref{tab:gauss-c} in the appendix). In the bottom row, most triangles with $c<0.3$ form a horizontal line: the miss rate stays at $1/2$, or miss/u $=1/(2u)$, giving 50 and 10 in the two columns. At very small $c$, the post-failure observation falls near $d+2$ or $d-2$ with probability one half each (Proposition~\ref{prop:smalld}). The latter lies to the left of the Gaussian quantile, as illustrated by the large step at $c=0.25$ in Figure~\ref{fig:crosscdf}. The few triangles on the lower boundary correspond to $d=2,3$: the observation is then below the acceptance thresholds of all candidates, so the Gaussian rule returns $+\infty$. This excess risk persists as table size grows, consistent with the finite limit in Theorem~\ref{thm:cross}.

\paragraph{Worst-configuration retest.}
About 20,000 failures per configuration in Figure~\ref{fig:coverage} are insufficient to confirm that configurations closest to the risk upper bound also meet the target. The worst-configuration test (\exptag{WORST}) selects 12 such configurations using a rule fixed before sampling, covering bounded $c$, the critical window, and overload. Independent samples provide 56,000 to 330,000 failures per configuration. Figure~\ref{fig:confirm} orders these configurations by load from top to bottom. All 24 estimates are at most 0.99, and 22 two-sided 95\% intervals lie entirely to the left of 1. The largest one-sided simultaneous confidence upper bound is $1.024u$, meeting the $1.1u$ criterion set before sampling (Appendix~\ref{app:cover-more}). The lowest-load configuration at $u=0.01$ lies farthest from 1, consistent with the conservatism at small $c$ in Figure~\ref{fig:coverage}.

\paragraph{Extra capacity.}
Controlling misses requires an upper bound above $d$. When $c\ge8$, extra capacity is determined by table size and decreases as $M^{-1/2}$, with nearly identical values for both rules: $U-d$ is about 2.3--2.8 times $s(d)$ and $s(d)/d\approx\sqrt{2/M}$. At $u=0.01$, the Skellam rule's median extra capacity drops from 25\% at $M=256$ to 2.6\% at $M=16384$ (Table~\ref{tab:extra} in the appendix). When $c$ is bounded, $d$ itself is small, making the ratio less informative, and few extra elements are needed: at $c=4$, the median $U-d$ is 15. In the two-round protocol, this extra capacity appears as the increase in second-round table size over the oracle (Section~\ref{sec:exp-e2e}).

We recommend $M\ge1024$ for deployment. The null distribution in Section~\ref{sec:inversion} omits cells shared by the duplicate edge pair and the remaining keys. Its occurrence rate, $(d-2)k^2/M$, grows as table size decreases. At $M=64$, this omission becomes nonnegligible: the Skellam rule's maximum miss/u is 1.08, beyond sampling error. At $M=256$, we observe no excess in miss/u beyond sampling error. At the same $c$, the omitted structure is more frequent than at $M=1024$, and median extra capacity reaches 25\%.

\subsection{The two-round protocol on a complete IBLT}\label{sec:exp-e2e}
Theorem~\ref{thm:protocol} decomposes the two-round failure probability into two core risks, the joint miss rate and peeling failure with sufficient capacity, and two additional risks, an infinite upper bound and checksum error. We run the two-round protocol on a complete IBLT to compare the first two terms in practice, measure the random-sign IBLT's communication cost, and compare it with the existing calibration rule.

The protocol experiments (\exptag{PROTO}) use $k=3,4,5$, first-round table sizes $M_1\in\{256,1024,4096,16384\}$, and first-round loads 0.95, 1, 1.05, 1.5, 3. We test two sign compositions, one-sided (all on the $A$ side) and balanced, for a total of 120 configurations. The two sides also hold 1000 common elements. We fix each configuration's trial count in advance using a pilot estimate of the first-round failure probability. We target about 20,000 first-round failures, capped at 3,000,000 trials. Configurations with no first-round failure in the pilot receive a fixed 100,000 trials. Both values of $u$ use the same trials. After each first-round failure, we run the second round separately with table sizes determined by three upper bounds: the Skellam rule, the Gaussian rule, and an oracle using the true $d$. Both rules use $U_b=\max(U,|b|)$ as in Section~\ref{sec:flow}. The second-round table size is $\max\{\lceil U_b/(0.8\alpha_c)\rceil,64\}$, with a fresh cell hash. All three upper bounds share the keys and hashes within each trial, allowing paired comparisons that differ only in the upper bound (implementation details in Appendix~\ref{app:protocol}).

The two-round protocol achieves reliability comparable to the oracle. In Figure~\ref{fig:protocol}(a), both rules' final failure rates fall in the same range as the oracle's, all at most 0.13\% when $M_1\ge1024$. At $u=0.01$, their maximum is at most the oracle's (Table~\ref{tab:e2e} in the appendix). In paired comparisons by configuration, only the configuration used in the example of Section~\ref{sec:protocol} has a 95\% paired interval excluding zero: the Skellam rule has 10 final failures and the Gaussian rule has 20. Final failures arise mainly from second-round peeling, consistent with the expected dominance of the second term in Theorem~\ref{thm:protocol}. At $u=0.01$, after pooling configurations by $M_1$, at most 6\% of final failures follow a miss. The second-round table size includes a margin of $1/(0.8\alpha_c)$, which absorbs most slight underestimates. This margin also explains why the rules perform at least as well as the oracle: their upper bounds usually exceed $d$, yielding slightly larger second-round tables and fewer peeling failures.

Extra communication cost depends only on $M_1$ and decreases as $M_1^{-1/2}$. In Figure~\ref{fig:protocol}(b), configurations with different $k$, loads, and sign compositions almost coincide at each $M_1$ and follow the dotted line with slope $-1/2$. At $u=0.01$, the Skellam rule's increase in second-round table size over the oracle drops from 25.6\% at $M_1=256$ to 2.6\% at $M_1=16384$, consistent with the extra capacity in Section~\ref{sec:exp-cover}. In total bytes, the overhead drops from 14.6\% to 1.5\%.

\begin{figure}[!htbp]
\centering\includegraphics[width=\linewidth]{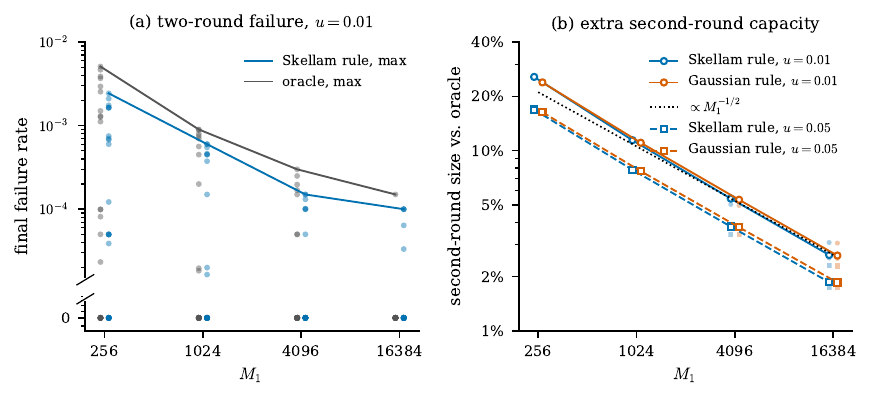}
\caption{Two-round protocol as first-round table size $M_1$ varies (120 configurations, 13.94 million trials per $u$). (a) Final two-round failure rate for each configuration at $u=0.01$. Configurations with zero failures appear at 0 below the axis break. (b) Increase in second-round table size over the oracle after first-round failure, with one point per configuration. The dotted line has slope $-1/2$.}
\label{fig:protocol}
\end{figure}

Figure~\ref{fig:scale} shows the same trend at fixed load (first-round target load 1.5, balanced, $u=0.01$). As table size grows, the Skellam rule's miss/u approaches 1 from below, while the Gaussian rule stays near 1. Each point in the top row pools all first-round failures for one configuration (Appendix~\ref{app:cover-more}). Total byte overhead relative to the oracle drops from 16.8\% at $M_1=256$ to 1.7\% at $M_1=16384$ and becomes negligible as table size grows.

\begin{figure}[!htbp]
\centering\includegraphics[width=\linewidth]{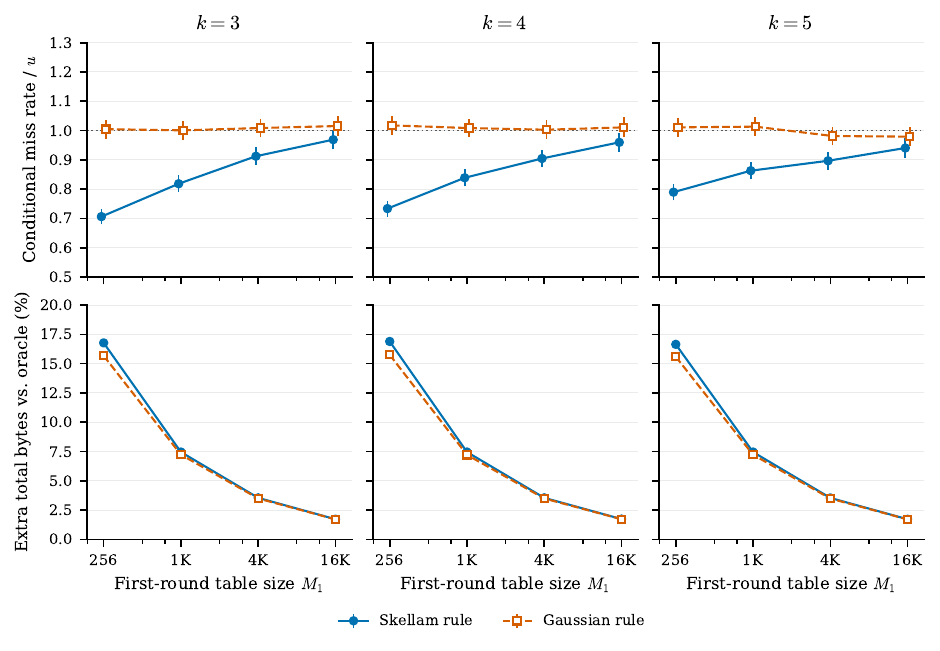}
\caption{Two-round protocol at fixed load as table size varies (balanced, $u=0.01$, target load 1.5). Each column corresponds to one $k$. Top: all first-round failures pooled by configuration (331,000 to 372,000 per point). Bottom: communication overhead relative to the oracle for the same configuration.}
\label{fig:scale}
\end{figure}

\paragraph{Comparison with the existing calibration rule.}
Prior work~\cite{wu2026selfsizing} also sets second-round capacity from the post-failure $\widehat d$ in a standard IBLT. Because the post-failure bias depends on the unknown sign composition, it obtains a calibration factor from offline simulation. For $k=3$, it groups configurations by first-round table size and takes the worst post-failure 1\% quantile $q_{0.01}$ of $\widehat d/d$ over several loads and sign compositions, then sets $U=\lceil\widehat d/q_{0.01}\rceil$. We use this upper bound in the same protocol, keeping all other settings identical to those for our two rules; Table~\ref{tab:vldb} in the appendix gives the complete results. Two findings emerge. First, under standard encoding, the risk of the same factor varies widely with sign composition: at loads 0.95--1.0, conditional miss rates are $0.05$--$0.40u$ for one-sided differences and $0.86$--$0.94u$ for balanced differences, an order-of-magnitude difference. This is the protocol-level consequence of the bias in Section~\ref{sec:exp-decode}, and the factor must be calibrated for the worst sign composition. Random-sign encoding brings both into $0.80$--$1.01u$. Second, the factor requires separate simulation calibration for each parameter configuration. The published values cover only $k=3$, $M_1\le4096$, and $u=0.01$. For these parameters, the calibration rule and the Skellam rule have similar final failure rates and total byte costs. We compute the Skellam rule directly from the theoretical distribution in Section~\ref{sec:inversion}, without calibration for any $k$, $M$, load, or $u$.

\subsection{Computational cost and rule choice}\label{sec:exp-cost}\label{sec:exp-deploy}
Almost all computation beyond standard encoding comes from sizing with the Skellam rule. The overhead of random-sign encoding itself is negligible. With random-sign encoding, the decoder processes every difference as a balanced difference: cells containing multiple keys with count $\pm1$ also enter the candidate queue and are then rejected by the checksum. For a one-sided difference, the number of candidates checked per recovered key therefore rises from 1.00 to 1.77 ($k=3$). The resulting increase in decoding time is 1--2\% of total protocol time at loads 1.0--1.05 and less than 1\% at higher loads.

In deployment, sizing with the Skellam rule requires only a table lookup. The parameters $(M_1,k,u)$ are fixed, and $U$ depends only on the single observed integer $\sum_vC_v^2$. We can therefore precompute $U$ for each possible observation within a chosen range and store the results in a table indexed by the observation. After computing $\sum_vC_v^2$, the receiver retrieves $U$ directly. The results agree entry by entry with online computation (Appendix~\ref{app:search-lut}). Table~\ref{tab:proto} reports total protocol time at $M_1=16384$, including first-round encoding and decoding, sizing, and the second round. With table lookup, each sizing takes less than 1 $\mu$s, and total protocol time differs from the Gaussian rule by less than 0.5\%.

When parameters vary across synchronizations, precomputation is inconvenient, or the observation exceeds the table's range, the Skellam rule computes the upper bound online. Online computation evaluates the exact CDF for each candidate, with cost growing roughly as $c^2$. Total protocol time is 15--24\% longer than for the Gaussian rule at loads 1.0--1.05 and 40--45\% longer at load 1.5. At load 3, sizing dominates, and total time is 3.1--3.8 times that of the Gaussian rule. A receiver can cache Poisson probability tables across synchronizations to reduce this cost. Appendix~\ref{app:timing} gives the conditions for the computational-cost measurement (\exptag{COST}), table construction and verification, and per-sizing time and memory use under different cache conditions.

\begin{table}[!htbp]
\centering
\caption{Total protocol time per trial (ms, median of 10 repetitions; $M_1=16384$, one-sided difference). Lookup and online are two implementations of the Skellam rule; the online implementation caches Poisson probability tables between successive trials.}
\label{tab:proto}
\begin{tabular}{lrrrrrrrrr}
\toprule
& \multicolumn{3}{c}{Skellam rule (lookup)} & \multicolumn{3}{c}{Skellam rule (online)} & \multicolumn{3}{c}{Gaussian rule}\\
\cmidrule(lr){2-4}\cmidrule(lr){5-7}\cmidrule(lr){8-10}
Load & $k=3$ & $k=4$ & $k=5$ & $k=3$ & $k=4$ & $k=5$ & $k=3$ & $k=4$ & $k=5$\\
\midrule
1.0 & 11.0 & 14.0 & 15.3 & 12.7 & 16.8 & 17.7 & 11.0 & 14.0 & 15.3\\
1.05 & 13.1 & 15.6 & 18.2 & 16.2 & 19.4 & 21.5 & 13.1 & 15.6 & 18.2\\
1.5 & 14.0 & 16.7 & 19.6 & 20.1 & 24.2 & 27.6 & 14.0 & 16.7 & 19.6\\
3 & 18.9 & 22.4 & 25.7 & 58.8 & 85.4 & 90.1 & 18.8 & 22.3 & 25.6\\
\bottomrule
\end{tabular}
\end{table}

The Gaussian rule meets the target only when $d\gg\sqrt M$. Table size alone does not exclude bounded $c$ or fixed small $d$, so the rule is suitable only for applications whose workloads ensure this condition. The Skellam rule applies to all $d\ge2$ without knowing the load region, and with table lookup its sizing cost is no higher than that of the Gaussian rule. For deployment, we recommend the Skellam rule with $M_1\ge1024$ (Section~\ref{sec:exp-cover}).

\section{Related work}\label{sec:related}
\paragraph{IBLT and set reconciliation.} IBLT\cite{gm} and the invertible Bloom filter\cite{eppstein2011whatsdifference} list set differences by peeling. Decoding success is determined by the 2-core of a random hypergraph\cite{molloy,dw,dm}. At low load, duplicate edge pairs dominate failure\cite{gm,yw}, and the probability of a linear-sized core at subcritical load is exponentially small\cite{kim2008poisson,dn08}. Subsequent work improves the hypergraph structure\cite{lazaro2021irregular}, develops rateless codes\cite{yang2024rateless,lazaro2023ratecompatible}, adds a stash to lower the failure probability\cite{belazzougui2024better}, or predicts and breaks the 2-core after decoding failure\cite{gadurek2026breaking}. These works seek to reduce failure or resume decoding after failure. We read the same table after failure to determine the next-round capacity.

\paragraph{Estimating the difference size.} A common approach to estimating $d$ is to send a separate estimator, such as the strata estimator\cite{eppstein2011whatsdifference}, min-wise hashing\cite{broder2000minwise}, the odd sketch\cite{mitzenmacher2014odd}, or GXBits\cite{jia2024gxbits}. PBS\cite{gong2021pbs} first estimates $d$ using independent ToW sketches, then sets capacity to a fixed multiple of the estimate. It transmits and accounts for the estimator and reconciliation structure separately. We obtain the estimate from the same counts used for decoding, avoiding a separate estimator. This reuse requires analysis conditional on decoding failure: the estimate is needed only after failure, and failure selects the mapping.

\paragraph{Random-sign sketches and second-moment estimation.} AMS\cite{alon1999spacecomplexity} estimates $F_2$ using a sum of random signs. CountSketch\cite{charikar2004countsketch} supports point queries with random-sign counters. Thorup and Zhang\cite{thorup2004tabulation} analyze its second-moment extension and give a centered second-moment estimate for a single counter array. Fast-AGMS\cite{cormode2005fastagms} uses the same structure for distributed queries. $\widehat d$ has the same algebraic form as these estimators. Its conditional unbiasedness given the mapping also follows the structure of the Hutchinson trace estimator\cite{hutchinson1990trace}. These sketches carry no decodable keySum and do not address the distribution conditional on decoding failure. Fischer and Nakos\cite{fischer2025sparse} combine hypergraph peeling and a random-sign CountSketch in a single linear sketch, with the tail norm estimated by a separate short sketch. We encode the shared random signs in the IBLT's count, keySum, and checksum. The same table supports peeling and, after failure, provides an estimate independent of the difference direction, together with a post-failure upper bound valid at all loads.

\paragraph{Reading the same table after failure.} Prior work\cite{wu2026selfsizing} reads count energy after first-round failure to set the second-round capacity. Its analysis models sign composition with iid signs and obtains a conditional mean equal to $d$. Its upper-bound rule uses a chi-square approximation under deep overload and per-configuration simulation calibration at other loads, where its analytic bound is too conservative for small $M$. We encode random signs in all three IBLT fields, making these properties hold exactly for every fixed mapping. We also derive a post-failure distribution and upper bound valid at all loads.

\paragraph{Conditional limits and discrete approximations.} For bounded $c$ (\eqref{eq:c}), we derive the limit distribution using the Chen--Stein method\cite{arratia1989chen} for the Poisson approximation of sparse overlap. The Skellam distribution describes the difference between two types of counts. At low load, we replace the failure-conditioned distribution with the planted distribution, which is the distribution size-biased by the duplicate edge pair count\cite{arratia2019sizebias}. When $d\gg\sqrt M$, we analyze the random-sign quadratic form given the mapping using the cumulant method on a dependency graph\cite{janson1988cumulant}. This requires only concentration bounds\cite{warnke2016typical,combes2015mcdiarmid} and degree control for the overlap graph. We prove a Gaussian limit uniform over deterministic coefficients using an edge-exposure martingale CLT. The upper-bound rule uses classical test inversion\cite{neyman1937}.

\section{Discussion and conclusion}\label{sec:conclusion}
The random-sign IBLT uses a single table for both decoding and difference-size estimation, preserving the cells and decoding capability. The mapping alone determines decoding success, while the signs supply the estimate's randomness. This separation makes the estimate exactly unbiased for every fixed mapping and identically distributed across sign compositions. It also lets us fully characterize the post-failure distribution: Gaussian when $d\gg\sqrt M$, and discrete with a $\pm2$ term plus a Skellam term when $c=d/\sqrt M$ is bounded. One family of distributions describes all scales. Using this family as the null distribution, the Skellam rule keeps the conditional miss rate asymptotically at most the target risk, uniformly over all $d\ge2$.

For deployment, we recommend the Skellam rule, which retains the uniform guarantee for all $d\ge2$. When $(M_1,k,u)$ is fixed, its upper bounds can be precomputed and stored in a table, reducing sizing to a single lookup. Section~\ref{sec:exp-deploy} gives the guarantees and costs of the two rules. The Skellam rule meets the target within sampling error on all experimental configurations with $M\ge256$, and we recommend $M\ge1024$ for deployment (Section~\ref{sec:exp-cover}). Shortening the checksum to 32 bits yields narrower cells while preserving the sizing risk. The verification error probability is of order $10^{-4}$ at $M_1=16384$ and load at most 3 (Section~\ref{sec:protocol}). Storing counts for several independent signs per cell reduces absolute error while preserving the ratio of the post-failure conditional scale to the unconditional scale. Inversion still uses the post-failure distribution (Appendix~\ref{app:multi}). Random-sign encoding preserves cell space overhead, and the additional decoding time is at most about 2\% of total protocol time (Section~\ref{sec:exp-cost}).

Three questions remain for future work. First, rigorous finite-parameter tail upper bounds could reject more candidates before constructing the Poisson tables, reducing the Skellam rule's computational cost. Second, making the error constants in Theorem~\ref{thm:cross} and Theorem~\ref{thm:clt} computable would give a finite risk bound for a specified $M$. Third, we seek to extend the method to multiset differences whose weights can exceed one. In this setting, the random-sign count gives an unbiased estimate of the difference vector's $F_2$. Decoding and post-failure analysis for this extension remain to be developed.

\section*{Code and data availability}
The implementation, frozen experimental configurations, summary data, and the programs that generate every figure and table are available at \url{https://github.com/whitewum/random-sign-artifact}. The source is archived at Zenodo under DOI \href{https://doi.org/10.5281/zenodo.23754604}{10.5281/zenodo.23754604}, and all raw experiment records are published as assets of the repository's releases.

\bibliographystyle{plain}
\bibliography{references}

\appendix
\section{Notation}\label{app:notations}
We collect the notation used in the main text and principal proofs, ordered by section for reference. We suppress dependence on $M$ except when comparing quantities across scales. We fix $k\ge3$ and let $M\to\infty$. A subscript $k$ denotes a constant depending only on $k$. The symbol $\Rightarrow$ denotes convergence in distribution. As in Section~\ref{sec:design}, we write the coding coefficient $s_i\varepsilon_i$ as $\varepsilon_i$ throughout.

\begin{table}[!htbp]
\centering\footnotesize
\caption{Principal notation: model, encoding, and measurement.}
\begin{tabular}{@{}l p{0.72\linewidth}@{}}
\toprule
Notation & Meaning\\
\midrule
\multicolumn{2}{@{}l}{\bfseries Model and peeling}\\
$A$, $B$ & sets held by the two sides; $|A|$ and $|B|$ are the set sizes\\
$M$, $M_1$ & number of cells; number of cells in the first-round table\\
$k$ & number of cells selected by each key\\
$x$, $x_i$ & key; the $i$-th differing key\\
$\mathrm{chk}(x)$ & checksum of key $x$\\
$d$ & difference size $|A\triangle B|$\\
$d_+$, $d_-$ & numbers of keys in $A$ only and in $B$ only, respectively\\
$b$ & difference between the two set sizes, $d_+-d_-=|A|-|B|$\\
$S_i$ & the $k$-subset selected by key $i$\\
$\mathcal G$ & unsigned mapping with key labels\\
$H_{ij}$ & overlap $|S_i\cap S_j|$ of two edges\\
$F$ & peeling failure on a single table; $\Pr(F)$ is the failure probability\\
$F_1$, $F_2$ & peeling failure in the first and second rounds\\
$E$ & checksum misjudgment event\\
$\alpha_c$ & peeling threshold (about $0.818$ for $k=3$ and $0.772$ for $k=4$)\\
\midrule
\multicolumn{2}{@{}l}{\bfseries Encoding and measurement}\\
$s_i$ & set direction of key $i$, taking values $\pm1$\\
$\mathcal U$, $\varepsilon(\cdot)$ & universe of keys; sign hash shared by the two sides\\
$\varepsilon_i$ & random sign $\varepsilon(x_i)$ of key $i$; in the analysis the coding coefficient $s_i\varepsilon_i$ is written as $\varepsilon_i$\\
$C_v$ & count of cell $v$, $\sum_{i:v\in S_i}\varepsilon_i$ (in the standard encoding, $\sum_{i:v\in S_i}s_i$)\\
$\widehat d$ & estimator of $d$, see~\eqref{eq:dhat}; used by the Gaussian rule\\
$s(n)$ & standardization scale $\sqrt{2n(n-1)/(M-1)}$, see~\eqref{eq:s}\\
$X$ & standardized estimation error $(\widehat d-d)/s(d)$, see~\eqref{eq:X}\\
$\widetilde d$ & estimator of $d$, $\frac1k\sum_vC_v^2$, see~\eqref{eq:raw-energy}; used by the Skellam rule\\
$c_F$ & normalized excess of the overlap over its unconditional mean in the failure sample, see~\eqref{eq:projection}\\
$\Phi$, $\mathcal N(0,1)$ & CDF and distribution of the standard Gaussian, respectively\\
\bottomrule
\end{tabular}
\end{table}

\begin{table}[!htbp]
\centering\footnotesize
\caption{Principal notation: inversion, protocol, and proofs.}
\begin{tabular}{@{}l p{0.72\linewidth}@{}}
\toprule
Notation & Meaning\\
\midrule
\multicolumn{2}{@{}l}{\bfseries Inversion and protocol}\\
$u$ & target risk, taken as $0.01$ and $0.05$ in the experiments\\
$n$ & candidate difference size, an integer $n\ge2$\\
$U$, $U_b$ & upper bound on $d$ given by the selected rule; $\max(U,|b|)$ after enforcing the lower bound $|b|$\\
$t(n)$ & lower acceptance bound for the Gaussian rule, see~\eqref{eq:gauss}\\
$\chi^2_{M-1,u}$ & lower $u$ quantile of the chi-square distribution with $M-1$ degrees of freedom\\
$\lambda_n$ & parameter of the Skellam term in the null distribution, see~\eqref{eq:lambda}\\
$Z_n$ & $\Skellam(\lambda_n,\lambda_n)$, the difference of two independent $\Pois(\lambda_n)$ variables\\
$J_\lambda$ & $2\varepsilon_1\varepsilon_2+\frac2k\Skellam(\lambda,\lambda)$, see~\eqref{eq:J}\\
$G_n$ & CDF of the null distribution $n+J_{\lambda_n}$ of candidate $n$ (Definition~\ref{def:null})\\
$c$ & $d/\sqrt M$ (\eqref{eq:c}); its limit in Theorem~\ref{thm:cross}\\
$m(\cdot)$ & second-round table-size rule $m(U_b)=\lceil U_b/(0.8\alpha_c)\rceil$\\
$\Pr_m(F)$ & failure probability on a new table with $m$ cells\\
\midrule
\multicolumn{2}{@{}l}{\bfseries Proofs}\\
$w_i$ & incidence vector of key $i$ minus the mean $k/M$, with $\langle w_i,w_j\rangle=H_{ij}-k^2/M$ (Appendix~\ref{app:proof-moments})\\
$N$, $\mu$ & number of duplicate edge pairs and its expectation $\binom d2/\binom Mk$\\
$Q$ & planted distribution: choose a pair of edge labels uniformly, then condition on this pair having the same mapping\\
$D_v$, $\ell_i$ & degree of cell $v$; $\sum_{j\ne i}H_{ij}$\\
$\Omega_2$ & sum of squared overlaps $\sum_{i<j}H_{ij}^2$\\
$\mathcal G_M$ & set of good graphs in Appendix~\ref{app:graph}, whose complement has probability smaller than any fixed negative power of $M$\\
\bottomrule
\end{tabular}
\end{table}

\section{Post-failure bias of the standard IBLT}\label{app:std}
We prove Proposition~\ref{prop:projection} and verify it numerically.

In the standard IBLT, the count is $C_v=\sum_{i:v\in S_i}s_i$, with the set direction as the coefficient. Expanding $\widehat d$ gives
\begin{equation}
\widehat d-d=\frac{2}{k(1-k/M)}\sum_{i<j}s_is_j\Bigl(H_{ij}-\frac{k^2}{M}\Bigr).\label{eq:std-expand}
\end{equation}
Failure changes the overlap distribution. Overlap between same-sign pairs raises $\widehat d$, whereas overlap between opposite-sign pairs lowers $\widehat d$. This yields Proposition~\ref{prop:projection}.

\begin{proof}[Proof of Proposition~\ref{prop:projection}]
By~\eqref{eq:std-expand},
\begin{equation*}
\E[\widehat d\mid F]-d=\frac{2}{k(1-k/M)}\sum_{i<j}s_is_j\,\E\Bigl[H_{ij}-\frac{k^2}{M}\Bigm|F\Bigr].
\end{equation*}
The event $F$ depends only on the unsigned mapping and is invariant under permutations of the key labels. Since the random mapping $\mathcal G$ is exchangeable in the labels, $\E[H_{ij}\mid F]$ is the same for all $i<j$ and equals $k^2/M+k(1-k/M)c_F$. The number of same-sign pairs minus the number of opposite-sign pairs is $\binom{d_+}2+\binom{d_-}2-d_+d_-=(b^2-d)/2$. Substituting gives $\E[\widehat d\mid F]-d=c_F(b^2-d)$, and dividing by $d$ yields~\eqref{eq:projection}. The proof uses only these two properties of $F$, so it holds for any event that depends only on the unsigned mapping, is invariant under label permutation, and has positive probability.
\end{proof}

We directly verify Proposition~\ref{prop:projection} using the simulations in Figure~\ref{fig:bias} of Section~\ref{sec:model} and Figure~\ref{fig:composition} of Section~\ref{sec:exp-decode}. Within each configuration, the five sign compositions share the same batch of mappings. Since $F$ depends only on the unsigned mapping, their failure samples coincide trial by trial and share the same $c_F$. The other factor on the right, $b^2/d-1$, depends only on the sign composition and equals $d-1$ for a one-sided difference. Thus the conditional bias for any sign composition equals the one-sided bias multiplied by $(b^2-d)/(d^2-d)$. Once we determine $c_F$ from a single one-sided point, we can predict the bias at every other level without fitting. Figure~\ref{fig:bias} shows this prediction against $(b/d)^2$. The one-sided point at $(b/d)^2=1$ determines the line for each configuration, shown in a different color. The remaining four points fall on that line, confirming~\eqref{eq:projection}.

The full verification covers 20 configurations with $k=3,4,5,6$ and failure probabilities ranging from 0.10 to 0.90 on the $M=16384$ grid. Each configuration has five sign composition levels, $d_-/d\in\{0,0.1,0.25,0.4,0.5\}$. Figure~\ref{fig:bias} shows ten of these configurations with $k=3$ and $k=6$. The standard-encoding curves in Figure~\ref{fig:composition}(a) show four configurations with failure probability about 0.1, including $k=4$ and $k=5$.

The four non-one-sided sign composition levels in each of the 20 configurations give 80 predictions in total. In units of $s(d)$, the differences between predictions and measurements are at most 0.0064 and all lie within two standard errors. The standard error includes sampling error in both the one-sided mean and the measured value.

\section{Encoding properties and proofs of exact moments}\label{app:encoding}
We describe the XOR implementation of keySum and checksum and prove the finite identities in Sections~\ref{sec:design} and~\ref{sec:measurement}. We then summarize their exact verification for small parameter values and discuss the AMS observation obtained from the count sum.

\subsection{XOR implementation}\label{app:xor}
In the set model, each side inserts each key at most once. We can therefore accumulate keySum and checksum by bitwise XOR, as in a common implementation of the standard IBLT, and multiply only the count by $\varepsilon(x)$. Under XOR, $-x=x$. Thus, in each difference-table cell, keySum is the XOR of the differing keys that select the cell, and checksum is the XOR of their $\mathrm{chk}$ values. Both are independent of the coding coefficients. In a pure cell, keySum and checksum equal $x_i$ and $\mathrm{chk}(x_i)$ directly, so we check whether checksum equals $\mathrm{chk}(x)$. To delete $x_i$, we subtract $C_v$ from the count and XOR the other two fields once more with $x_i$ and $\mathrm{chk}(x_i)$, respectively. The induction argument for the additive form in the next subsection also applies here. We recover the direction as $C_v\,\varepsilon(x)$, and peeling follows the same process and gives the same result as in the additive form. In a multiset, the same key may be inserted repeatedly. XOR cancels repeated contributions while the count retains them, so multisets require the additive form. If peeling modifies the difference table in place, the receiver can accumulate the two scalars $\sum_vC_v^2$ and $\sum_vC_v$ during cell-by-cell subtraction, without retaining the original difference table.

\subsection{\texorpdfstring{Proof of Proposition~\ref{prop:decode} (ideal decoding equivalence)}{Proof of ideal decoding equivalence}}\label{app:proof-decode}
\begin{proof}
We induct on the number of peeling steps. After each step, the three fields of every cell equal the encoded sum of the remaining keys: the count is $\sum s_i\varepsilon_i$, keySum is $\sum s_i\varepsilon_i x_i$, and checksum is $\sum s_i\varepsilon_i\,\mathrm{chk}(x_i)$. Each sum ranges over the remaining keys that select the cell. This invariant holds initially by~\eqref{eq:fields}.

Suppose that at some step cell $v$ has degree one in the unsigned hypergraph, with unique key $x_i$. Its count is $s_i\varepsilon_i\in\{\pm1\}$, count times keySum equals $x_i$, and checksum equals count times $\mathrm{chk}(x_i)$. The cell hash of $x_i$ also contains $v$, so $v$ passes the pure-cell test. Conversely, a cell of any other degree that passes the test is a false positive, which the ideal-purity assumption excludes. Thus the cells removable under random-sign encoding are exactly the unsigned degree-one cells. To delete $x_i$, we subtract its three-field contribution multiplied by the coding coefficient $s_i\varepsilon_i$ from its $k$ cells. This preserves the invariant and removes the corresponding edge from the unsigned hypergraph. With the same cell-selection order, the two processes reach the same endpoint. Peeling always ends at the 2-core, independently of the selection order, so $F$ depends only on $\mathcal G$.

At termination, we check all fields. Under random signs, a nonempty core can have all-zero counts, so counts alone cannot certify success. The keySum fields can cancel at the same time. Success is misjudged only if all checksums are also zero. This event is included in $E$ in Section~\ref{sec:protocol}.
\end{proof}

\subsection{\texorpdfstring{Proof of Proposition~\ref{prop:moments} (conditional unbiasedness and conditional moments)}{Proof of conditional unbiasedness and conditional moments}}\label{app:proof-moments}
\begin{proof}
Using $H_{ii}=k$ and $\varepsilon_i^2=1$, we obtain the sum of squared counts and the squared sum:
\begin{align*}
\sum_vC_v^2&=\sum_{i,j}\varepsilon_i\varepsilon_jH_{ij}=kd+2\sum_{i<j}\varepsilon_i\varepsilon_jH_{ij},\\
\frac1M\Bigl(\sum_vC_v\Bigr)^2&=\frac{k^2}M\Bigl(\sum_i\varepsilon_i\Bigr)^2=\frac{k^2}Md+2\sum_{i<j}\varepsilon_i\varepsilon_j\frac{k^2}M.
\end{align*}
Dividing the first identity by $k$ gives~\eqref{eq:raw-expand}. Subtracting the second identity from the first gives $kd(1-k/M)+2\sum_{i<j}\varepsilon_i\varepsilon_j(H_{ij}-k^2/M)$. Dividing this by $k(1-k/M)$ gives~\eqref{eq:rs-expand}. For a fixed mapping, $\E[\varepsilon_i\varepsilon_j]=0$ ($i\ne j$), so the conditional expectations of $\widehat d-d$ and $\widetilde d-d$ are both zero.

We now express the coefficients of the cross terms as inner products. Let $w_i\in\mathbb R^M$ be the incidence vector of key $i$ minus its mean $k/M$. Then $\langle w_i,w_j\rangle=H_{ij}-k^2/M$ and $\|w_i\|^2=k(1-k/M)$, so~\eqref{eq:rs-expand} becomes
\begin{equation*}
\widehat d-d=\frac{2}{k(1-k/M)}\sum_{i<j}\langle w_i,w_j\rangle\,\varepsilon_i\varepsilon_j.
\end{equation*}
Given $\mathcal G$, the inner products are constant. All randomness comes from the sign products $\varepsilon_i\varepsilon_j$. The expectation of a product of sign products is nonzero if and only if each subscript occurs an even number of times, in which case the expectation is one.

For the variance, we expand the square and consider two unordered pairs $\{i,j\}$ and $\{i',j'\}$. When the pairs coincide, the product is $\varepsilon_i^2\varepsilon_j^2=1$. When they differ, they share at most one subscript, so some subscript occurs exactly once and the expectation is zero. Thus only the diagonal terms remain:
\begin{equation*}
\Var(\widehat d\mid\mathcal G)=\frac{4}{k^2(1-k/M)^2}\sum_{i<j}\langle w_i,w_j\rangle^2,
\end{equation*}
which gives the equality in~\eqref{eq:condmoments}.

The exhaustive verification in Appendix~\ref{app:exact} also checks the third central moment; here we give its formula. For the third central moment, we expand the cube and consider ordered triples of unordered pairs. If all three pairs coincide, the product is $\varepsilon_i\varepsilon_j$. If exactly two coincide, the product is the sign product of the third pair. Both cases have expectation zero. If the pairs are distinct, every subscript occurs an even number of times exactly when the pairs form a triangle $\{i,j\},\{i,h\},\{j,h\}$. Each triangle corresponds to $3!=6$ ordered triples. Hence
\begin{equation*}
\E[(\widehat d-d)^3\mid\mathcal G]=\frac{8\cdot6}{k^3(1-k/M)^3}\sum_{i<j<h}\Bigl(H_{ij}-\frac{k^2}{M}\Bigr)\Bigl(H_{ih}-\frac{k^2}{M}\Bigr)\Bigl(H_{jh}-\frac{k^2}{M}\Bigr).
\end{equation*}

The upper bounds in~\eqref{eq:condmoments} follow from Cauchy--Schwarz: $\langle w_i,w_j\rangle^2\le\|w_i\|^2\|w_j\|^2=[k(1-k/M)]^2$. Substitution into the variance formula shows that each of the $\binom d2$ unordered pairs contributes at most $4$, giving $\Var(\widehat d\mid\mathcal G)\le2d(d-1)$. The component of the vector $\sum_i\varepsilon_iw_i$ on cell $v$ is $C_v-\frac1M\sum_uC_u$, so
\begin{equation*}
\widehat d=\frac{\|\sum_i\varepsilon_iw_i\|^2}{k(1-k/M)}\ge0,
\end{equation*}
and the triangle inequality gives $\|\sum_i\varepsilon_iw_i\|\le\sum_i\|w_i\|=d\sqrt{k(1-k/M)}$, so $\widehat d\le d^2$.

Under the random mapping $\mathcal G$, the conditional mean is always $d$. The law of total variance therefore gives $\Var\widehat d=\E[\Var(\widehat d\mid\mathcal G)]$. The intersection size $H_{ij}$ of two independent uniform $k$-subsets has a hypergeometric distribution with mean $k^2/M$ and variance $k^2(1-k/M)^2/(M-1)$. Substitution into the variance formula shows that each unordered pair contributes $4/(M-1)$, giving $\Var\widehat d=2d(d-1)/(M-1)=s(d)^2$.
\end{proof}

\subsection{Exact verification for small parameter values}\label{app:exact}
We verify the finite identities in Sections~\ref{sec:design} and~\ref{sec:measurement} by exhaustive enumeration for small parameter values. For the six triples $(M,k,d)$ in Table~\ref{tab:exact}, we enumerate all mappings and signs, covering 155,640 states. All arithmetic uses integers and rationals, so the results are exact. In every state, common elements cancel in the difference table. Signed peeling recovers the correct keys and directions, and the unrecovered keys form exactly the unsigned 2-core. For every mapping, $\E[\widehat d\mid\mathcal G]=d$, the conditional variance and third central moment match the formulas above, and $\Var(\widehat d\mid\mathcal G)\le2d(d-1)$. The distribution of $\widehat d$ over the random signs is the same for every direction vector. Table~\ref{tab:exact} summarizes the unconditional and post-failure statistics.

\begin{table}[!htbp]
\centering
\caption{Exact enumeration results for small parameter values.}
\label{tab:exact}
\setlength{\tabcolsep}{4.5pt}
\begin{tabular}{@{}crcccccc@{}}
\toprule
$(M,k,d)$ & states & $p_F$ & $\Var\widehat d$ & $\E[\widehat d\mid F]$ & $\Var(\widehat d\mid F)$ & $\E[(\widehat d-d)^3\mid F]$ & $\Pr(\widehat d=0\mid F)$\\
\midrule
$(4,3,2)$ & 40 & $1/4$ & $4/3$ & 2 & 4 & 0 & $1/2$\\
$(4,3,3)$ & 160 & 1 & 4 & 3 & 4 & $16/3$ & 0\\
$(4,3,4)$ & 560 & 1 & 8 & 4 & 8 & $64/3$ & $27/512$\\
$(5,3,3)$ & 1,760 & $2/5$ & 3 & 3 & 4 & $20/3$ & 0\\
$(5,3,4)$ & 11,440 & 1 & 6 & 4 & 6 & 12 & $93/8000$\\
$(6,3,4)$ & 141,680 & $2029/4000$ & $24/5$ & 4 & $11256/2029$ & $23104/2029$ & $75/4058$\\
\bottomrule
\end{tabular}
\end{table}

In every row of Table~\ref{tab:exact}, $\Var\widehat d$ equals $s(d)^2=2d(d-1)/(M-1)$ and $\E[\widehat d\mid F]$ equals $d$, consistent with conditional unbiasedness given the mapping in~\eqref{eq:condmoments}. Failure preserves the mean and increases the scale. For the three parameter sets with $p_F<1$, $\Var(\widehat d\mid F)$ is $3$ times, $4/3$ times, and about $1.16$ times the unconditional variance, respectively. When $p_F=1$, failure is certain and the two variances coincide. After failure, $\widehat d$ equals zero with positive probability. For $(4,3,2)$, this probability is $1/2$. The third central moment vanishes for $d=2$, because the sum in Appendix~\ref{app:proof-moments} requires three distinct keys, and is positive for $d\ge3$. The three rows with $p_F=1$ are unconditional values, so the right skew is a property of $\widehat d$ itself, consistent with the skew described by chi-square quantiles in Section~\ref{sec:measurement}.

The values in the $(4,3,2)$ row come from a duplicate edge pair. Each key independently and uniformly chooses one of the $\binom43=4$ $3$-subsets. When the two keys choose the same subset, each of these three cells has degree two. There is no pure cell, and peeling stops immediately, so $p_F=1/4$. The counts of the three cells then all equal $\varepsilon_1+\varepsilon_2$, and the remaining cells have zero counts. When the two signs are equal, $\sum_vC_v^2=12$ and $\sum_vC_v=\pm6$, so~\eqref{eq:dhat} gives $\widehat d=4$. When they differ, all counts are zero and $\widehat d=0$. The two cases each have probability $1/2$. Thus $\E[\widehat d\mid F]=2=d$ and $\Var(\widehat d\mid F)=4$. This variance is $3$ times the unconditional variance $4/3$, and it also equals the bound $2d(d-1)=4$ in~\eqref{eq:condmoments}: the duplicate edge pair makes $w_1=w_2$, so Cauchy--Schwarz is an equality. Also, $\Pr(\widehat d=0\mid F)=1/2$. This is the smallest instance of the duplicate edge pair structure in Section~\ref{sec:inversion}.

\subsection{An AMS observation independent of failure}\label{app:ams}
Each key enters $k$ cells with the same sign. Thus, under random signs, the normalized count sum equals the sign sum in every sample:
\begin{equation}
\frac1k\sum_vC_v=\sum_{i=1}^d\varepsilon_i.\label{eq:ams}
\end{equation}
The right-hand side depends only on the signs, and its square is the standard AMS second-moment estimator with unit weights\cite{alon1999spacecomplexity}. Rearranging~\eqref{eq:dhat} and~\eqref{eq:raw-energy}, we relate $\widetilde d$ and $\widehat d$ through this AMS term:
\begin{equation}
\widetilde d=\Bigl(1-\frac kM\Bigr)\widehat d+\frac kM\Bigl(\frac1k\sum_vC_v\Bigr)^2.\label{eq:raw-decomp}
\end{equation}

\begin{proposition}[AMS on the same table]\label{prop:ams}
The full distribution of $(\frac1k\sum_vC_v)^2$ is independent of the mapping and hence of ideal failure. Specifically, $\frac1k\sum_vC_v$ takes the value $d-2j$ with probability $2^{-d}\binom dj$, $j=0,\ldots,d$. The squared statistic has mean $d$ and variance $2d(d-1)$, which is $M-1$ times the unconditional variance $s(d)^2$ of $\widehat d$. By~\eqref{eq:raw-decomp}, $\widetilde d$ is a weighted sum of this squared statistic and $\widehat d$, with deterministic weights. Given the mapping, the conditional covariance of the squared statistic with $\widehat d$ is $\frac{4}{k(1-k/M)}\sum_{i<j}(H_{ij}-k^2/M)$.
\end{proposition}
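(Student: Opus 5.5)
The plan is to derive every claim from the identity \eqref{eq:ams} and the sign--mapping independence of Proposition~\ref{prop:composition}. I will take the claims in the order stated.

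\emph{Independence and the distribution.} Each key contributes $\varepsilon_i$ to exactly $k$ cells, so summing \eqref{eq:fields} over $v$ gives $\frac1k\sum_vC_v=\sum_i\varepsilon_i$ identically. The right-hand side is a function of the signs alone. By Proposition~\ref{prop:composition}, the signs are iid uniform and independent of $\mathcal G$, hence of $F$, since $F$ is a function of $\mathcal G$ by Proposition~\ref{prop:decode}. So the full distribution of the square is the same conditionally on $F$, or on any fixed mapping, as unconditionally. If $j$ is the number of negative signs, the sum equals $d-2j$ and $j\sim\mathrm{Bin}(d,1/2)$, which gives the stated law.

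\emph{Moments.} Write $S=\sum_i\varepsilon_i$. Then $\E S^2=d$. For the fourth moment, only index patterns with every index repeated an even number of times survive: $\E S^4=d+3d(d-1)=3d^2-2d$. Hence $\Var S^2=2d^2-2d=2d(d-1)$. Comparing with \eqref{eq:s}, this is $(M-1)s(d)^2$.

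\emph{The decomposition \eqref{eq:raw-decomp}.} This follows by multiplying \eqref{eq:dhat} by $1-k/M$. That gives $(1-k/M)\widehat d=\frac1k\sum_vC_v^2-\frac1{kM}(\sum_vC_v)^2$. Then rewrite the last term as $\frac kM(\frac1k\sum_vC_v)^2$ and add it to both sides to obtain \eqref{eq:raw-decomp}. The weights $1-k/M$ and $k/M$ are deterministic.

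\emph{Conditional covariance.} Fix $\mathcal G$ and use \eqref{eq:rs-expand}, writing $\widehat d-d=a\sum_{i<j}\varepsilon_i\varepsilon_j h_{ij}$ with $a=2/(k(1-k/M))$ and $h_{ij}=H_{ij}-k^2/M$. Also $S^2-d=2\sum_{i<j}\varepsilon_i\varepsilon_j$. The conditional covariance is then
\[
2a\sum_{i<j}\sum_{i'<j'}h_{ij}\,\E[\varepsilon_i\varepsilon_j\varepsilon_{i'}\varepsilon_{j'}].
\]
As in the proof of Proposition~\ref{prop:moments}, the expectation is $1$ if $\{i,j\}=\{i',j'\}$ and $0$ otherwise. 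This leaves $2a\sum_{i<j}h_{ij}=\frac{4}{k(1-k/M)}\sum_{i<j}(H_{ij}-k^2/M)$, as claimed.

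The only delicate point is bookkeeping the cross-term orthogonality in this last step, which is already the same argument as in Appendix~\ref{app:proof-moments}. There is no real obstacle; I expect the most error-prone step to be the fourth-moment count for $\Var S^2$.
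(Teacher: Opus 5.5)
Your proposal is correct and matches the paper's proof essentially step for step. Both use the identity $\frac1k\sum_vC_v=\sum_i\varepsilon_i$ together with sign--mapping independence, the binomial law, the fourth-moment count $3d^2-2d$, direct rearrangement of the definitions for the decomposition, and pairwise orthogonality of distinct sign products to collapse the covariance sum to $\frac{4}{k(1-k/M)}\sum_{i<j}(H_{ij}-k^2/M)$, with your write-up being slightly more explicit on the rearrangement and the diagonal collapse.
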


\begin{proof}
The right-hand side of~\eqref{eq:ams} depends only on the signs and is independent of $\mathcal G$. It is therefore independent of every event that depends only on the mapping. $\sum_i\varepsilon_i=d-2j$ if and only if exactly $j$ signs are negative, with probability $2^{-d}\binom dj$. $\E(\sum_i\varepsilon_i)^2=d$, and the fourth moment is $3d^2-2d$, so the variance is $2d(d-1)$. \eqref{eq:raw-decomp} follows by directly rearranging~\eqref{eq:dhat} and~\eqref{eq:raw-energy}. For the covariance formula, we expand both statistics as quadratic forms in the signs and use the fact that sign products for distinct unordered pairs are uncorrelated. We multiply $(\sum_i\varepsilon_i)^2-d=2\sum_{i<j}\varepsilon_i\varepsilon_j$ by~\eqref{eq:rs-expand} and take expectations pair by pair. Averaging over failure gives $2d(d-1)c_F$, which is generally nonzero.
\end{proof}

This observation connects the design to classical AMS. The count array is a random-sign linear sketch whose sum gives exactly the AMS sketch. The AMS sketch is independent of failure, so its distribution remains unchanged however rare failure is. This independence comes at a cost in precision: the AMS estimator has variance $M-1$ times that of $\widehat d$. The estimator $\widehat d$ replaces the square of a single sum with the sum of squares over the $M$ cells, reducing the variance to $2d(d-1)/(M-1)$. Its correlation with failure requires the post-failure analysis. The covariance in Proposition~\ref{prop:ams} shows that the two estimators are generally correlated after failure, so we must allocate risk separately when combining them.

We verify Proposition~\ref{prop:ams} using the exhaustive enumeration procedure in Appendix~\ref{app:exact} for the four parameter sets $(4,3,2)$, $(4,3,4)$, $(5,3,3)$, and $(6,3,4)$, covering 144,040 states. For every mapping, $(\frac1k\sum_vC_v)^2$ has the same distribution, with mean $d$ and variance $2d(d-1)$.

\section{Hypergraph lemmas}\label{app:graph}
We prove the hypergraph results used in Theorem~\ref{thm:clt} and the discrete limit distribution: small-core counting, the characterization of failure by a duplicate edge pair when $d=o(M)$, the failure structure at fixed subcritical load, and concentration of the overlap graph. These results depend only on the unsigned mapping $\mathcal G$, which consists of $d$ labeled, independent, uniform $k$-subsets, with duplicate edges allowed.

Fix $k\ge3$. Let $K=\binom Mk$ and let $N=\sum_{i<j}\mathbf1_{\{S_i=S_j\}}$ be the number of duplicate edge pairs, with mean $\mu=\E N=\binom d2/K$. Let $Q$ be the planted distribution obtained by choosing a pair of edge labels uniformly and conditioning on the two edges mapping identically. Let $b_k=\lceil3(k-2)/2\rceil$, let $D_v$ be the degree of cell $v$, and let $\mathcal D=\{\max_vD_v\le\lceil\log M\rceil\}$.

\subsection{Small-core counting and duplicate edge pairs}
\begin{lemma}[Small-core counting]\label{lem:smallcore}
A 2-core subgraph with $e$ edges uses at most $v_e=\lfloor ke/2\rfloor$ vertices. For $d\le A_0M$ and $e\le\varepsilon M$ (with $\varepsilon<1/k$), a first-moment bound on the probability that such a subgraph occurs is
\begin{equation*}
B_e=\binom de\binom M{v_e}\Bigl(\binom{v_e}k\Big/\binom Mk\Bigr)^e\le\bigl[C(e/M)^{(k-2)/2}\bigr]^e,
\end{equation*}
where $C$ depends only on $k$ and $A_0$. Keeping the dependence on $d$ gives
\[
B_e\le(C_kd/M)^e(e/M)^{\lceil(k-2)e/2\rceil}.
\]
 For fixed $r\ge0$ and sufficiently small $\varepsilon$, $\sum_{e=3}^{\lfloor\varepsilon M\rfloor}e^rB_e=O(M^{-b_k})$.
\end{lemma}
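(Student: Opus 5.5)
The argument is a first-moment count over small edge sets, done in three steps: a vertex bound for 2-cores, a union bound that yields $B_e$, and a Stirling estimate of $B_e$ followed by summation over $e$.

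\emph{Vertex bound.} In a 2-core subgraph every vertex of its vertex set has degree at least $2$ within the subgraph. Double counting incidences gives $2v\le\sum_v D_v=ke$, so the subgraph uses at most $v_e=\lfloor ke/2\rfloor$ vertices.

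\emph{First-moment bound.} The event that some $e$ edges form a 2-core implies that there exist a set $I$ of $e$ labels and a set $W$ of $v_e$ vertices with $S_i\subseteq W$ for all $i\in I$. If fewer than $v_e$ vertices are used, we enlarge $W$ arbitrarily; this requires $v_e\le M$, which holds since $e\le\varepsilon M$ with $\varepsilon<1/k$. For fixed $I$ and $W$, independence of the edges gives probability $(\binom{v_e}k/\binom Mk)^e$. A union bound over $\binom de\binom M{v_e}$ choices then yields $B_e$.

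\emph{Estimating $B_e$.} We use $\binom de\le(ed/e)^e=(\mathrm e d/e)^e$, $\binom M{v_e}\le(\mathrm eM/v_e)^{v_e}$, and $\binom{v_e}k/\binom Mk\le(v_e/M)^k$, the last because $\prod_{t<k}(v_e-t)/(M-t)\le(v_e/M)^k$ when $v_e\le M$. Combining,
\[
B_e\le\Bigl(\frac{\mathrm e d}{e}\Bigr)^e\Bigl(\frac{\mathrm eM}{v_e}\Bigr)^{v_e}\Bigl(\frac{v_e}{M}\Bigr)^{ke}
=\Bigl(\frac{\mathrm e d}{e}\Bigr)^e\mathrm e^{v_e}\Bigl(\frac{v_e}{M}\Bigr)^{ke-v_e}.
\]
Since $v_e\le ke/2$ and $v_e/M<1$, the exponent satisfies $ke-v_e\ge\lceil ke/2\rceil$. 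Writing $v_e/M\le(k/2)(e/M)$, we get
\[
B_e\le(C'd/e)^e(e/M)^{\lceil ke/2\rceil}\le(C_kd/M)^e(e/M)^{\lceil ke/2\rceil-e}.
\]
Because $\lceil ke/2\rceil-e=\lceil(k-2)e/2\rceil$, this is the bound that keeps the dependence on $d$. Substituting $d\le A_0M$ and $(e/M)^{\lceil(k-2)e/2\rceil}\le(e/M)^{(k-2)e/2}$, valid since $e/M<1$, gives the first form $[C(e/M)^{(k-2)/2}]^e$.

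\emph{Summation.} This is the main obstacle, because the terms must be controlled over the whole range $3\le e\le\varepsilon M$. We split the range at $e_0=\lceil\log^2M\rceil$, or alternatively at $\sqrt M$.

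For $e\le e_0$, we use $B_e\le C^e(e/M)^{\lceil(k-2)e/2\rceil}$. The ratio of consecutive terms is at most $C(e/M)^{(k-2)/2}\cdot O(1)$, which is $o(1)$ uniformly on this range, so the sum is dominated by $e=3$. Since $\lceil 3(k-2)/2\rceil=b_k$, the $e=3$ term is $O(M^{-b_k})$. The polynomial factor $e^r$ is absorbed because the terms decay geometrically.

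For $e_0<e\le\varepsilon M$, each term is at most $[C\varepsilon^{(k-2)/2}]^e$. We choose $\varepsilon$ so that $C\varepsilon^{(k-2)/2}\le 1/2$; this is where $k\ge3$ is used. Then $e^rB_e\le e^r2^{-e}$, and the tail sum is $O(M^r2^{-\log^2M})=o(M^{-b_k})$.

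Combining the two ranges gives $\sum_{e=3}^{\lfloor\varepsilon M\rfloor}e^rB_e=O(M^{-b_k})$.
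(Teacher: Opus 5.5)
Your proof follows the paper's route. You use the same union bound over $e$-sets of labels and $v_e$-sets of cells, the same three binomial estimates (with $\mathrm{e}$ where the paper writes $3$), and the same identity $ke-v_e=\lceil ke/2\rceil$ to get the $d$-dependent bound and then the $[C(e/M)^{(k-2)/2}]^e$ form. Those parts are fine. The summation is where you differ from the paper, and one step there is wrong as written.

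On $3\le e\le e_0$ you claim that consecutive terms of $C^e(e/M)^{\lceil (k-2)e/2\rceil}$ have ratio $C(e/M)^{(k-2)/2}\cdot O(1)=o(1)$. This fails for $k=3$. The exponent $\lceil e/2\rceil$ does not increase when $e$ goes from an odd value to the next even one, so the ratio there is $C\bigl((e+1)/e\bigr)^{(e+1)/2}\ge C$. For example, the $e=4$ term $C^4(4/M)^2$ has the same order $M^{-2}$ as the $e=3$ term. So the sum is not a geometric series from $e=3$, and your justification breaks in the paper's main case $k=3$. For odd $k\ge5$ the smaller increment is $(k-3)/2\ge1$, and for even $k$ there is no ceiling effect, so your argument works in those cases.

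The conclusion still holds, and there are two easy repairs. You can compare terms two steps apart. Alternatively, do what the paper does: keep the ceiling only at $e=3$, which gives $O(M^{-b_k})$. For $e\ge4$, drop the ceiling and use the envelope $[C(e/M)^{a}]^e$ with $a=(k-2)/2$. Its consecutive ratio is at most $C\mathrm{e}^{a}\bigl((e+1)/M\bigr)^{a}\le C\mathrm{e}^{a}(2\varepsilon)^{a}$, uniformly on $4\le e<\varepsilon M$. Choosing $\varepsilon$ small, with the extra factor $\bigl((e+1)/e\bigr)^r\le(5/4)^r$ absorbed, makes the ratio at most $1/2$. The sum from $e=4$ is then $O(M^{-4a})=O(M^{-2(k-2)})$, which is $O(M^{-b_k})$ because $2(k-2)\ge\lceil 3(k-2)/2\rceil$ for $k\ge3$, with equality at $k=3$. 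With this single geometric argument, your split at $\log^2M$ and the separate tail estimate (which is correct) are no longer needed.
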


The proof of Theorem 1 in \cite{gm} uses the same first-moment counting argument to obtain a failure probability $O(t^{-k+2})$. Analogous counting arguments for stopping sets appear in \cite{yw}. We retain the factor in $d$ for Lemma~\ref{lem:plant}, where $d=o(M)$.

\begin{proof}
Fix a set $I$ of $e$ edge labels and a set $V$ of $v_e$ vertices. The event $A_{I,V}=\{S_i\subseteq V,\ i\in I\}$ includes every core on these edge labels contained in $V$. Summing over $I$ and $V$ gives the first-moment bound $B_e$ in the lemma. Using $\binom de\le(3d/e)^e$, $\binom M{v_e}\le(3M/v_e)^{v_e}$, and $\binom{v_e}k/\binom Mk\le(v_e/M)^k$, together with $ke-v_e=\lceil ke/2\rceil$ and $v_e\le ke/2$, we obtain
\[
B_e\le3^{e+v_e}\Bigl(\frac de\Bigr)^e\Bigl(\frac{v_e}M\Bigr)^{\lceil ke/2\rceil}\le\Bigl(\frac{C_kd}M\Bigr)^e\Bigl(\frac eM\Bigr)^{\lceil(k-2)e/2\rceil},
\]
which retains the factor in $d$. Substituting $d\le A_0M$ and dropping the ceiling, since $e/M<1$, gives the first bound. Let $a=(k-2)/2$. Keeping the ceiling at $e=3$ gives $B_3=O(M^{-b_k})$. For $e\ge4$, the ratio of consecutive terms of the envelope $[C(e/M)^a]^e$ is at most $Ce^a((e+1)/M)^a$. For fixed $r\ge0$, choose $\varepsilon$ small enough that the ratio of consecutive terms of $e^rB_e$ is at most $1/2$ on $4\le e<\varepsilon M$, so that
\[
\sum_{e=3}^{\lfloor\varepsilon M\rfloor}e^rB_e=O(M^{-b_k}).
\]
This sum also covers the intermediate range with $e\to\infty$ and $e=o(M)$.
\end{proof}

The following lemma quantifies Proposition~\ref{prop:plant}. When $d=o(M)$, failure is almost always caused by a duplicate edge pair, so the failure-conditioned mapping is close to the planted distribution $Q$ in TV. Duplicate edge pairs are the bottleneck of IBLT failure, as identified in \cite{gm}. The density of $Q$ with respect to the original distribution is $N/\mu$, corresponding to size biasing by $N$ \cite{arratia2019sizebias}.

\begin{lemma}[Failure is determined by a duplicate edge pair when $d=o(M)$]\label{lem:plant}
Let $2\le d=o(M)$, regardless of whether $d\to\infty$. Then
\begin{align*}
\Pr(F\setminus\{N\ge1\})&\le C_k\Bigl(\frac{d^3}{M^{b_k+3}}+\frac{d^4}{M^{2k}}\Bigr)=o(\mu),\\
\bigl\|\mathcal L(\mathcal G\mid F)-Q\bigr\|_{TV}&\le C_k\Bigl\{\mu+dM^{k-b_k-3}+\frac{d^2}{M^k}\Bigr\}\to0.
\end{align*}
\end{lemma}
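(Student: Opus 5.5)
We bound $\Pr(F\setminus\{N\ge1\})$ by splitting on the size $e$ of the smallest nonempty 2-core. A single edge never forms a 2-core, and two edges form one exactly when they are a duplicate pair. So on $F\cap\{N=0\}$ the core has $e\ge3$ edges. We split into three ranges.

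\textbf{Small cores.} For $3\le e\le\varepsilon M$ we use the $d$-retaining bound of Lemma~\ref{lem:smallcore}, $B_e\le(C_kd/M)^e(e/M)^{\lceil(k-2)e/2\rceil}$. At $e=3$ this is $O_k(d^3M^{-3-b_k})$. Since $d=o(M)$, the ratio of consecutive terms tends to zero, so the sum over this range is dominated by the $e=3$ term.

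\textbf{Linear-sized cores.} For cores with more than $\varepsilon M$ edges we need a separate estimate. When $d=o(M)$ the load is far below $\alpha_c$, and the known exponential bound for linear-sized cores at subcritical load (\cite{kim2008poisson,dn08}) makes this range $e^{-\Omega(M)}$, which is negligible. Alternatively, a linear core needs at least $\varepsilon M$ edges, which is impossible once $d<\varepsilon M$. This second observation is simpler and is the one we will use.

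\textbf{The $d^4/M^{2k}$ term.} This term covers configurations in which two disjoint duplicate-type structures appear, i.e. a correction when the naive inclusion is not exact. If it does not arise from the core counting, we keep it only for the TV step below.

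Finally, $o(\mu)$ follows from $\mu\asymp d^2M^{-k}$. Indeed, $d^3M^{-3-b_k}/\mu\asymp dM^{k-3-b_k}$, and $b_k=\lceil3(k-2)/2\rceil\ge k-3$ for $k\ge3$ (check $k=3$: $b_3=2$). Combined with $d=o(M)$, this ratio is $o(1)$, and $d^4M^{-2k}/\mu\asymp\mu\to0$.

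\textbf{TV bound.} We compare three laws: $P_F=\mathcal L(\mathcal G\mid F)$, $P_N=\mathcal L(\mathcal G\mid N\ge1)$, and $Q$, whose density with respect to the base law is $N/\mu$.

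First, $\|P_F-P_N\|_{TV}$ is controlled by the symmetric difference $F\triangle\{N\ge1\}$ relative to the probabilities. Since $\{N\ge1\}\subseteq F$ (a duplicate pair is itself a 2-core), this reduces to $\Pr(F\setminus\{N\ge1\})/\Pr(F)$. The first bound then gives a contribution of order $dM^{k-b_k-3}+d^2/M^k$.

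Second, $\|P_N-Q\|_{TV}$. The density of $Q$ relative to $P_N$ is $N\Pr(N\ge1)/\mu$, so
\[
\|P_N-Q\|_{TV}=\tfrac12\E\bigl|N/\mu-\mathbf1_{N\ge1}/\Pr(N\ge1)\bigr|.
\]
This is bounded by $O\bigl((\E N(N-1))/\mu+(\mu-\Pr(N\ge1))/\mu\bigr)$. Bonferroni gives $\mu-\Pr(N\ge1)\le\E\binom N2$. Counting pairs of duplicate pairs, two disjoint pairs contribute $O(\mu^2)$ and pairs sharing a label contribute $O(d^3/K^2)$. Dividing by $\mu$ gives $O(\mu+d/K)$, and $d/K\le d^2/M^k$ up to constants.

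Both pieces tend to zero because $\mu\to0$ for $k\ge3$ and $d=o(M)$. The main obstacle is the precise bookkeeping of the error exponents, in particular confirming that the $e=3$ core term yields exactly $dM^{k-b_k-3}$ after division by $\mu$.
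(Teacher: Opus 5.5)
Your proposal follows the paper's proof. It does a first-moment count of small stopping sets with the $d$-retaining form of Lemma~\ref{lem:smallcore}, and rules out linear-sized cores simply because a core has at most $d<\varepsilon M$ edges. It then applies the triangle inequality through $\mathcal L(\mathcal G\mid N\ge1)$, using $\{N\ge1\}\subseteq F$, the size-biased density $N/\mu$ of $Q$, and Bonferroni. The following fixes leave the structure unchanged.

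\textbf{The $o(\mu)$ step fails as written.} The inequality $b_k\ge k-3$ only gives $dM^{k-3-b_k}\le d$, which is not $o(1)$. You need $b_k\ge k-2$. In fact $b_k\ge k-1$ for every $k\ge3$, as your own check $b_3=2$ shows. Hence $k-3-b_k\le-2$ and the ratio is $O(d/M^2)$.

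\textbf{Your reading of the $d^4/M^{2k}$ term is wrong.} On $F\setminus\{N\ge1\}$ there are no duplicate pairs at all, so it cannot come from ``two disjoint duplicate-type structures.'' In the paper it is just the $e\ge4$ part of the core sum, bounded by the $e=4$ value of the unceiled envelope $(C_kd/M)^e(e/M)^{(k-2)e/2}$. Your own ceilinged ratio argument already gives something at least as strong. The consecutive ratios are $O_k(d/M)$ uniformly in $e$, because $\lceil(k-2)e/2\rceil$ is nondecreasing and $(e+1)/M\le1$. So $\sum_{e\ge3}B_e=O_k(d^3M^{-3-b_k})$, and you can simply drop the extra term rather than carry it into the TV step.

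\textbf{Make the lower bound on $\Pr(F)$ explicit} when you divide by it. You need $\Pr(F)\ge\Pr(N\ge1)\ge\mu-\E\binom N2\ge\mu/2$ eventually.

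\textbf{Optional shortcut.} Your split of $\E N(N-1)$ into disjoint and overlapping pairs is fine. The paper's route is shorter: any two distinct pairs are simultaneously duplicated with probability exactly $K^{-2}$. This gives $\E N(N-1)\le\mu^2$ and hence $\|\mathcal L(\mathcal G\mid N\ge1)-Q\|_{TV}\le\mu/2$ directly.
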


\begin{proof}
A single edge cannot form a core. Two edges form a core if and only if they map identically. Hence $\{N\ge1\}\subseteq F$, and the core has at least three edges on $F\setminus\{N\ge1\}$. Since $d=o(M)$, eventually $d\le\varepsilon M$. Thus every core of size $e\le d$ lies within the range of Lemma~\ref{lem:smallcore}, and we use the bound that retains the factor in $d$. The $e=3$ term is $O_k(d^3M^{-3-b_k})$. For $e\ge4$, we use the envelope without the ceiling, $(C_kd/M)^e(e/M)^{(k-2)e/2}$. Its ratio of consecutive terms is at most $C_k(d/M)((e+1)/M)^{(k-2)/2}$, which tends to zero uniformly over $e\le d=o(M)$. Thus $\sum_{e=4}^dB_e\le C_kd^4/M^{2k}$. Adding the two contributions gives the first bound. Since $\mu\asymp_kd^2/M^k$ and $b_k\ge k-1$, the ratio of the first bound to $\mu$ is $O_k(dM^{k-b_k-3}+d^2/M^k)\to0$.

A specified pair of edges maps identically with probability $1/K$, and two distinct edge pairs are simultaneously duplicated with probability $K^{-2}$: if the pairs share an edge, three edges map identically, and otherwise the two pairs are independent. Hence $\E[N(N-1)]\le\mu^2$, and $\mathbf1_{\{N\ge1\}}\ge N-N(N-1)/2$ gives $\mu-\mu^2/2\le\Pr(N\ge1)\le\mu$. The density of $Q$ with respect to the original distribution is exactly $N/\mu$. By the triangle inequality,
\[
\bigl\|\mathcal L(\mathcal G\mid N\ge1)-Q\bigr\|_{TV}\le\frac12\E\Bigl|\frac{\mathbf1_{\{N\ge1\}}}{\Pr(N\ge1)}-\frac{\mathbf1_{\{N\ge1\}}}{\mu}\Bigr|+\frac1{2\mu}\E(N-\mathbf1_{\{N\ge1\}})=\frac{\mu-\Pr(N\ge1)}{\mu}\le\frac\mu2.
\]
Since $\{N\ge1\}\subseteq F$, $\|\mathcal L(\mathcal G\mid F)-\mathcal L(\mathcal G\mid N\ge1)\|_{TV}=\Pr(F\setminus\{N\ge1\})/\Pr(F)\le\Pr(F\setminus\{N\ge1\})/\Pr(N\ge1)$. Since $\mu\to0$, eventually $\Pr(N\ge1)\ge\mu/2$, and the first bound makes this term $O_k(dM^{k-b_k-3}+d^2/M^k)$. Adding the two bounds gives the second bound. The proof uses only small-core counting and therefore applies to all $2\le d=o(M)$.
\end{proof}

\subsection{Failure structure at fixed subcritical load}
\begin{lemma}[Fixed subcritical load]\label{lem:propL}
Let $d/M\to\alpha\in(0,\alpha_c)$. Then $\Pr(F\setminus\{|\mathrm{core}|=2\})=O(M^{-b_k})$, $\Pr(F)=\mu\{1+O(M^{-\delta_k})\}$, and $\|\mathcal L(\mathcal G\mid F)-Q\|_{TV}=O(M^{-\delta_k})$, where $|\mathrm{core}|$ is the number of edges of the 2-core and $\delta_k=\min\{k-2,\,b_k-(k-2)\}\ge1$.
\end{lemma}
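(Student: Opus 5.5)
We split the failure event according to the size of the 2-core into three parts: cores with exactly two edges, cores with between $3$ and $\lfloor\varepsilon M\rfloor$ edges, and cores with more than $\varepsilon M$ edges. A two-edge core is exactly a duplicate edge pair. So $\{|\mathrm{core}|=2\}\subseteq\{N\ge1\}$, and the proof of Lemma~\ref{lem:plant} gives $\mu-\mu^2/2\le\Pr(N\ge1)\le\mu$ without any assumption on $d$. Here $\mu\asymp_k M^{2-k}$ because $d\asymp M$.

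The middle range is handled by Lemma~\ref{lem:smallcore} with $r=0$ and $A_0$ slightly larger than $\alpha$. Choosing $\varepsilon$ small enough, it bounds the probability of a core with $3\le e\le\varepsilon M$ edges by $O(M^{-b_k})$.

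The large-core range is where subcriticality enters. For $\alpha<\alpha_c$, the probability of a linear-sized 2-core is exponentially small, by the results cited in the related work~\cite{kim2008poisson,dn08} and the Molloy analysis~\cite{molloy}. I would invoke this directly rather than reprove it. If a self-contained argument is preferred, the fallback is a differential-equation or branching-process estimate showing that the peeling process on $\mathcal G$ reaches a core of size at most $\varepsilon M$ with probability $1-e^{-\Omega(M)}$. This is the step I expect to be the main obstacle, since it is the only input not supplied by first-moment counting.

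Combining the three ranges, $\Pr(F\setminus\{|\mathrm{core}|=2\})=O(M^{-b_k})+e^{-\Omega(M)}=O(M^{-b_k})$. We also have to remove the overlap between $\{N\ge1\}$ and larger cores. The core can contain a duplicate pair together with other edges, but such samples already lie in the set just bounded. Hence
\[
\Pr(F)=\Pr(N\ge1)+O(M^{-b_k})=\mu\{1+O(\mu)+O(M^{k-2-b_k})\}.
\]
Since $\mu=O(M^{-(k-2)})$, this yields the claimed error $O(M^{-\delta_k})$ with $\delta_k=\min\{k-2,\,b_k-(k-2)\}$. The inequality $\delta_k\ge1$ follows from $k\ge3$ and $b_k=\lceil 3(k-2)/2\rceil\ge k-1$.

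For the total variation statement we repeat the two-step triangle inequality from Lemma~\ref{lem:plant}. First, $\|\mathcal L(\mathcal G\mid N\ge1)-Q\|_{TV}\le\mu/2$, because the density of $Q$ is $N/\mu$; this computation never used $d=o(M)$. Second, $\{N\ge1\}\subseteq F$ gives
\[
\|\mathcal L(\mathcal G\mid F)-\mathcal L(\mathcal G\mid N\ge1)\|_{TV}\le\Pr(F\setminus\{N\ge1\})/\Pr(N\ge1)=O(M^{-b_k}/\mu)=O(M^{k-2-b_k}).
\]
Adding the two bounds gives $O(M^{-\delta_k})$.
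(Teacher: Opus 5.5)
Your decomposition and everything downstream of the linear-core step match the paper's proof. That includes Lemma~\ref{lem:smallcore} for $3\le e\le\varepsilon M$, the estimate $\mu-\mu^2/2\le\Pr(N\ge1)\le\mu$, the inclusion $F\setminus\{N\ge1\}\subseteq F\setminus\{|\mathrm{core}|=2\}$, and the two-step total variation bound.

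The gap is the bound on cores with at least $\varepsilon M$ edges, which you import from \cite{kim2008poisson,dn08}. Those results are proved for other models: the binomial hypergraph $H(n,p;k)$, and a model with a prescribed bounded degree distribution. Here $\mathcal G$ consists of $d$ i.i.d.\ uniform $k$-subsets with repeats allowed. The natural transfer is to condition on $N=0$. There $\mathcal G$ is a uniform simple hypergraph with $d$ edges, and monotone coupling to $H(M,p;k)$ applies. But this leaves $\{N\ge1\}$ uncontrolled, and $\Pr(N\ge1)\asymp\mu\asymp M^{2-k}$ exceeds the target $M^{-b_k}$ by a factor of at least $M$, since $b_k\ge k-1$. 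To prove the first claim you would have to show that a repeated edge, whose $k$ cells can never become pure, traps a core of at least $\varepsilon M$ edges only with conditional probability $O(M^{k-2-b_k})$. None of your citations gives this, and a threshold statement holding only with probability $1-o(1)$ would not suffice either. Your second and third claims do survive, because they use only $\Pr(F\setminus\{N\ge1\})$. That event lies inside $\{N=0\}$, where the transfer to the simple model is lossless.

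The paper avoids the model mismatch by proving the tail bound directly for $\mathcal G$:
\begin{itemize}
\item Let $W_T$ be the number of edges whose incoming messages are all $1$ after $T$ rounds of message passing started from all ones. Then $|\mathrm{core}|\le W_T$ holds deterministically on every multigraph, repeated edges included.
\item Local convergence to a Poisson branching tree gives $\E W_T/M\to\alpha q_T^k$, where $q_0=1$ and $q_{t+1}=1-\exp(-k\alpha q_t^{k-1})$. For $\alpha<\alpha_c$ this limit can be made smaller than $\varepsilon/4$ by choosing $T$.
\item A typical bounded-differences inequality, applied on the event that every cell degree is at most $\lceil\log M\rceil$, then gives $\Pr(W_T\ge\varepsilon M)=o(M^{-\beta})$ for every fixed $\beta$.
\end{itemize}
Superpolynomial decay is all that is needed. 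The $e^{-\Omega(M)}$ in your fallback is more than necessary. It is also harder to reach by this route, since the degree-truncation event itself fails with a probability that is only superpolynomially small.
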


\begin{proof}
We first show that for every fixed $\varepsilon>0$ and $\beta>0$, $\Pr(|\mathrm{core}|\ge\varepsilon M)=o(M^{-\beta})$. This is a variant of a known result. For the binomial hypergraph $H(n,p;k)$, Kim \cite{kim2008poisson}, Theorem 1.7, proves that the probability that the 2-core has at least $\delta n$ vertices is exponentially small at subcritical load. The same conclusion is stated in \cite{gm}, citing \cite{dn08}, whose model assumes a given bounded degree distribution. Our mapping allows duplicate edges, which occur with polynomially small probability at linear load. We therefore prove the required bound directly using fixed-depth message passing. We follow the finite-round neighborhood analysis and concentration argument for LDPC codes \cite{richardson2001capacity}, treating cell degrees separately because they have no fixed upper bound in our model. Let $W_T$ be the number of edges whose incoming messages are all $1$ at round $T$. Then $|\mathrm{core}|\le W_T$ on every graph. Local convergence to a Poisson branching tree gives $\E W_T/M\to\alpha q_T^k$, where $q_0=1$ and $q_{t+1}=1-\exp(-k\alpha q_t^{k-1})$. When $\alpha<\alpha_c$, this recursion has no positive fixed point, so we can choose $T$ such that $\alpha q_T^k<\varepsilon/4$. On $\mathcal D$, changing $h$ edge coordinates changes $W_T$ by at most $O((k\log M)^T)h$, because the state of each edge depends only on edges within distance $T$. By the typical bounded differences inequality \cite{combes2015mcdiarmid}, $\Pr(W_T\ge\varepsilon M)\le\Pr(\mathcal D^c)+\exp\{-cM/(k\log M)^{2T}\}=o(M^{-\beta})$.

On $F\setminus\{|\mathrm{core}|=2\}$, the actual core has at least three edges. Lemma~\ref{lem:smallcore} controls the part of size below $\varepsilon M$, and the linear core bound above controls the rest, giving $\Pr(F\setminus\{|\mathrm{core}|=2\})=O(M^{-b_k})$. We complete the proof as in the second paragraph of the proof of Lemma~\ref{lem:plant}. The Bonferroni inequality gives $\Pr(F)=\mu+O(\mu^2+M^{-b_k})$. We also have $\|\mathcal L(\mathcal G\mid N\ge1)-Q\|_{TV}\le\mu/2$. Since $\{|\mathrm{core}|=2\}\subseteq\{N\ge1\}\subseteq F$, we have $F\setminus\{N\ge1\}\subseteq F\setminus\{|\mathrm{core}|=2\}$, and hence $\|\mathcal L(\mathcal G\mid F)-\mathcal L(\mathcal G\mid N\ge1)\|_{TV}\le\Pr(F\setminus\{N\ge1\})/\Pr(N\ge1)\le\Pr(F\setminus\{|\mathrm{core}|=2\})/\Pr(N\ge1)$. Using $\mu\asymp_kM^{2-k}$, the two terms combine to give $O(M^{-\delta_k})$. This total variation bound applies to the entire labeled mapping.
\end{proof}

\subsection{Superpolynomial concentration of the overlap graph}
\begin{lemma}[Superpolynomial graph concentration]\label{lem:mdconc}\label{lem:conc}
Fix $k\ge3$, and let $d/M$ lie in a fixed compact interval of $(0,\infty)$. Write $\Omega=\sum_{i<j}H_{ij}$, $\Omega_2=\sum_{i<j}H_{ij}^2$, $\ell_i=\sum_{j\ne i}H_{ij}$, and $\Omega_\ell=\sum_i\ell_i^2$. There is a deterministic set of graphs $\mathcal G_M$, called good graphs. Its complement has probability smaller than any fixed negative power of $M$. On a good graph, all three statistics deviate from their exact means by at most $O(\sqrt M(\log M)^4)$, and
\begin{equation*}
\Omega_2=\binom d2\left\{\frac{k^2}M+\frac{k^2(k-1)^2}{M(M-1)}\right\}+O(\sqrt M(\log M)^4),\qquad
\max_i\ell_i\le k\lceil\log M\rceil.
\end{equation*}
\end{lemma}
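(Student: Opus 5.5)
The approach is to compute the exact means by a direct hypergeometric calculation. Concentration then comes from a typical bounded differences inequality\cite{warnke2016typical,combes2015mcdiarmid}, applied to the independent edge coordinates $S_1,\dots,S_d$. The Lipschitz constant is controlled on the max-degree event $\mathcal D=\{\max_vD_v\le\lceil\log M\rceil\}$. The good set $\mathcal G_M$ will be $\mathcal D$ intersected with the four events that $\Omega,\Omega_2,\Omega_\ell$ and the $\ell_i$ lie within $\sqrt M(\log M)^4$ of their exact means.

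\emph{Step 1: exact means and the degree bound.} For $i\ne j$, $H_{ij}$ is hypergeometric with mean $k^2/M$ and variance $k^2(1-k/M)^2/(M-1)$. Hence
\[
\E H_{ij}^2=\frac{k^2(M-k)^2}{M^2(M-1)}+\frac{k^4}{M^2}.
\]
A short algebraic check shows this equals $k^2/M+k^2(k-1)^2/(M(M-1))$, which gives the displayed main term for $\E\Omega_2$. The means of $\Omega$, $\ell_i$ and $\Omega_\ell$ are computed similarly; for $\Omega_\ell$ this requires moments of overlaps among three edges, but only exactness is needed, not a closed form. Each $D_v$ is $\mathrm{Bin}(d,k/M)$ with mean at most $kA_0$. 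So
\[
\Pr(\mathcal D^c)\le M\Pr(D_1\ge\log M)\le M(ekA_0/\log M)^{\log M},
\]
which is smaller than every fixed power of $M$. The bound on $\max_i\ell_i$ is deterministic on $\mathcal D$: $\ell_i=\sum_{v\in S_i}(D_v-1)\le k\lceil\log M\rceil$.

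\emph{Step 2: Lipschitz constants on $\mathcal D$.} The statistics are written in terms of cell degrees: $\Omega=\sum_v\binom{D_v}2$, $\ell_i=\sum_{v\in S_i}(D_v-1)$ and $\Omega_\ell=\sum_i\ell_i^2$. For $\Omega_2$, expand $\sum_{i<j}H_{ij}^2$ as $\Omega$ plus twice the number of (edge pair, cell pair) incidences. Resampling one coordinate $S_i$ affects only the $O(k\log M)$ edges meeting the old or new $S_i$. Therefore, on configurations in $\mathcal D$ (and its one-coordinate neighbours, with degree bound $\lceil\log M\rceil+1$), $\Omega$ and $\Omega_2$ change by $O_k(\log M)$. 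Each $\ell_j$ changes by $O_k(\log M)$, and $\Omega_\ell$ changes by $O_k((\log M)^2)$, since $O(k\log M)$ terms $\ell_j^2$ change, each by $O(k\cdot k\log M)$. Off $\mathcal D$ we use the crude worst-case bound $O(d^2k^2)$.

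\emph{Step 3: the tail bound.} The typical bounded differences inequality gives tails of the form
\[
2\exp\!\Bigl\{-\frac{ct^2}{d\,(\log M)^4}\Bigr\}+\mathrm{poly}(M)\Pr(\mathcal D^c).
\]
It also bounds the shift between the true mean and the mean of the truncated function by $\mathrm{poly}(M)\Pr(\mathcal D^c)=o(1)$. With $t=\sqrt M(\log M)^4$ and $d=\Theta(M)$ the exponent is $-c(\log M)^4$, which is superpolynomially small. A union bound over the $d$ statistics $\ell_i$ costs only a factor $M$. This defines $\mathcal G_M$ with complement of probability below any fixed negative power of $M$.

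The main obstacle is Step 3's technical form. One must confirm that the chosen version of the typical bounded differences inequality (Warnke's, with a good event $\Gamma=\mathcal D$ and a crude bound off it) is stated for independent but non-binary coordinates. One must also check that its "bad" contribution is polynomial times $\Pr(\mathcal D^c)$. If it is not, I would first apply ordinary McDiarmid to a truncated version, replacing $D_v$ by $\min(D_v,\lceil\log M\rceil+1)$, which has the same Lipschitz constants everywhere. I would then note that the truncation agrees with the original on $\mathcal D$.
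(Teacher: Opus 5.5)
Your proposal is correct and follows essentially the same route as the paper: exact hypergeometric means, restriction to the max-degree event $\mathcal D$, Lipschitz constants of order $O((\log M)^2)$ on $\mathcal D$, and the typical bounded differences inequality of Warnke/Combes at deviation $\sqrt M(\log M)^4$, which gives tails $e^{-c(\log M)^4}$. The only difference is in the form of the inequality: the paper uses Combes' version, a Lipschitz extension from $\mathcal D$ followed by McDiarmid, which needs Lipschitz control between arbitrary pairs in $\mathcal D$ and gets it from an edge-by-edge replacement path with intermediate degrees at most $2\lceil\log M\rceil$. Your use of Warnke's one-step condition with a crude bound off $\mathcal D$ is equally valid, because that theorem allows arbitrary independent coordinates and charges only $\Pr(\mathcal D^c)\sum_k\gamma_k^{-1}$ for the bad event, so the concern you raise resolves in your favour.
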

Intuitively, a good graph is a typical realization of the random mapping: each key intersects only a logarithmic number of other keys, and the aggregate overlap quantities are all close to their means. Graphs outside $\mathcal G_M$ are called \emph{bad graphs}, and their probability is superpolynomially small. Appendix~\ref{app:clt} applies a CLT to the random signs on good graphs, using precisely the fact that no single key can dominate $\sum_{i<j}H_{ij}\varepsilon_i\varepsilon_j$.

\begin{proof}
\emph{Deterministic means.} For the overlap $H$ of independent uniform $k$-subsets, write $h_1=\E H=k^2/M$ and $h_2=\E H^2=k^2/M+k^2(k-1)^2/[M(M-1)]$. The exact means are
\[
\E\Omega=\tbinom d2h_1,\qquad\E\Omega_2=\tbinom d2h_2,\qquad\E\Omega_\ell=d\{(d-1)h_2+(d-1)(d-2)h_1^2\}.
\]
The last identity holds because, given $S_i$, the $H_{ij}$ are independent and their conditional first and second moments do not depend on $S_i$.

\emph{Degree truncation and the Lipschitz bound.} Since $D_v\sim\mathrm{Bin}(d,k/M)$, we have $\Pr(\mathcal D^c)\le M(eC/\lceil\log M\rceil)^{\lceil\log M\rceil}$, which is superpolynomially small. For two graphs in $\mathcal D$, we successively replace each edge that differs by the edge with the same label in the other graph. The incidence set of every intermediate graph lies in the union of the two endpoint graphs, so every vertex has degree at most $2\lceil\log M\rceil$. Replacing one edge in such an intermediate graph affects only the $O(\log M)$ edge labels overlapping the old or new edge, and each $H_{ij}\le k$, so both $\Omega$ and $\Omega_2$ change by $O(\log M)$. For the replaced edge, $\ell_i=O(\log M)$ and its square changes by $O((\log M)^2)$; the $\ell_j$ values of the other $O(\log M)$ affected labels change by $O(1)$ each, and their squares by $O(\log M)$. Thus the total change in $\Omega_\ell$ is $O((\log M)^2)$. All three statistics are therefore $C(\log M)^2$-Lipschitz on $\mathcal D$ with respect to the Hamming distance over edge labels.

\emph{Concentration.} The typical bounded differences inequality \cite{warnke2016typical} gives concentration for functions that are Hamming Lipschitz on a high-probability set; we use the form due to Combes \cite[Proposition~2]{combes2015mcdiarmid}. We take a Lipschitz extension of $f$ from $\mathcal D$, truncate it to a deterministic nonnegative polynomial range, and apply McDiarmid's inequality to the independent edges. The bad set has superpolynomially small probability. Both $f$ and its truncated extension are at most $O(M^3)$, so the change in expectation is also superpolynomially small. Hence
\[
\Pr\bigl(|f-\E f|>C\sqrt M(\log M)^4\bigr)\le\Pr(\mathcal D^c)+2e^{-c(\log M)^4},\qquad f=\Omega,\Omega_2,\Omega_\ell.
\]

We take $\mathcal G_M$ to be the intersection of $\mathcal D$ with the three deviation events, which gives the stated concentration bound. Since $\ell_i\le k\lceil\log M\rceil$, the lemma follows.
\end{proof}

\section{The Gaussian limit after failure and the Gaussian rule}\label{app:clt}
We prove Theorem~\ref{thm:clt} and Corollary~\ref{cor:gauss}. We use the notation $D_v$, $\ell_i$, and $\Omega_2$ from Appendix~\ref{app:graph}. We first prove a Gaussian limit for the quadratic form in the random signs on a deterministic good graph, uniformly over good graphs (Lemma~\ref{lem:cumulant}). We then show that good graphs retain almost all probability after conditioning on failure. For the $d=o(M)$ and overload regimes, we use the Gaussian limit under deterministic coefficients (Lemma~\ref{lem:plantclt}). This limit is uniform in the coefficients and holds under both the random mapping and the planted distribution. Finally, we consider the possible limits of $d/M$ and combine the cases through a subsequence argument.

\subsection{CLT for the random signs on a good graph}
\begin{lemma}\label{lem:cumulant}
Fix $k\ge3$. Let a deterministic sequence of graphs satisfy $\Omega_2/M\to\beta\in(0,\infty)$ and $\max_i\ell_i=O(\log M)$. Then $\Omega_2^{-1/2}\sum_{i<j}H_{ij}\varepsilon_i\varepsilon_j\Rightarrow\mathcal N(0,1)$, uniformly over graphs satisfying the same constant bounds.
\end{lemma}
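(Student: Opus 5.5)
We would prove Lemma~\ref{lem:cumulant} by viewing $W=\sum_{i<j}H_{ij}\varepsilon_i\varepsilon_j$ as a homogeneous degree-two Rademacher chaos. Two routes are available: a martingale CLT along the key index, or a fourth-moment criterion of de~Jong type. We commit to the martingale route, since it gives uniformity over graphs through explicit bounds. Given the graph, $\Var W=\Omega_2$ exactly, by the pairwise orthogonality of distinct sign products (as in the proof of Proposition~\ref{prop:moments}). Order the keys $1,\dots,d$ and write $W=\sum_{j}\xi_j$ with
\[
\xi_j=\varepsilon_j Y_j,\qquad Y_j=\sum_{i<j}H_{ij}\varepsilon_i .
\]
Then $(\xi_j)$ is a martingale difference sequence with respect to $\mathcal F_j=\sigma(\varepsilon_1,\dots,\varepsilon_j)$, because $\E[\varepsilon_j\mid\mathcal F_{j-1}]=0$.

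The plan is to verify the two conditions of the martingale CLT (e.g.\ Hall--Heyde) after normalizing by $\Omega_2^{1/2}$.

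\emph{Conditional Lindeberg.} Since $|Y_j|\le\ell_j=O(\log M)$ deterministically, we have $|\xi_j|/\Omega_2^{1/2}=O(\log M/\sqrt M)\to0$ uniformly. Lindeberg then holds trivially.

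\emph{Conditional variance.} We need
\[
V=\Omega_2^{-1}\sum_j Y_j^2\to1
\]
in probability. Since $\E V=1$, it suffices to show $\Var(\sum_j Y_j^2)=o(\Omega_2^2)=o(M^2)$. Expanding,
\[
\sum_j Y_j^2-\E\sum_j Y_j^2=2\sum_j\sum_{i<i'<j}H_{ij}H_{i'j}\varepsilon_i\varepsilon_{i'}=2\sum_{i<i'}a_{ii'}\varepsilon_i\varepsilon_{i'},
\qquad a_{ii'}=\sum_{j>i'}H_{ij}H_{i'j}.
\]
This is again an orthogonal chaos, so its variance equals $4\sum_{i<i'}a_{ii'}^2$. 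We bound this sum by
\[
\max_{i,i'} a_{ii'}\cdot\sum_{i,i'}a_{ii'}\le\Bigl(\max_i\ell_i\Bigr)\cdot\sum_j\ell_j^2 .
\]
The first factor is $O(\log M)$, since $a_{ii'}\le k\sum_j H_{ij}\le k\ell_i$. For the second factor, $\sum_j\ell_j^2\le(\max_j\ell_j)\sum_j\ell_j=O(\log M)\cdot2\Omega$. Finally $\Omega\le\Omega_2=O(M)$, because $H_{ij}\le H_{ij}^2$ for integers. The whole variance is therefore $O(M(\log M)^2)=o(M^2)$.

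This step is the main obstacle. It is the only place where the graph structure enters, and the bound must depend only on $\max_i\ell_i$ and $\Omega_2$ so that it holds uniformly.

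\emph{Uniformity.} For uniformity over the class of graphs with the same constants, we use a quantitative martingale CLT (e.g.\ Heyde--Brown or Bolthausen). It bounds the Kolmogorov distance by a function of
\[
\E|V-1|^2+\sum_j\E|\xi_j/\Omega_2^{1/2}|^4 .
\]
Both terms are controlled by the deterministic quantities above. The first is $O((\log M)^2/M)$ after using $\Omega_2\ge\beta M/2$ for large $M$. The second is $O((\log M)^2/M)$, since $\sum_j\E\xi_j^4\le\max_j\ell_j^2\sum_j\E Y_j^2$. Hence the supremum over the class of the Kolmogorov distance to $\Phi$ tends to zero. The hypothesis $\Omega_2/M\to\beta>0$ is used only to guarantee this lower bound on the normalizer. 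If needed, a cumulant argument on the dependency graph in the style of \cite{janson1988cumulant} gives the same conclusion and serves as a fallback.
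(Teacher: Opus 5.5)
Your proof is correct, but it follows a different route from the paper's.

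\emph{The paper's route.} It views $\sum_{i<j}H_{ij}\varepsilon_i\varepsilon_j$ as a sum of at most $\Omega_2=O(M)$ bounded summands. Two summands are adjacent in the dependency graph when they share a key, so the graph has maximum degree $O(\log M)$. It then applies Janson's cumulant bound: every cumulant of order $r\ge3$ is a sum over $O_r(M(\log M)^{r-1})$ connected groups, so it is $O_r(M^{1-r/2}(\log M)^{r-1})$ after normalization. The method of moments gives the Gaussian limit, and uniformity follows from a subsequence-contradiction argument, with no rate.

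\emph{Your route.} You decompose along the key order into martingale differences $\xi_j=\varepsilon_jY_j$. You reduce the conditional-variance condition to a second Rademacher chaos with coefficients $a_{ii'}=\sum_{j>i'}H_{ij}H_{i'j}$. You bound its variance by $O(\max_i\ell_i\cdot\sum_j\ell_j^2)=O((\log M)^2\,\Omega)$; the factor $k$ dropped in your display is harmless. Feeding this and the fourth-moment sum into the Heyde--Brown inequality gives a Kolmogorov bound of order $\bigl((\max_i\ell_i)^2/\Omega_2\bigr)^{1/5}=O\bigl(((\log M)^2/M)^{1/5}\bigr)$. This bound depends only on the constants of the class.

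\emph{What each buys.} In your route, uniformity over graphs comes for free, and you get an explicit rate under the weaker condition $(\max_i\ell_i)^2=o(\Omega_2)$. This is the kind of computable error constant the paper lists as future work. The paper's route is shorter given Janson's theorem and also yields convergence of all moments, but it is purely qualitative.
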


\begin{proof}
There are at most $\Omega_2=O(M)$ nonzero summands, each bounded by $k$. Two summands can be dependent only if they share a key, so the dependency graph has maximum degree at most $2\max_i\ell_i=O(\log M)$. Distinct summands have zero covariance (Appendix~\ref{app:encoding}), so the sum has variance exactly $\Omega_2$. For any fixed order $r\ge3$, a joint cumulant can be nonzero only if the summands form a connected group in the dependency graph \cite{janson1988cumulant}. There are $O_r(M(\log M)^{r-1})$ connected groups, each with a bounded cumulant. Dividing by $\Omega_2^{r/2}\asymp M^{r/2}$ gives $O_r(M^{1-r/2}(\log M)^{r-1})\to0$. The first cumulant is zero and the second is one. The standard Gaussian is determined by its moments, so the method of moments gives convergence. These bounds depend only on the range of $\beta$ and the constant in $\max_i\ell_i$. They are therefore uniform over graphs satisfying the same constant bounds. If CDF convergence were not uniform, we could extract a sequence of deterministic good graphs along which convergence fails, contradicting the argument above.
\end{proof}

Lemma~\ref{lem:cumulant} transfers to $X$ and $\widetilde d$. The difference between $\widehat d-d$ and $\frac{2}{k(1-k/M)}$ times this sum is
\begin{equation*}
-\frac{k}{M(1-k/M)}\Bigl[\Bigl(\sum_i\varepsilon_i\Bigr)^2-d\Bigr],
\end{equation*}
which depends only on the signs and has $L^2$ norm $O(d/M)$ for every graph. After division by $s(d)\asymp d/\sqrt M$, this norm is $O(M^{-1/2})$. On a good graph from Lemma~\ref{lem:conc}, $\Omega_2$ equals its exact mean plus $O(\sqrt M(\log M)^4)$. Thus the ratio of $\frac{4\Omega_2}{k^2(1-k/M)^2}$ to $s(d)^2$ tends to one, and we can replace normalization by $\sqrt{\Omega_2}$ in this lemma with normalization by $s(d)$. The standardized statistic $X$ for $\widehat d$ therefore has the same uniform Gaussian limit on good graphs. By \eqref{eq:raw-decomp}, the difference between $\widetilde d$ and $\widehat d$ is $\frac kM[(\sum_i\varepsilon_i)^2-\widehat d]$. Given the graph, $(\sum_i\varepsilon_i)^2-d$ has $L^2$ norm $\sqrt{2d(d-1)}$, and $\widehat d$ is conditionally unbiased with conditional variance at most $(M-1)s(d)^2=2d(d-1)$. By the triangle inequality, the difference has conditional $L^2$ norm at most $\frac{2k}M\sqrt{2d(d-1)}$. After division by $s(d)$, this is at most $2k\sqrt{M-1}/M\to0$. Hence the same Gaussian limit holds for $\widetilde d$.

\subsection{Conditioning on failure at a positive load ratio}\label{app:clt-cond}
In this subsection, we assume that $d/M$ lies in a fixed compact interval of $(0,\infty)$. For $d\ge2$, a specified duplicate edge pair forms a core, so
\begin{equation*}
\Pr(F)\ge\Pr(S_1=S_2)=\binom Mk^{-1}.
\end{equation*}
By Lemma~\ref{lem:conc}, $\Pr(\mathcal G\notin\mathcal G_M\mid F)\le\Pr(\mathcal G\notin\mathcal G_M)\binom Mk\to0$. The conditional CDF error is at most the uniform error on good graphs plus this conditional exceptional probability. Thus, at a positive load ratio,
\begin{equation*}
\sup_x\bigl|\Pr(X\le x\mid F)-\Phi(x)\bigr|\to0.
\end{equation*}
The same argument applies to any event that depends only on the mapping and has probability at least a fixed negative power of $M$. We divide the superpolynomially small exceptional probability by $\Pr(F)$ and average the CLT error on good graphs under the conditional distribution. The good-graph CLT therefore requires only convergence to zero, regardless of how rare failure is. A typical failure at subcritical load arises from a duplicate edge pair, which sets just one $H_{ij}$ to $k$. The resulting change in $\Omega_2$ is $O(1)$, far below the deviation allowed by Lemma~\ref{lem:conc}, so the graph remains in $\mathcal G_M$.

The conditional variance is also asymptotic to $s(d)^2$. On good graphs, the ratio tends to one by \eqref{eq:condmoments} and Lemma~\ref{lem:conc}. On any graph, the ratio is at most $M-1$, and the superpolynomially small probability of bad graphs controls their contribution.

\subsection{Gaussian limit under deterministic coefficients}
The following Gaussian limit holds uniformly over deterministic coefficient vectors $(s_i)\in\{\pm1\}^d$. We apply it to random sign encoding by conditioning on the iid coding coefficients and then averaging over the signs.

\begin{lemma}[Gaussian limit under deterministic coefficients]\label{lem:plantclt}
For deterministic coefficients $(s_i)\in\{\pm1\}^d$, let $C_v=\sum_{i:v\in S_i}s_i$, compute $\widehat d$ by \eqref{eq:dhat}, and define $X$ by \eqref{eq:X}. If $d/\sqrt M\to\infty$, then $X\Rightarrow\mathcal N(0,1)$, uniformly in the coefficients, with $d/M$ allowed to be unbounded. Under the planted distribution $Q$, we also have $X\Rightarrow\mathcal N(0,1)$, uniformly in the coefficients and the label pair.
\end{lemma}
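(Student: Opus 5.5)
We plan to prove Lemma~\ref{lem:plantclt} with an edge-exposure martingale CLT, as announced in Section~\ref{sec:related}. The coefficients $(s_i)$ are fixed and the only randomness is the mapping, so the sign-based cumulant argument of Lemma~\ref{lem:cumulant} is not available. By \eqref{eq:std-expand}, write
\[
\widehat d-d=\frac{2}{k(1-k/M)}\sum_{j=2}^d Y_j,\qquad
Y_j=\sum_{i<j}s_is_j\langle w_i,w_j\rangle .
\]
Let $\mathcal F_j=\sigma(S_1,\dots,S_j)$ be the exposure filtration. Given $S_i$, the centered overlap $\langle w_i,w_j\rangle=H_{ij}-k^2/M$ has conditional mean zero over $S_j$. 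Hence $\E[Y_j\mid\mathcal F_{j-1}]=0$, and $(Y_j)$ is a martingale difference sequence for every choice of the coefficients.

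We would verify the two conditions of the martingale CLT (e.g.\ McLeish or Hall--Heyde) after normalizing by $s(d)$, taking $\sigma^2=s(d)^2k^2(1-k/M)^2/4$.
\begin{enumerate}[label=(\roman*)]
\item \emph{Conditional variance.} We need $\sigma^{-2}\sum_j\E[Y_j^2\mid\mathcal F_{j-1}]\to1$ in probability. Write $v_{j-1}=\sum_{i<j}s_iw_i$. Since $S_j$ is uniform and independent of $\mathcal F_{j-1}$, $\E[Y_j^2\mid\mathcal F_{j-1}]$ is a quadratic form in $v_{j-1}$. Because $\E[w_jw_j^{\top}]$ is proportional to the projection orthogonal to the all-ones vector, and $v_{j-1}$ already lies in that subspace, this quadratic form equals $c_{k,M}\|v_{j-1}\|^2$. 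The constant is $c_{k,M}=k(1-k/M)/(M-1)$.
\item \emph{Norm concentration.} Next we need $\|v_{j-1}\|^2$ close to its mean $(j-1)k(1-k/M)$. It is the same quadratic form in the mapping, of the same martingale type as above. A second-moment computation, using the exact hypergeometric moments and the fact that overlaps among three distinct edges are pairwise uncorrelated, gives relative variance $O(1/M+1/j)$. Summing over $j$ yields $\sum_j\E[Y_j^2\mid\mathcal F_{j-1}]/\sigma^2\to1$ in $L^1$. The rate depends only on $M$ and $d$, not on $(s_i)$, which gives uniformity in the coefficients.
\item \emph{Lindeberg condition.} We would bound $\sum_j\E Y_j^4/\sigma^4\to0$. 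Expanding $\E[(\langle v_{j-1},w_j\rangle)^4\mid\mathcal F_{j-1}]$ gives a term of order $(\|v\|^2/M)^2$ plus a term of order $\sum_v v_v^4/M$. The quantity $\E\sum_v v_v^4$ is controlled through the cell degrees $D_v\sim\mathrm{Bin}(j,k/M)$, whose fourth moment is $O((j/M)^2+j/M)$ times $M$. This gives $\sum_j\E Y_j^4=O(d^3/M^2+d^2/M)$, while $\sigma^4\asymp d^4/M^2$. The ratio $O(1/d+M/d^2)$ tends to zero precisely because $d/\sqrt M\to\infty$. No bound on $d/M$ is used, so the result covers overload.
\end{enumerate}

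For the planted distribution $Q$ with a pair $\{a,b\}$ forced to satisfy $S_a=S_b$, the pair contributes the single term $s_as_b\langle w_a,w_a\rangle=O(k)$. This is $o(s(d))$. We would relabel so that $a,b$ are exposed first, treating $S_b$ as equal to $S_a$ rather than fresh. The remaining edges are still independent and uniform given $\mathcal F_j$, so the martingale property survives. In steps (ii) and (iii), the only change is that one vector appears twice in $v_{j-1}$. This alters $\|v_{j-1}\|^2$ by $O(k)$ and the cell degrees by one on $k$ cells, which is negligible against the bounds above. The bounds are uniform in the label pair because of exchangeability.

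The main obstacle is step (ii), the law of large numbers for the conditional variance uniformly over all $(s_i)$. With one-sided coefficients the partial sums $v_{j-1}$ are largest and fluctuate most. We would handle this by computing $\Var\|v_{j-1}\|^2$ exactly through pair and triangle overlap moments, in the same way as the third-moment formula in Appendix~\ref{app:proof-moments}, and then checking that every sign-dependent term is bounded by the one-sided case.
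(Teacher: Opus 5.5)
Your proposal is correct and follows the paper's proof: the same label-order edge-exposure martingale, the same bracket $c_{k,M}\sum_j\|v_{j-1}\|^2$ concentrated via pairwise uncorrelated overlaps, and the same fourth-moment Lyapunov bound $O(M/d^2+1/d)$. The uncorrelated overlaps in fact make the bracket's variance exactly independent of $(s_i)$, so the ``main obstacle'' you anticipate does not arise.

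The only difference is the planted case. The paper splits off $2s_1s_2$ and the cross term $(s_1+s_2)\sum_{j\ge3}s_j\langle w_1,w_j\rangle$, then applies the unconditional result to the other $d-2$ edges; you instead rerun the martingale with the duplicated vector. That also works, but the change in $\|v_{j-1}\|^2$ includes the random term $2(s_a+s_b)\langle w_a,u_{j-1}\rangle$, where $u_{j-1}$ collects the other exposed edges. This term is negligible in $L^2$, but it is not literally $O(k)$ as you state.
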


\begin{proof}
We use $w_i$ from Appendix~\ref{app:proof-moments}. We expose edges in label order. The martingale differences are $\Delta_j=\frac{2}{k(1-k/M)}s_jw_j^\top\sum_{i<j}s_iw_i$, with $\widehat d-d=\sum_j\Delta_j$. Their conditional variances sum to
\begin{gather*}
\mathcal V_d=s(d)^2+\frac{8}{k(1-k/M)(M-1)}\sum_{i<h}(d-h)s_is_h\langle w_i,w_h\rangle,\\
\Var\mathcal V_d=\frac{64}{(M-1)^3}\sum_{h=2}^d(h-1)(d-h)^2=O\Bigl(\frac{d^4}{M^3}\Bigr).
\end{gather*}
Since $s(d)^4\approx4d^4/M^2$, the normalized bracket $\mathcal V_d/s(d)^2$ has variance $O(1/M)$ and converges to one in $L^2$. Given the first $j-1$ edges, the components of $\sum_{i<j}s_iw_i$ sum to zero. Thus $w_j^\top\sum_{i<j}s_iw_i$ is the sum of this vector's components over the $k$ cells of $S_j$. Its conditional fourth moment is at most $k^3\cdot\frac kM$ times the sum of the fourth powers of all components. The component at cell $v$, $\sum_{i<j}s_i(\mathbf1\{v\in S_i\}-k/M)$, is a sum of $j-1$ independent, centered, bounded terms, each with variance at most $k/M$. Rosenthal's inequality \cite{rosenthal1970} bounds its fourth moment by $C_k(j/M+j^2/M^2)$. Hence $\E\Delta_j^4\le C_k(j/M+j^2/M^2)$. Summing over $j\le d$ and dividing by $s(d)^4$, we bound the fourth-order Lyapunov remainder by
\begin{equation*}
\frac{\sum_j\E\Delta_j^4}{s(d)^4}\le C_k\Bigl(\frac M{d^2}+\frac1d\Bigr)\to0,
\end{equation*}
and the martingale CLT gives $X\Rightarrow\mathcal N(0,1)$. Both bounds depend only on $|s_i|=1$, so convergence is uniform in the coefficients and holds for unbounded $d/M$.

Under the planted distribution, the statistic is invariant under label permutation. We may therefore fix the duplicate edge pair as 1, 2, with $S_1=S_2$. Let $Y_{\ge3}$ be $\widehat d$ computed from the remaining $d-2$ edges alone, minus $d-2$. Then
\begin{equation*}
\widehat d-d=Y_{\ge3}+2s_1s_2+\frac{2}{k(1-k/M)}(s_1+s_2)\sum_{j\ge3}s_j\Bigl(|S_1\cap S_j|-\frac{k^2}{M}\Bigr).
\end{equation*}
The last term has variance $O(d/M)$. After division by $s(d)$, the deterministic term $2s_1s_2$ and the last term have mean squares $O(M/d^2)$ and $O(1/d)$, respectively. Both tend to zero. We apply the unconditional result to $Y_{\ge3}$.
\end{proof}

\subsection{\texorpdfstring{Proof of Theorem~\ref{thm:clt} (the Gaussian limit after failure) and Corollary~\ref{cor:gauss}}{Proof of the Gaussian limit after failure and the Gaussian rule}}
Assume $d/\sqrt M\to\infty$. From any subsequence, we take a further subsequence along which $d/M$ converges in $[0,\infty]$.

If the limit is positive and finite, we apply the conclusion of Appendix~\ref{app:clt-cond}.

If the limit is zero, Lemma~\ref{lem:plant} shows that the TV distance between the failure-conditional distribution and the planted distribution $Q$ tends to zero for all $2\le d=o(M)$. This graph-theoretic result continues to hold after we add independent signs. Lemma~\ref{lem:plantclt} holds uniformly in deterministic signs under both the random mapping and the planted distribution. We condition on the iid coding coefficients, apply these uniform results, and average over the signs to obtain the conditional Gaussian limit.

If the limit is infinite, eventually $d>M$. We can assign each edge removed during full peeling a cell whose degree is one at the time of removal. That cell is then empty and can witness no further removal, so full peeling requires $d\le M$. Thus $F$ occurs with certainty in this branch, and we apply Lemma~\ref{lem:plantclt} directly under the random mapping.

All three branches give uniform convergence of the conditional CDF to $\Phi$, and a subsequence contradiction establishes \eqref{eq:clt} for the full sequence. For Corollary~\ref{cor:gauss}, $U$ is the largest candidate that is not rejected, so $U\ge d$ whenever $d$ is not rejected, and hence $\{U<d\}\subseteq\{\widehat d<t(d)\}=\{X<(t(d)-d)/s(d)\}$. The standardized chi-square quantile $(t(d)-d)/s(d)$ converges to $\Phi^{-1}(u)$, so~\eqref{eq:clt} gives $\limsup\Pr(U<d\mid F)\le u$. \qed

\subsection{\texorpdfstring{Monotonicity of the acceptance lower bound $t(n)$ in Definition~\ref{def:gauss}}{Monotonicity of the acceptance lower bound of the Gaussian rule}}\label{app:gauss-mono}
We have $t(n)=n+q\,s(n)$ with $q=[\chi^2_{M-1,u}-(M-1)]/\sqrt{2(M-1)}$. For $q\ge0$, $t$ is increasing. For $q<0$, differentiation for $n\ge2$ gives $\frac{d}{dn}s(n)\le\frac{\sqrt2(2n-1)}{2\sqrt{n(n-1)(M-1)}}\le\frac{3}{2\sqrt2}\cdot\frac{\sqrt2}{\sqrt{M-1}}$. A sufficient condition is therefore $|q|\sqrt{2/(M-1)}<\sqrt8/3$. The Chernoff bound for the chi-square lower tail \cite[Lemma~1]{laurent2000adaptive} gives $|q|\le\sqrt{2\log(1/u)}\le3.035$ for $u\ge0.01$. The condition holds when $M\ge64$. \qed

By monotonicity, the accepted candidates form $\{2,\dots,U\}$. For $q<0$, $s(n)\le n\sqrt{2/(M-1)}$ gives $t(n)\ge n\,\chi^2_{M-1,u}/(M-1)$, and for $q\ge0$, $t(n)\ge n$; in both cases $t(n)\to\infty$, so $U$ is finite. The case $+\infty$ in Definition~\ref{def:gauss} therefore arises only when no candidate is accepted, that is, $\widehat d<t(2)=2+2q/\sqrt{M-1}$. For $u=0.01$, $t(2)$ is about 1.73, 1.86, and 1.96 at $M=256$, 1024, and 16384, respectively. A post-failure observation below $t(2)$ essentially occurs only when $d=2$ or $3$ and the duplicate edge pair has opposite signs. For $d=2$, $\Pr(F)=1/\binom Mk$; for $d=3$ and $k=3$, enumerating the union of three edges and running peeling shows that the exact $\Pr(F)$ differs from the duplicate-pair term $3/\binom M3$ by a relative error below $10^{-3}$ for $M\ge256$. At $M=1024$, the failure probability is $1.68\times10^{-8}$.

\section{Proofs of the discrete limit distribution and the Skellam rule}\label{app:inversion}
We prove Theorems~\ref{thm:cross}, \ref{thm:uniform}, and \ref{thm:coverage}. Write the iid coding coefficients as $\varepsilon_i$. By \eqref{eq:raw-expand}, $\widetilde d-d=\frac2k\sum_{i<j}H_{ij}\varepsilon_i\varepsilon_j$ holds for every sample. The final subsection discusses an extension with multiple sign counts per cell.

\subsection{\texorpdfstring{Proof of Theorem~\ref{thm:cross} (discrete limit for bounded $c$)}{Proof of Theorem: discrete limit for bounded c}}\label{app:proof-cross}
\begin{proof}
\emph{From failure to the planted distribution.} Lemma~\ref{lem:plant} gives $\|\mathcal L(\mathcal G\mid F)-Q\|_{TV}\to0$ for all $2\le d=o(M)$. Here $Q$ selects a pair of edge labels uniformly and then conditions on the two edges having the same mapping. For $2\le d\le C\sqrt M$, both $\mu$ and $d^2/M^k$ are $O_k(M^{1-k})$. Since $b_k=\lceil3(k-2)/2\rceil$, we have $k-b_k-3=-\lceil k/2\rceil\le-2$ and hence $dM^{k-b_k-3}=O_k(d/M^2)$. The distance is therefore at most $O_k(d/M^2+M^{1-k})$. Adding independent signs and applying \eqref{eq:raw-expand} preserve this TV bound. The statistic is invariant under label permutation, so we may fix the duplicate edge pair as 1, 2. Their common edge and the remaining $d-2$ edges are independent and uniform. For $d=2$, the null distribution is exact under the planted distribution. We treat empty sums below as zero.

\emph{Three negligible event classes for bounded $c$.} By the union bound, the common edge of the duplicate edge pair intersects a remaining edge with probability $O_k(d/M)$. Any specified pair of remaining edges shares at least two cells with probability $O_k(M^{-2})$. Summing over all such pairs gives $O_k(d^2/M^2)$. In the intersection graph of the remaining edges, for any three specified edges, the probability that one specified edge intersects both others is
\begin{equation*}
\Bigl(1-\binom{M-k}{k}\Big/\binom Mk\Bigr)^2=O_k(M^{-2}),
\end{equation*}
This equality follows by conditioning on the edge shared by the two intersection tests. The other two edges are independent, and their intersection probability with the shared edge is independent of its position. Summing over ordered triples bounds the probability that the intersection graph is not a matching by $O_k(d^3/M^2)$. Thus, outside an event with probability bounded by \eqref{eq:tv-error}, the common edge of the duplicate edge pair is disjoint from the remaining edges, and all overlaps among the remaining edges are single-cell overlaps forming a matching.

\emph{Poisson approximation for overlaps among the remaining edges.} For each pair of remaining edges, consider the indicator that they intersect. These indicators have a dependency graph defined by shared edge labels. Collections of indicators with disjoint label sets are independent. Each indicator has mean $p=1-\binom{M-k}k/\binom Mk=k^2/M+O_k(M^{-2})$ and $O(d)$ neighbors. By the calculation in the previous step, two indicators sharing an edge have joint probability exactly $p^2$. We apply the local dependence bound of Arratia, Goldstein, and Gordon \cite[Theorem~1]{arratia1989chen} with the neighborhood containing the indicator itself and its neighbors. Indicators outside the neighborhood depend only on the remaining independent edges, so the third term of the bound is zero. The bound is a universal constant times the sum of products of means within neighborhoods plus the sum of joint probabilities over distinct neighbors. Both sums are $O_k((d^2+d^3)/M^2)$ here. Thus the number of intersecting edge pairs among the remaining edges has TV distance $O_k((d^2+d^3)/M^2)$ from $\Pois(\binom{d-2}2p)$. We replace the parameter by $\binom{d-2}2k^2/M=2\lambda_d$. The TV distance between two Poisson distributions is at most the difference between their parameters: we couple the larger-parameter variable to the smaller one by adding an independent Poisson increment, and the variables differ only when this increment is nonzero. The replacement adds an error of $O_k(d^2/M^2)$.

\emph{Independence of the overlap sign marks and the Skellam limit.} On the matching event, each overlap uses a pair of iid signs shared with no other overlap. Their products are therefore mutually independent and symmetric, and independent of the duplicate edge pair's sign product $\varepsilon_1\varepsilon_2$. Given the mapping, $\widetilde d$ depends only on the number of overlaps among the remaining edges and these independent $\pm1$ marks. To formalize the coupling, we retain the true overlap count and generate that many new independent marks for each mapping. On the matching event, the old and new marked sums have the same conditional distribution. On all other mappings, we couple them arbitrarily, with a distributional error at most the exceptional event's probability. We then apply the same marking kernel to the Poisson approximation of the count, preserving the TV bound. By Poisson thinning, $\Pois(2\lambda_d)$ independent $\pm1$ marks split by sign into two independent $\Pois(\lambda_d)$ counts. Their difference, $Z_d$, has a Skellam distribution. Adding the duplicate edge pair's contribution $2\varepsilon_1\varepsilon_2$ gives $J_{\lambda_d}$ in \eqref{eq:J}. Thus $\widetilde d$ is close to the null distribution $d+J_{\lambda_d}$ in \eqref{eq:aux}. We use the Poisson approximation for the unconditional count and the matching event to bound the coupling error.

The four steps give a total error of $O_k(d/M+d^2/M^2+d^3/M^2+d/M^2+M^{1-k})$. Since $d\le M$ and $k\ge3$, the terms $d/M$ and $d^3/M^2$ absorb all others, giving \eqref{eq:tv-error}. When $d/\sqrt M\to c$, we have $\lambda_d\to k^2c^2/4$. Continuity in the Poisson parameter gives convergence of $J_{\lambda_d}$ in total variation to $J_{k^2c^2/4}$, proving Theorem~\ref{thm:cross}(ii). The $\pm2$ and Skellam terms are independent, so $\Var J_\lambda=4+8\lambda/k^2$ and the limiting variance is $4+2c^2$. For (iii), as $\lambda\to0$, we have $\Pr(Z\ne0)\le2\lambda\to0$ and $J_\lambda\Rightarrow2\varepsilon_1\varepsilon_2$. As $\lambda\to\infty$, we have $Z/\sqrt{2\lambda}\Rightarrow\mathcal N(0,1)$ and $\sqrt{8\lambda}/k=\sqrt2\,c$. The $\pm2$ term divided by $\sqrt2\,c$ tends to zero, so $J_{k^2c^2/4}/(\sqrt2\,c)\Rightarrow\mathcal N(0,1)$. The estimator $\widehat d$ has the same weak limit: the AMS term in \eqref{eq:raw-decomp} has coefficient $k/M$, and $(\sum_i\varepsilon_i)^2=O_p(d)=O_p(\sqrt M)$.
\end{proof}

The exact moments of $\widehat d$ under the planted distribution agree with the limiting variance $4+2c^2$ in (ii). Let $S_1=S_2$, with the remaining $d-2$ edges independent and uniform. If the two coding coefficients have opposite signs, the edges cancel completely, leaving the remaining $d-2$ elements. If they are equal, the edges combine into a single edge with coefficient $\pm2$. For $d\ge2$,
\begin{equation*}
\E[\widehat d\mid S_1=S_2]=d,\qquad
\Var(\widehat d\mid S_1=S_2)=\frac{2d(d-1)}{M-1}+4-\frac4{M-1}.
\end{equation*}
The component with opposite coefficients has variance $2(d-2)(d-3)/(M-1)$, and the component with equal coefficients has variance $2(d-2)(d+5)/(M-1)$. Averaging these variances and adding $4$ from the difference between the component means $d\mp2$ gives the result.

\subsection{Sparse limit for fixed small \texorpdfstring{$d$}{d}}
\begin{proposition}[Limit when $d=o(\sqrt M)$]\label{prop:smalld}
Fix $k\ge3$ and let $2\le d=o(\sqrt M)$. Conditional on failure, both $\widehat d-d$ and $\widetilde d-d$ converge weakly to the distribution assigning equal probability to $\pm2$. The Gaussian rule's conditional miss rate tends to zero for $d=2,3$, and to $1/2$ for fixed $d\ge4$ or $d\to\infty$.
\end{proposition}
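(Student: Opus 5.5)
The plan is to reduce to the planted distribution via Lemma~\ref{lem:plant} and then read off the structure directly. Since $d=o(\sqrt M)\subseteq o(M)$, Lemma~\ref{lem:plant} gives $\|\mathcal L(\mathcal G\mid F)-Q\|_{TV}\to0$. Adding independent signs preserves this bound, so it suffices to work under $Q$ with the duplicate pair fixed as keys $1,2$ and $S_1=S_2$. By \eqref{eq:raw-expand},
\[
\widetilde d-d=2\varepsilon_1\varepsilon_2+\frac2k\sum_{\{i,j\}\ne\{1,2\}}H_{ij}\varepsilon_i\varepsilon_j .
\]
The remainder vanishes unless some pair other than $\{1,2\}$ intersects. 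By the union bound over $O(d^2)$ pairs, each intersecting with probability $O_k(1/M)$, this has probability $O_k(d^2/M)\to0$. Hence $\widetilde d-d\Rightarrow2\varepsilon_1\varepsilon_2$, uniform on $\pm2$.

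For $\widehat d$, I use \eqref{eq:raw-decomp}, which gives $\widehat d-\widetilde d=\frac{k}{M-k}\bigl[\widetilde d-(\sum_i\varepsilon_i)^2\bigr]$. Both $\widetilde d$ and $(\sum_i\varepsilon_i)^2$ are $O_p(d)$. Indeed $\widetilde d\le d^2$, but more usefully $\E|\widetilde d|\le d+2$ under $Q$. So the difference is $O_p(d/M)\to0$, and $\widehat d-d$ has the same weak limit. (This matches the remark after Theorem~\ref{thm:cross}.)

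For the Gaussian rule, with $t(n)=n+q\,s(n)$ and $q$ bounded, we have $s(n)=O(n/\sqrt M)$ for $n$ near $d$. This gives $t(n)-n=o(1)$ uniformly over $n\le d+3$. On the event $\varepsilon_1\varepsilon_2=-1$ (probability $\to1/2$), $\widehat d=d-2+o_p(1)$. For $d\ge4$, candidate $d$ is then rejected since $t(d)=d+o(1)>\widehat d$, so $U<d$. The exception is a possible acceptance of some larger $n$, which monotonicity in Appendix~\ref{app:gauss-mono} excludes: the accepted set is $\{2,\dots,U\}$ with $U\approx d-2$. On the event $\varepsilon_1\varepsilon_2=+1$, $\widehat d=d+2+o_p(1)$, so candidate $d$ is accepted and $U\ge d$. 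Hence the miss rate tends to $1/2$, and this also covers $d\to\infty$ with $d=o(\sqrt M)$, since all errors are uniform. For $d=2,3$ and the opposite-sign event, $\widehat d\to d-2\in\{0,1\}$, which lies below $t(2)=2+o(1)$. No candidate is accepted, so $U=+\infty$ and there is no miss. The same-sign event gives $U\ge d$ as before, so the miss rate tends to zero.

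The main obstacle I expect is making the $o_p(1)$ perturbations uniform near the lattice boundaries. In particular, one must check that for $d\ge4$ the perturbation cannot lift $\widehat d$ above $t(d-1)$ or $t(d)$, and that for $d=3$ the opposite-sign value $\approx1$ stays below $t(2)$. In the opposite-sign case the cancelled edges contribute exactly zero to the counts, so $\widehat d$ equals the estimator of the remaining $d-2$ keys. Under the sparse event that estimator is exactly $d-2$ minus the AMS correction of size $O(d/M)$. This gives a deterministic gap of order $1$ against thresholds $d+o(1)$, which I would use to close the argument.
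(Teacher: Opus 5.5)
Your proposal is correct and follows essentially the paper's route: reduce to the planted distribution via Lemma~\ref{lem:plant}, use the sparse event on which $\widetilde d=d+2\varepsilon_1\varepsilon_2$ exactly (the paper invokes Theorem~\ref{thm:cross} for the weak limit, whereas you obtain it directly from the $O_k(d^2/M)$ union bound), transfer to $\widehat d$ via \eqref{eq:raw-decomp}, and analyze the Gaussian rule using $t(n)=n+q\,s(n)$ with $s(n)\to0$ and monotonicity of $t$. One boundary is missing from your list: at $d=4$ the opposite-sign observation $2+O(1/M)$ must still clear $t(2)=2-\Theta(M^{-1/2})$, so that $U$ is finite and a miss occurs rather than $U=+\infty$. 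This follows from the same exact sparse-event formula.
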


\begin{proof}
Theorem~\ref{thm:cross} gives the equally weighted $\pm2$ limit for the fluctuation of $\widetilde d$ when $d=o(\sqrt M)$. By \eqref{eq:raw-decomp}, the correction for $\widehat d$ is $O_p(d^2/M)=o_p(1)$, so its fluctuation has the same weak limit. Consider the event that the remaining edges are pairwise disjoint and disjoint from the duplicate edge pair. On this event, $\widetilde d$ equals $d+2\varepsilon_1\varepsilon_2$ exactly, and the event has conditional probability tending to one. The Gaussian rule has $t(d)=d-O(s(d))$ with $s(d)\to0$. The negative component leads to rejection of the true candidate, and the positive component leads to acceptance. For $d=2,3$, the negative component gives an observation below $t(2)$, so the rule returns $+\infty$, which is not a miss. For $d\ge4$, the negative component still leads to acceptance of candidate $d-2$, causing a miss. For $d=2$, failure is equivalent to a duplicate edge pair, and $\Pr(\widehat d=0\mid F)=\Pr(\widehat d=4\mid F)=1/2$ for every $M>k$. Thus multiplying $\widehat d$ by any fixed constant to obtain an upper bound leaves a post-failure risk of at least $1/2$.
\end{proof}

\subsection{\texorpdfstring{Proof of Theorem~\ref{thm:uniform} (uniform CDF approximation)}{Proof of Theorem: uniform CDF approximation}}\label{app:proof-uniform}
\begin{proof}
First consider $d/\sqrt M\to\infty$. Appendix~\ref{app:clt} gives a conditional Gaussian limit for $\widetilde d$ standardized by $s(d)$. The null distribution is the distribution of $d+J_{\lambda_d}$. As $\lambda_d\to\infty$, we can express $Z_d$ in the Skellam term as a sum of iid increments whose number grows with $\lambda_d$. The CLT gives $(2/k)Z_d/\sqrt{8\lambda_d/k^2}\Rightarrow\mathcal N(0,1)$. The ratio of $8\lambda_d/k^2=2(d-2)(d-3)/M$ to $s(d)^2$ tends to one, and the $\pm2$ term divided by $s(d)$ tends to zero. Hence $J_{\lambda_d}/s(d)\Rightarrow\mathcal N(0,1)$. Both standardized CDFs converge to the continuous CDF $\Phi$. Weak convergence to a continuous limit implies uniform CDF convergence (P\'olya's theorem), so their Kolmogorov distance tends to zero.

To prove uniformity, suppose that there exist $\delta>0$ and a sequence $(M,d)$ along which the distance is at least $\delta$. Take a subsequence with $d/\sqrt M\to c\in[0,\infty]$. If $c<\infty$, this subsequence eventually satisfies $d\le C\sqrt M$ for some constant, and Theorem~\ref{thm:cross} gives a TV distance tending to zero. If $c=\infty$, the previous paragraph gives convergence to zero. Both cases contradict the lower bound $\delta$.
\end{proof}

\subsection{\texorpdfstring{Proof of Theorem~\ref{thm:coverage} (uniform asymptotic bound on the conditional miss rate)}{Proof of Theorem: uniform bound on the conditional miss rate}}\label{app:proof-coverage}
\begin{proof}
Write $G_n$ for the CDF of the null distribution at candidate $n$ and $Y$ for the observation. The rule rejects $n$ if and only if $G_n(Y)\le u$. We use the standard argument for test inversion and the conservativeness of a discrete p-value \cite{neyman1937}. If $Y\sim G_n$, then $\Pr(G_n(Y)\le u)\le u$. If no lattice point satisfies $G_n(y)\le u$, the rejection event is empty. Otherwise, let $y^*$ be the largest lattice point satisfying $G_n(y)\le u$. The rejection event is then $\{Y\le y^*\}$, with probability $G_n(y^*)\le u$. Since the rejection region $\{Y\le y^*\}$ is a lower set, the rejection probabilities under the true distribution and $G_n$ differ by at most their Kolmogorov distance. Taking $n=d$ and applying Theorem~\ref{thm:uniform}, we obtain a rejection probability of at most $u+o(1)$ for the true candidate, uniformly over all $d\ge2$. The event $\{U<d\}$ implies rejection of the true candidate, giving \eqref{eq:coverage}. The argument applies to arbitrary acceptance sets and allows candidate rejection probabilities to vary nonmonotonically with $n$.
\end{proof}

Using $\le u$ as the rejection criterion with the left limit $G_n(Y-)=\Pr(Y^{\prime}<Y)$ (where $Y^{\prime}\sim G_n$) gives a different rule. The equally weighted $\pm2$ distribution has left limit zero at $-2$. This rejects its smallest support point, which has probability $1/2$, and the conditional miss rate tends to $1/2$ when $d\ge4$ and $d=o(\sqrt M)$. At that point, $G_n$ equals $1/2>u$, so the Skellam rule accepts the candidate.

For $d=3$, the observation may lie outside the support of $G_3$, and $\widetilde d\ge1$ always holds. If all three coefficients have the same sign, $\sum_vC_v^2\ge3k$. Otherwise, label the edge whose sign differs from the other two as edge $3$. The sum of squares is $3k+2(H_{12}-H_{13}-H_{23})$. Among the $k$ cells of edge $3$, at least $H_{13}+H_{23}-H_{12}$ are shared with edge $1$ or edge $2$. Thus $H_{13}+H_{23}-H_{12}\le k$, and the sum of squares is at least $k$. Hence $G_3(\widetilde d)\ge G_3(1)=1/2>u$, and the true candidate is always accepted.

\subsection{Magnitudes of the terms in the two-round risk decomposition}\label{app:risk}
This subsection derives the four terms of Theorem~\ref{thm:protocol} and gives their numerical values.

\paragraph{Sizing error.} The first term is the joint miss rate $\Pr(F_1)\Pr(U_b<d\mid F_1)$. Under the Skellam rule, Theorem~\ref{thm:coverage} makes it asymptotically at most $u\Pr(F_1)$ uniformly over all $d\ge2$; for a one-sided difference $U_b\ge|b|=d$, so this term is zero. The Gaussian rule's conditional miss rate can far exceed $u$ when $c$ is bounded, but failure itself is then rare: by Proposition~\ref{prop:plant}, when $d=o(M_1)$, $\Pr(F_1)$ is asymptotically equal to the expected number of duplicate edge pairs $\binom d2/\binom{M_1}k$, which is about $2.8\times10^{-6}$ for $k=3$, $M_1=1024$, and $d=\sqrt{M_1}=32$, so the joint miss rate remains small.

\paragraph{Peeling failure with sufficient capacity.} The second term comes from the randomness of peeling: even with sufficient second-round capacity, the random hypergraph can leave a nonempty 2-core with small probability. The second-round load is at most $0.8\alpha_c$, so by Lemma~\ref{lem:propL}, $\max_{m\ge m(d)}\Pr_m(F)$ decays polynomially in $d$.

\paragraph{Unbounded upper bound.} When the rule gives no finite upper bound, the protocol cannot size the second-round table by $m(U_b)$, and~\eqref{eq:risk} counts this case as a failure. The Skellam rule always accepts candidate $2$ and returns $+\infty$ only when $M_1\le4\log(1/u)$ ($M_1\le18$ for $u=0.01$) (Appendix~\ref{app:search-cut}); at practical table sizes this term is zero. The Gaussian rule's acceptance lower bound $t(n)$ increases with $n$ and returns $+\infty$ only when the observation lies below $t(2)$ and no candidate is accepted; for $M_1=1024$ and $u=0.01$, $t(2)\approx1.86$. A post-failure observation in this range essentially occurs only when $d=2$ or $3$ and the duplicate edge pair has opposite signs, and then $\Pr(F_1)$ is at most $2\times10^{-8}$ for $k=3$ and $M_1=1024$ (Appendix~\ref{app:gauss-mono}).

\paragraph{Checksum error.} The checksum takes a non-pure cell as pure, or indicates termination on a nonempty core, making the recovered result incorrect. Each check passes falsely with probability at most $2^{-64}$, and the number of checks is at most the number of cells plus $k$ times the number of deleted keys; the keySum and count fields of a nonempty core can both cancel to zero, and the probability that all checksums are then zero is likewise at most $2^{-64}$, which can occur at most once at termination of each round. Hence $\Pr(E)$ is of order $2(M_1+m+kd)\cdot2^{-64}$, negligible at practical scales. With 32-bit checksums this term is of order $2(M_1+m+kd)\cdot2^{-32}$, of order $10^{-4}$ for $M_1=16384$ and load at most 3; since $\widetilde d$ uses only the count field, the first term is unaffected by checksum width, and at deployment the target failure probability can be split between the sizing and checksum terms.

\subsection{Extension to multiple sign counts}\label{app:multi}
Suppose each cell stores $L$ copies of the count, using independent sign hashes and a shared mapping. Conditional on $\mathcal G$, the estimates $\widehat d$ from these copies are mutually independent, each with mean $d$. Averaging independent estimates is the AMS approach to reducing variance \cite{alon1999spacecomplexity}. From $\E[\widehat d^2\mid\mathcal G]\le3d^2$ and $e^{-x}\le1-x+x^2/2$, we obtain
\begin{equation*}
\E\Bigl[\exp\Bigl(-\frac{\widehat d}{5d}\Bigr)\Bigm|\mathcal G\Bigr]\le\frac{43}{50},
\end{equation*}
and the exponential Markov inequality (Chernoff bound) applied to $U=\frac2L\sum_\ell\widehat d^{(\ell)}$ gives $\Pr(U<d\mid\mathcal G)\le(\frac{43}{50}e^{1/10})^L$. This bound holds for every deterministic mapping, hence for arbitrarily rare failure. For $L=272$, the right-hand side is below $10^{-6}$. When only an extra scalar AMS accumulator is stored, the $L$ count sums are independent given $d$, and their joint distribution is a product of binomial distributions. This permits an exact discrete candidate test with the finite-sample guarantee $\Pr(U<d\mid F)\le u$.

Repeating counts reduces absolute error but preserves the ratio of the conditional scale to the unconditional scale. Fix $d\ge4$ and $L$, and let $M\to\infty$. If we apply Gaussian inversion to the average of $L$ copies using the unconditional standard deviation $s(n)/\sqrt L$, the conditional miss rate tends to $2^{-L}\sum_{j<L/2}\binom Lj$. This equals $1/2,1/4,5/16,93/256$ for $L=1,2,4,8$, respectively. The reason is that for fixed small $d$, failure is caused by a duplicate edge pair, which contributes $\pm2$ independently in each copy, while the acceptance lower bound tends to $d$ as $s(d)\to0$. Whenever more than half of the copies take $-2$, the average falls below $d$ and the true candidate is rejected. This probability tends to $1/2$ as $L$ grows. Increasing the number of copies therefore cannot replace the correct post-failure distribution, which is why the main text uses a single count with the discrete limit distribution.

\section{Exact evaluation and search cutoffs}\label{app:search}
This appendix describes the computation of the Skellam rule (Definition~\ref{def:skellam}). The guarantee of the Skellam rule does not require the accepted candidates to form an interval, and by definition $U$ is the largest accepted candidate. Rejected candidates may lie between accepted ones, so a binary search can miss the largest accepted candidate. The implementation therefore scans downward from an explicit cutoff and returns the first accepted candidate; under the condition of Proposition~\ref{prop:cut}, the result equals that of an unbounded scan. Appendix~\ref{app:search-int} evaluates $G_n(\widetilde d)$ exactly using integer thresholds; Appendix~\ref{app:search-cut} gives the cutoff and proves that every candidate above it is rejected; Appendices~\ref{app:search-pre} and~\ref{app:search-pois} give the per-candidate prescreening and the exact CDF computation; the comparisons confirm that none of these measures changes the output of the rule. Appendix~\ref{app:search-lut} explains how to precompute a table of $U$ when the parameters are fixed.

\subsection{\texorpdfstring{Computing $G_n(\widetilde d)$ with integer thresholds}{Computing the candidate CDF with integer thresholds}}\label{app:search-int}
Expanding \eqref{eq:aux} over the two cases $\varepsilon_1\varepsilon_2=\pm1$, we obtain
\begin{equation*}
G_n(\widetilde d)=\frac12\Pr\Bigl(Z_n\le\frac k2\bigl(\widetilde d-n-2\bigr)\Bigr)+\frac12\Pr\Bigl(Z_n\le\frac k2\bigl(\widetilde d-n+2\bigr)\Bigr).
\end{equation*}
Let $R=\sum_vC_v^2$, an integer, so that $\widetilde d=R/k$. A random variable with CDF $G_n$ is at most the observation if and only if $Z_n\le\frac k2(R/k-n\mp2)$, or equivalently $Z_n\le\lfloor(R-kn\mp2k)/2\rfloor$. Thus $G_n(\widetilde d)$ equals
\begin{equation*}
\frac12\Pr\Bigl(Z_n\le\Bigl\lfloor\frac{R-kn-2k}2\Bigr\rfloor\Bigr)+\frac12\Pr\Bigl(Z_n\le\Bigl\lfloor\frac{R-kn+2k}2\Bigr\rfloor\Bigr),
\end{equation*}
where we compute both floors using integer arithmetic, so floating point does not change the decision at lattice boundaries. The second term corresponds to $\varepsilon_1\varepsilon_2=-1$, the component with shift $-2$; its threshold is larger and so is its CDF.

\subsection{Search cutoff}\label{app:search-cut}
\begin{proposition}[Search cutoff]\label{prop:cut}
If $M>4\log(1/u)$, then every candidate satisfying
\begin{equation}
n\ge\frac{\widetilde d+2+\frac{4}{3k}\log(1/u)}{1-2\sqrt{\log(1/u)/M}}\label{eq:cut}
\end{equation}
is rejected.
\end{proposition}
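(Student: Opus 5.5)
The plan is to bound $G_n(\widetilde d)$ from above by a one-sided Chernoff--Bernstein tail of the Skellam term. I then show that this tail is at most $u$ once $n$ satisfies~\eqref{eq:cut}. Since the Skellam rule rejects $n$ exactly when $G_n(\widetilde d)\le u$, this suffices.

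\textbf{Step 1 (reduce to one Skellam tail).} By the integer-threshold expansion in Appendix~\ref{app:search-int}, $G_n(\widetilde d)$ is the average of the two component CDFs. The component with shift $-2$ has the larger threshold, so
\[
G_n(\widetilde d)\le\Pr\Bigl(n-2+\tfrac2kZ_n\le\widetilde d\Bigr)=\Pr(Z_n\le -a),\qquad a=\tfrac k2\bigl(n-2-\widetilde d\bigr).
\]
Under~\eqref{eq:cut} we have $n\ge\widetilde d+2+\frac{4}{3k}L$ with $L=\log(1/u)$, so $a>0$.

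\textbf{Step 2 (Bernstein bound for the Skellam lower tail).} Write $Z_n=P_1-P_2$ with $P_1,P_2$ independent $\Pois(\lambda_n)$. Then $\log\E e^{-tZ_n}=2\lambda_n(\cosh t-1)$. The series comparison $\cosh t-1\le \frac{t^2/2}{1-t/3}$ for $0<t<3$ gives a sub-gamma bound with variance $2\lambda_n$ and scale $1/3$. Optimizing $t$ in the usual way yields
\[
\Pr(Z_n\le -a)\le\exp\Bigl(-\frac{a^2}{2(2\lambda_n+a/3)}\Bigr).
\]
It therefore suffices to show $a^2-\tfrac{2L}{3}a-4L\lambda_n\ge0$. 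This holds whenever
\[
a\ge \tfrac L3+\sqrt{L^2/9+4L\lambda_n},
\]
and in particular, by $\sqrt{x+y}\le\sqrt x+\sqrt y$, whenever $a\ge\tfrac{2L}{3}+2\sqrt{L\lambda_n}$.

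\textbf{Step 3 (bound $\lambda_n$ and solve for $n$).} From~\eqref{eq:lambda}, $\lambda_n=k^2(n-2)(n-3)/(4M)\le k^2n^2/(4M)$, so $2\sqrt{L\lambda_n}\le kn\sqrt{L/M}$. The sufficient condition from Step 2 becomes
\[
\tfrac k2\bigl(n-2-\widetilde d\bigr)\ge\tfrac{2L}3+kn\sqrt{L/M}.
\]
Multiplying by $2/k$ and collecting the terms in $n$ gives
\[
n\bigl(1-2\sqrt{L/M}\bigr)\ge\widetilde d+2+\tfrac{4}{3k}L.
\]
The hypothesis $M>4L$ makes the factor $1-2\sqrt{L/M}$ positive, so this is exactly~\eqref{eq:cut}. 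Hence $G_n(\widetilde d)\le u$ and $n$ is rejected. For $n=2,3$ we have $\lambda_n=0$, and the bound holds trivially because $\Pr(Z_n\le-a)=0$ for $a>0$.

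\textbf{Main obstacle.} The only delicate point is getting the Bernstein constant exactly right, namely the scale $1/3$, which produces the term $\frac{4}{3k}\log(1/u)$. This requires verifying $\cosh t-1\le \frac{t^2}{2(1-t/3)}$ for $0<t<3$. I would do this by a termwise comparison: $t^{2j}/(2j)!\le \tfrac12 t^2 (t/3)^{2j-2}$, which follows from $(2j)!\ge 2\cdot 3^{2j-2}$. The rest is algebra.
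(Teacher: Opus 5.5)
Your proof is correct and follows the paper's argument essentially step for step. You reduce to the shift-$-2$ component, use the Skellam MGF with the termwise bound $(2j)!\ge 2\cdot 3^{2j-2}$ to get a sub-gamma (Bernstein) tail with variance factor $2\lambda_n$ and scale $1/3$, and then obtain \eqref{eq:cut} from the sufficient condition $a\ge \frac{2}{3}\log(1/u)+2\sqrt{\lambda_n\log(1/u)}$ together with $\lambda_n\le k^2n^2/(4M)$.
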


For $u=0.01$, the condition is $M>18$. If the condition fails, the implementation returns $+\infty$, which only enlarges the upper bound; for any fixed $u$, this branch disappears as $M\to\infty$ and does not affect Theorem~\ref{thm:coverage}. At $M=16384$ and $u=0.01$, the coefficient of the observation $\widetilde d$ in \eqref{eq:cut} is about $1.035$, and only one or two hundred candidates lie between the cutoff and $U$.

\begin{proof}
$G_n(\widetilde d)$ is the average of two components, so it suffices to prove that the component with shift $-2$ is at most $u$. This component equals $\Pr(Z_n\le-x)$ with $x=\frac k2(n-R/k-2)$, and \eqref{eq:cut} guarantees $x>0$. When $\lambda_n=0$, this probability is zero and the conclusion follows directly; assume $\lambda_n>0$. Multiplying the moment generating functions of two independent $\Pois(\lambda_n)$ variables gives the moment generating function of the Skellam distribution,
\begin{equation}
\E e^{-tZ_n}=\exp\{2\lambda_n(\cosh t-1)\},\qquad t\in\mathbb R.\label{eq:skellam-mgf}
\end{equation}
For $0<t<3$, $2\lambda_n(\cosh t-1)=2\lambda_n\sum_{j\ge1}t^{2j}/(2j)!\le\lambda_nt^2\sum_{j\ge0}(t/3)^{j}=\lambda_nt^2/(1-t/3)$, because $2/(2j)!\le3^{-(2j-2)}$. Hence $-Z_n$ is a sub-gamma variable with variance factor $2\lambda_n$ and scale $1/3$. The Chernoff tail bound for sub-gamma variables \cite[Section~2.4]{boucheron2013concentration}, converted to Bernstein form as in the proof of Corollary~2.11 of that book, gives
\begin{equation*}
\Pr(Z_n\le-x)\le\exp\Bigl[-\frac{x^2}{4\lambda_n+2x/3}\Bigr].
\end{equation*}
A sufficient condition for the right-hand side to be at most $u$ is $x\ge2\sqrt{\lambda_n\log(1/u)}+\frac23\log(1/u)$. By~\eqref{eq:lambda}, $\lambda_n\le k^2n^2/(4M)$, so the sufficient condition becomes $\frac k2(n-R/k-2)\ge kn\sqrt{\log(1/u)/M}+\frac23\log(1/u)$, which rearranges to \eqref{eq:cut}; the condition $M>4\log(1/u)$ ensures that the denominator is positive.
\end{proof}

Numerically, we compute the cutoff at risk $u(1-10^{-9})$, round up, and add one. For the experimental values $u=0.01,0.05$, this risk is below the acceptance level $u-10^{-12}$, so the cutoff leaves a margin toward larger candidates and floating-point rounding cannot skip a candidate that should be accepted. The scan starts one below the cutoff and proceeds downward.

\subsection{Per-candidate Chernoff prescreening}\label{app:search-pre}
Every candidate between the cutoff and $U$ is rejected, and a per-candidate Chernoff bound can reject some of them before the exact CDF is computed. Let $j=\lfloor(R-kn+2k)/2\rfloor$; the component with shift $-2$ is $\Pr(Z_n\le j)$. The variable $Z_n$ has mean $0$, so for $j\ge0$ the event $\{Z_n\le j\}$ contains the mean and the Chernoff bound only gives the trivial value $1$; prescreening is therefore used only for $j<0$. When $\lambda_n=0$, $Z_n=0$ and $\Pr(Z_n\le j)=0$, so the candidate is rejected directly; assume $\lambda_n>0$. Applying the exponential Markov inequality to~\eqref{eq:skellam-mgf} gives, for $t>0$, $\Pr(Z_n\le j)\le\exp\{tj+2\lambda_n(\cosh t-1)\}$. Differentiating in $t$, the exponent is minimized at $\sinh t=r$ with $r=-j/(2\lambda_n)$; substituting $t=\operatorname{asinh}r$ and $\cosh t=\sqrt{1+r^2}$ gives
\begin{equation*}
\Pr(Z_n\le j)\le\exp\Bigl\{-\Bigl[(-j)\operatorname{asinh}r-2\lambda_n\bigl(\sqrt{1+r^2}-1\bigr)\Bigr]\Bigr\}.
\end{equation*}
If this bound is at most $u(1-10^{-9})$, the larger component is at most $u$ and so is the average, and the candidate is rejected directly. Prescreening only rejects candidates certified for rejection; disabling it and computing the exact CDF for every candidate gives the same result case by case.

\subsection{Windowed Poisson table for the exact CDF}\label{app:search-pois}
We use $\Pr(Z_n\le j)=\sum_m\Pr(P_2=m)\Pr(P_1\le j+m)$, where $P_1,P_2$ are independent $\Pois(\lambda_n)$ variables. By~\eqref{eq:lambda}, $\lambda_n\approx k^2n^2/(4M)$ grows quadratically with the candidate: at $M_1=16384$ and load $3$, $\lambda_n$ ranges from about $2\times10^5$ to $4\times10^5$. The initial implementation built the table from zero to $\lambda_n+14\sqrt{\lambda_n}$; a single sizing took about $1.3$ s and exhausted memory in early two-round runs. The implementation therefore builds, for each candidate, a Poisson table restricted to the window centered at $\lambda_n$ with half-width $14\sqrt{\lambda_n}+40$: starting at the mode, we compute the pmf recursively in both directions using $p(i+1)=p(i)\lambda_n/(i+1)$, normalize within the window, and accumulate the CDF. Substituting this half-width into the Chernoff bound for the two-sided Poisson tail gives an exponent of at least $98$ for every $\lambda_n$, so the mass beyond each end of the window is below $10^{-40}$, smaller than double-precision rounding error; normalization also avoids the relative error from evaluating $\log\Gamma$ term by term at large $\lambda_n$. The upper bounds before and after windowing agree case by case. Both the number of candidates requiring an exact CDF and the window length are proportional to $c$, so the cost of a single sizing is roughly proportional to $\lambda_d$; Appendix~\ref{app:timing} reports measurements.

The Poisson table depends only on $(M,k)$ and $n$, so it can be shared across syncs and risk levels. The implementation maintains a cache keyed by $(M,k)$. When full, it evicts the least recently used tables until half its capacity is free. Capacity is measured in entries, each containing two float64 values.

Under the condition of Proposition~\ref{prop:cut}, the cutoff, prescreening, and windowed tables preserve the output of the rule. We compare case by case against a reference implementation with 70 decimal digits of precision on 4536 general cases and 286 bounded-$c$ cases, covering lattice points, parity boundaries, and $+\infty$. We also compare the windowed implementation with cutoffs against the full-table implementation on 9779 cases. The value $M=16$ lies on both sides of the condition of Proposition~\ref{prop:cut}: the rule returns $+\infty$ at $u=0.01$ and a finite upper bound at $u=0.05$. On the 2406 cases with $M=16$, $k=3,4,5$, $u=0.01,0.05$, and $R=0,\dots,400$, we compare all three implementations. All three comparisons agree case by case.

\subsection{Table lookup with fixed parameters}\label{app:search-lut}
When $(M,k,u)$ is fixed, $U$ depends only on the integer $R$. We can precompute $U$ for every $R$ up to a chosen limit and store the results in a table indexed by $R$. For a fixed candidate $n$, both floors are nondecreasing in $R$, so $G_n(\widetilde d)$ is nondecreasing in $R$ as well. The Chernoff bound used for prescreening is also nondecreasing in $R$. Thus, the values of $R$ for which candidate $n$ is accepted form an integer interval unbounded above, whose left endpoint can be found by binary search. Recording each candidate at its left endpoint and then taking prefix maxima in increasing order of $R$ gives the largest accepted candidate, $U$, at every $R$ in the table. Candidates need only be checked up to the cutoff at the table's upper limit: the cutoff increases with $R$, and every candidate above it is rejected at each $R$ in the table (Proposition~\ref{prop:cut}). Table lookup and the sequential search use the same acceptance test and agree entry by entry. Observations beyond the table are still computed online as described above. Appendix~\ref{app:timing} reports table sizes and construction times.

\section{Experimental setup and supplementary results}\label{app:exp}
This appendix gives the setup and supplementary results of the experiments in Section~\ref{sec:exp}.

\subsection{Validation goals and experiment overview}\label{app:exp-overview}
The experiments fall into four groups according to the conclusion they test, and the following subsections follow this order. Experiment tags identify them in the main text and appendix.

\paragraph{Source of failure samples.} The statistical experiments (\exptag{STAT}) use the 189 configurations of Section~\ref{sec:exp-setup}, plus 42 boundary configurations at $M=64$, with about 20{,}000 failures per configuration. Configurations with rare failures sample from the planted distribution in place of the failure-conditioned distribution (Lemma~\ref{lem:plant}). The planted sampling control (\exptag{PLANT}) tests the finite-size error of this substitution: the low-load group directly checks the planted samples used by the statistical experiments in 12 configurations, and the critical-load group uses 21 configurations with $M=64$ to 1024 to locate the range of the approximation. Appendix~\ref{app:sampling-impl} and Appendix~\ref{app:planted} give the sampling method and the control results.

\paragraph{Properties of the random-sign encoding.} Random-sign encoding preserves decoding (Proposition~\ref{prop:decode}), and the decoding-equivalence test (\exptag{DEC}) verifies this on the full IBLT with more than 500 million paired trials, counting rejected candidates under both encodings (Appendix~\ref{app:decode-exp}). The post-failure estimate is conditionally unbiased for any sign composition (Propositions~\ref{prop:moments} and~\ref{prop:composition}), and the conditional-unbiasedness test (\exptag{UNB}) verifies this on 203 configurations (Appendix~\ref{app:dist}).

\paragraph{Lower-tail shape and conditional miss rate.} The failure records of the statistical experiments (\exptag{STAT}) compare the post-failure lower tail with the discrete and Gaussian limits (Theorems~\ref{thm:cross} and~\ref{thm:clt}, Appendix~\ref{app:limit}) and measure the conditional miss rate and extra capacity of the two rules at finite size (Theorem~\ref{thm:coverage} and Corollary~\ref{cor:gauss}, Appendix~\ref{app:cover-more}). The worst-configuration test (\exptag{WORST}) uses independent samples to test the Skellam rule's conditional miss rate on the 12 configurations closest to the risk bound, with 56{,}000 to 330{,}000 failures per configuration.

\paragraph{Two-round protocol and computation cost.} The protocol experiments (\exptag{PROTO}) run the two-round protocol on a real IBLT over 120 configurations with $M_1=256$ to 16384, first-round load 0.95 to 3, and one-sided and balanced sign compositions, measuring reliability and communication cost (Theorem~\ref{thm:protocol}, Appendix~\ref{app:protocol}). The computation-cost measurement (\exptag{COST}) gives the time and memory of sizing and decoding at $M=16384$ (Appendix~\ref{app:timing}).

\paragraph{Design fixed before sampling.} Before launch, we fix the configurations, seeds, sampling sources, stopping rules, and budgets for the statistical experiments (\exptag{STAT}), the worst-configuration test (\exptag{WORST}), the planted sampling control (\exptag{PLANT}), and the protocol experiments (\exptag{PROTO}) using a single feasibility pilot. We fix the selection rule for the worst-configuration test before the statistical experiments begin, then derive the specific configurations and sample sizes from that rule after the statistical experiments finish (Appendix~\ref{app:cover-more}). The conditional-unbiasedness test (\exptag{UNB}) and the decoding-equivalence test (\exptag{DEC}) come from separate independent runs, and the computation-cost measurement (\exptag{COST}) is conducted outside the other runs (Appendix~\ref{app:timing}).

\subsection{Sampling method}\label{app:sampling-impl}
There are two sources of failure samples. Direct sampling generates mappings independently and runs peeling, retaining only the failure samples; obtaining one failure requires $1/\Pr(F)$ mappings on average. Planted sampling picks a pair of difference keys uniformly at random, makes them share a uniformly random $k$-cell set, generates the mappings of the remaining keys independently and uniformly, and draws the signs independently; this construction produces a peeling failure every time, and the resulting mapping follows $Q$. Under iid random signs the statistic is invariant under label permutation, so the implementation equivalently fixes the last two labels as the duplicate edge pair. Lemma~\ref{lem:plant} gives the asymptotic relation between the planted distribution and the true failure-conditioned distribution, and Appendix~\ref{app:planted} compares them at finite size.

The statistical experiments (\exptag{STAT}) choose between the two sources according to the cost estimated in the pilot: configurations whose 20{,}000 direct failures are expected to cost at most 20 CPU core-hours use direct sampling with $\lceil2\times10^4/\widehat{\Pr}(F)\rceil$ trials, and the remaining configurations use 20{,}000 planted samples. Figure~\ref{fig:exp-grid} shows each configuration's location in $c$ and load and its sampling source. The six points at load 3 ($d=\mathrm{round}(3\alpha_cM)$, with rounding making the load slightly above 3) run separately with independent seeds after the statistical experiments finish, using the same sampling method and target failure count as the other configurations.

\begin{figure}[!htbp]
\centering\includegraphics[width=\linewidth]{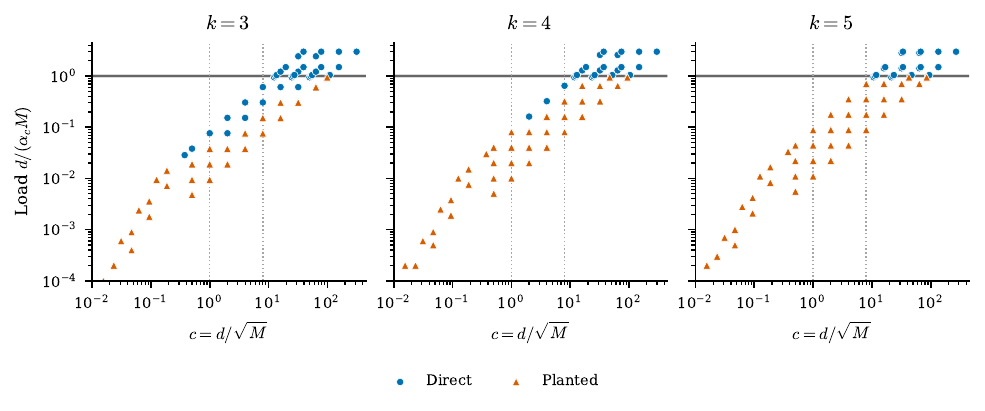}
\caption{The 189 configurations of the statistical experiments. Point shapes indicate the sampling source. Vertical lines mark the grouping boundaries of $c$, and horizontal shading marks load 0.95--1.05.}
\label{fig:exp-grid}
\end{figure}

All 203 configurations of the conditional-unbiasedness test (\exptag{UNB}) use direct sampling on CPUs with sequential peeling, with table sizes up to $M=262144$.

\subsection{Finite-size error of the planted distribution}\label{app:planted}
The planted sampling control (\exptag{PLANT}) tests whether the planted samples used in the main text can replace direct failure samples. The conclusion has two parts: in the low-load region where the statistical experiments actually use planted samples, the two samples show no detectable difference; near the critical load, the planted approximation still holds at $M=1024$ and degrades as $M$ decreases, with the planted sample giving higher miss at $M=64$.

\paragraph{Compared quantities and criteria.} Each configuration compares three quantities: the fraction of pure duplicate edge pairs among direct failures, the full-range KS distance between the post-failure $X$ distributions of the two samples, and the miss/u of the Skellam rule on each sample. The difference in miss/u is judged by an equivalence test with bound $\pm0.1$, with a one-sided level Bonferroni-corrected over the configurations in the same batch. The KS distance is compared with the asymptotic 95\% critical value: 0.0075 with 65{,}000 failures on each side and 0.011 with 20{,}000 on the direct side. Because $X$ is discrete, this comparison is conservative. The control has two groups, and Figure~\ref{fig:planted} summarizes both.

\paragraph{Low-load group.} Most of the 105 planted configurations in the statistical experiments lie far from the critical load, so the expected number of duplicate edge pairs $\mu$ (Appendix~\ref{app:graph}) is small, and Lemmas~\ref{lem:plant} and~\ref{lem:propL} make the failure probability of the same order as $\mu$. Direct sampling is infeasible for most of these configurations: among the 77 planted configurations with $k=4,5$, 76 have $\mu$ below $10^{-6}$, with a minimum of about $10^{-19}$, and the feasibility pilot observes no failure in 68 of 75 configurations; collecting 65{,}000 failures requires at least $6.5\times10^{10}$ trials, and about half the configurations require more than $10^{15}$, beyond the feasible range of direct sampling. On the other hand, the distance between the planted distribution and the failure-conditioned distribution is controlled by Lemma~\ref{lem:plant}, with an error term that tends to zero when $d=o(M)$. The region where direct sampling is hardest is precisely where this approximation is most accurate. The low-load group selects the 12 configurations for which direct sampling remains affordable, covering $c=0.5$ to 64 and load 0.04 to 0.71, with the number of direct trials fixed by $\mu$ before sampling. Each planted side has 65{,}000 failures; the direct side has about 65{,}000 failures for $k=3$ and for $(k,M,d)=(4,256,16)$ and $(5,256,128)$, and about 20{,}000 for the other 6 configurations, where failures are rarer.

Table~\ref{tab:v-low} gives the results, and the two samples agree on all 12 configurations. The fraction of pure duplicate edge pairs among direct failures ranges from 0.988 to 1.000, consistent with the mechanism in Lemma~\ref{lem:plant}. At $u=0.05$, all 95\% intervals of the difference contain zero; the equivalence test passes for all 4 configurations with $k=3$ and for 5 of the 8 configurations with $k=4,5$, while the other 3 ($(4,1024,512)$, $(5,256,32)$, $(5,256,64)$) have only about 20{,}000 direct failures, with interval endpoints within $\pm0.13$. At $u=0.01$, only $(4,1024,512)$ has a difference interval excluding zero, equal to $-0.16$ (interval $[-0.30,-0.02]$), meaning higher miss in the planted sample. Only $(3,4096,1024)$ has a KS distance of 0.0079 slightly above the critical value ($p=0.034$), and no difference is detectable after correction for 12 tests.

\begin{table}[!htbp]
\centering
\caption{Direct and planted sampling in the low-load group (Skellam rule, $u=0.05$). miss/u lists direct / planted. Difference gives the 95\% interval for direct minus planted. KS is the full-range KS distance of $X$, with 95\% critical value 0.0075 for 65{,}000 direct failures and 0.011 for 20{,}000.}
\label{tab:v-low}
\small
\setlength{\tabcolsep}{3pt}
\begin{tabular}{@{}rrrrrrcccc@{}}
\toprule
& & & & & direct & pure duplicate edge pairs & & & \\
$k$ & $M$ & $d$ & $c$ & load & failure count & fraction & miss/u & difference & KS\\
\midrule
3 & 1024 & 32 & 1 & 0.04 & 65061 & 1.000 & 0.94 / 0.95 & $[-0.05,+0.04]$ & 0.0059\\
3 & 4096 & 512 & 8 & 0.15 & 64977 & 1.000 & 0.95 / 0.96 & $[-0.06,+0.03]$ & 0.0048\\
3 & 4096 & 1024 & 16 & 0.31 & 65136 & 0.999 & 0.92 / 0.94 & $[-0.07,+0.03]$ & 0.0079\\
3 & 16384 & 8192 & 64 & 0.61 & 65530 & 0.999 & 0.98 / 0.99 & $[-0.06,+0.04]$ & 0.0051\\
\addlinespace
4 & 256 & 8 & 0.5 & 0.04 & 20071 & 0.998 & 0.54 / 0.52 & $[-0.03,+0.07]$ & 0.0071\\
4 & 256 & 16 & 1 & 0.08 & 65839 & 0.995 & 0.68 / 0.69 & $[-0.05,+0.03]$ & 0.0057\\
4 & 1024 & 128 & 4 & 0.16 & 20067 & 0.997 & 0.95 / 0.95 & $[-0.06,+0.07]$ & 0.0049\\
4 & 1024 & 256 & 8 & 0.32 & 20265 & 0.994 & 0.98 / 0.97 & $[-0.06,+0.08]$ & 0.0048\\
4 & 1024 & 512 & 16 & 0.65 & 20038 & 0.988 & 0.94 / 0.97 & $[-0.10,+0.04]$ & 0.0062\\
5 & 256 & 32 & 2 & 0.18 & 20137 & 0.999 & 1.00 / 0.98 & $[-0.05,+0.09]$ & 0.0068\\
5 & 256 & 64 & 4 & 0.36 & 19885 & 0.997 & 0.91 / 0.95 & $[-0.11,+0.03]$ & 0.0048\\
5 & 256 & 128 & 8 & 0.71 & 65470 & 0.995 & 0.99 / 1.01 & $[-0.07,+0.02]$ & 0.0063\\
\bottomrule
\end{tabular}
\end{table}

\paragraph{Critical-load group.} This group locates the range of the planted approximation. The statistical experiments include five planted configurations at load 0.95, all with $M\ge4096$. This group uses $k=3,4,5$, $M=64$ to 1024, load 0.85 to 0.95, and direct failure probability at least $10^{-3}$, comparing the two samples in a region where larger stopping sets cause most failures. The sample size is set by the equivalence test on the miss/u difference at $u=0.05$: bound $\pm0.1$, one-sided level $0.05/21$, 80\% power when the difference is zero, giving 65{,}000 failures on each side. Table~\ref{tab:v} reports the 21 configurations.

\begin{table}[!htbp]
\centering
\caption{Direct and planted sampling in the critical-load group (Skellam rule). miss/u lists direct / planted. KS is the full-range KS distance of $X$, with 95\% critical value 0.0075.}
\label{tab:v}
\small
\setlength{\tabcolsep}{3pt}
\begin{tabular}{@{}rrrrcccc r@{}}
\toprule
$k$ & $M$ & load & pure duplicate edge pairs & \multicolumn{2}{c}{$u=0.01$} & \multicolumn{2}{c}{$u=0.05$} & KS\\
\cmidrule(lr){5-6}\cmidrule(lr){7-8}
& & & fraction & miss/u & difference & miss/u & difference & \\
\midrule
3 & 64 & 0.86 & 0.09 & 0.44 / 0.48 & $-0.04\pm0.07$ & 0.87 / 0.88 & $-0.01\pm0.04$ & 0.0026\\
3 & 64 & 0.90 & 0.05 & 0.49 / 0.52 & $-0.03\pm0.08$ & 0.82 / 0.87 & $-0.05\pm0.04$ & 0.0085\\
3 & 64 & 0.95 & 0.02 & 0.44 / 0.51 & $-0.07\pm0.07$ & 0.87 / 0.94 & $-0.07\pm0.05$ & 0.0088\\
4 & 64 & 0.85 & 0.01 & 0.56 / 0.49 & $+0.08\pm0.08$ & 0.89 / 0.92 & $-0.03\pm0.05$ & 0.0105\\
4 & 64 & 0.89 & 0.01 & 0.54 / 0.65 & $-0.11\pm0.08$ & 0.89 / 0.96 & $-0.07\pm0.05$ & 0.0097\\
4 & 64 & 0.95 & 0.00 & 0.39 / 0.55 & $-0.16\pm0.07$ & 0.81 / 0.93 & $-0.13\pm0.04$ & 0.0131\\
5 & 64 & 0.85 & 0.00 & 0.58 / 0.69 & $-0.11\pm0.09$ & 0.96 / 1.07 & $-0.11\pm0.05$ & 0.0130\\
5 & 64 & 0.89 & 0.00 & 0.57 / 0.69 & $-0.12\pm0.09$ & 1.00 / 1.13 & $-0.13\pm0.05$ & 0.0142\\
5 & 64 & 0.96 & 0.00 & 0.55 / 0.72 & $-0.16\pm0.09$ & 0.89 / 1.04 & $-0.15\pm0.05$ & 0.0118\\
3 & 256 & 0.85 & 0.63 & 0.75 / 0.62 & $+0.13\pm0.09$ & 0.89 / 0.88 & $+0.01\pm0.04$ & 0.0063\\
3 & 256 & 0.90 & 0.13 & 0.75 / 0.66 & $+0.09\pm0.09$ & 0.95 / 0.88 & $+0.07\pm0.05$ & 0.0080\\
3 & 256 & 0.95 & 0.02 & 0.77 / 0.64 & $+0.14\pm0.09$ & 0.92 / 0.91 & $+0.01\pm0.05$ & 0.0060\\
4 & 256 & 0.90 & 0.01 & 0.71 / 0.70 & $+0.01\pm0.09$ & 0.93 / 0.98 & $-0.05\pm0.05$ & 0.0045\\
4 & 256 & 0.95 & 0.00 & 0.72 / 0.73 & $-0.01\pm0.09$ & 0.93 / 0.97 & $-0.05\pm0.05$ & 0.0052\\
5 & 256 & 0.90 & 0.00 & 0.71 / 0.75 & $-0.04\pm0.09$ & 0.92 / 0.94 & $-0.02\pm0.05$ & 0.0052\\
5 & 256 & 0.95 & 0.00 & 0.73 / 0.80 & $-0.07\pm0.09$ & 0.95 / 0.94 & $+0.00\pm0.05$ & 0.0027\\
3 & 1024 & 0.85 & 0.97 & 0.79 / 0.78 & $+0.01\pm0.10$ & 0.92 / 0.91 & $+0.02\pm0.05$ & 0.0033\\
3 & 1024 & 0.90 & 0.96 & 0.83 / 0.86 & $-0.03\pm0.10$ & 0.93 / 0.95 & $-0.02\pm0.05$ & 0.0037\\
3 & 1024 & 0.95 & 0.08 & 0.84 / 0.84 & $+0.01\pm0.10$ & 0.96 / 0.96 & $-0.00\pm0.05$ & 0.0091\\
4 & 1024 & 0.95 & 0.00 & 0.89 / 0.91 & $-0.02\pm0.10$ & 0.97 / 0.94 & $+0.03\pm0.05$ & 0.0054\\
5 & 1024 & 0.95 & 0.00 & 0.85 / 0.79 & $+0.06\pm0.10$ & 0.96 / 0.95 & $+0.01\pm0.05$ & 0.0039\\
\bottomrule
\end{tabular}
\end{table}

The difference between the two samples decreases as $M$ grows. All five $M=1024$ configurations pass the equivalence test at $u=0.05$, and the difference intervals at both levels of $u$ contain zero; among them, the fraction of pure duplicate edge pairs is 0.97 in some configurations and close to zero in others. For 3 of the 7 configurations with $M=256$, the test does not establish that the difference lies within the bound: at $k=3$ and load 0.90 the direct sample has higher miss, and at $k=4$ and loads 0.90 and 0.95 the planted sample has higher miss. The two configurations with $k=3$ and loads 0.85 and 0.95 agree at $u=0.05$, while at $u=0.01$ the direct sample's miss/u is higher by 0.13 and 0.14, with intervals excluding zero. For 7 of the 9 configurations with $M=64$, the test does not establish the bound; the deviation has the same direction in all of them, with the planted sample giving higher miss and the difference interval reaching as far as $-0.20$.

The KS distance shows the same trend: 1 configuration above the critical value at $M=1024$, 1 at $M=256$, and 8 at $M=64$. The $M=1024$ configuration is $(k,M,d)=(3,1024,796)$, where the difference lies in the middle of the distribution; the maximum difference in the lower tail between the 0.2\% and 15\% quantiles of the pooled sample is only 0.0039. At $M=1024$, the configuration with a pure duplicate edge pair fraction of 0.97 has as small a KS distance as the configurations with fractions close to zero: the distributions of $X$ can be close even when the failure mechanisms differ, so the fraction of pure duplicate edge pairs alone does not predict the distribution error of $X$, which also depends on overlap fluctuations on the $s(d)$ scale.

\begin{figure}[!htbp]
\centering\includegraphics[width=\linewidth]{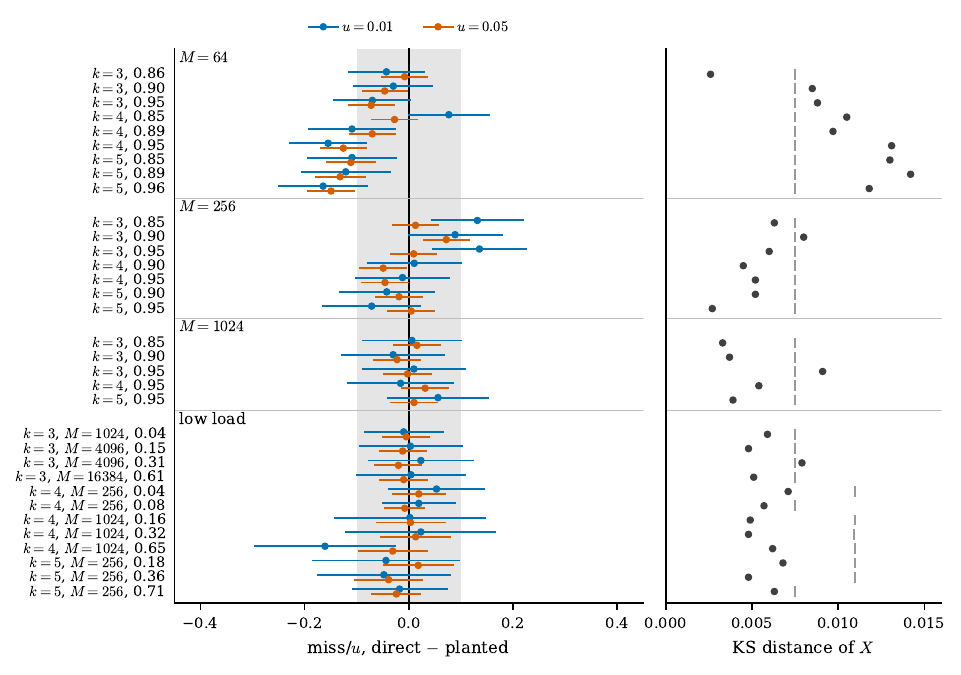}
\caption{Direct versus planted sampling. The upper three panels show the critical-load group, grouped by $M$; the bottom panel shows the low-load group. The vertical axis labels $k$ and load. Left: Skellam rule miss/u difference (direct minus planted) with 95\% intervals and shading at $\pm0.1$. Right: full-range KS distance of $X$; gray ticks mark the 95\% critical value for each sample size.}
\label{fig:planted}
\end{figure}

\subsection{Random-sign encoding does not change decoding}\label{app:decode-exp}
The decoding-equivalence test (\exptag{DEC}) compares first-round decoding under standard encoding and random-sign encoding. Across all paired trials, the two encodings agree on success and failure, the recovered results agree with the true residual core, and the decoder never falsely accepts a non-pure cell. The test uses the two-round protocol's actual IBLT encoder and decoder, builds the two parties' tables, and subtracts them.

The test has three groups, each answering one question. The first group covers the full load range and all sign compositions, confirming equivalence over the whole parameter range with a large sample. The second group samples loads more densely near the peeling threshold, where the failure probability changes fastest with load and agreement is most sensitive; Table~\ref{tab:decode} of the main text also comes from this group. The third group examines the intermediate steps of decoding, counting rejected candidates and comparing how the checksum workload of the two encodings varies with sign composition. All three groups use the same construction and verification.

\paragraph{Construction and verification.} Each trial regenerates distinct keys and fresh seeds for the cell hash, sign hash, and checksum hash; both encodings share the same keys and hashes. We first verify cell by cell that the three fields after cancellation of the common elements match those obtained by encoding the difference directly, then run decoding. An independent reference uses only the unsigned incidence graph of the true difference and computes its residual 2-core by the degree-one rule. Whenever the actual decoder accepts a key, we check the true residual degree, key, and direction, then update the reference state; these checks cover trajectories ending in both success and failure. After termination, we verify that the recovered keys are exactly the difference outside the 2-core, with all directions correct, reconstruct the three fields from the true residual, compare them cell by cell, and check that all fields are zero if and only if the residual core is empty.

\paragraph{First group: full load grid.} The configurations use $k\in\{3,4,5,6,10\}$ and $M\in\{64,1024,16384\}$. The load is $d/(\alpha_cM)\in\{0.5,0.95,1,1.05,3\}$, and the positive-direction ratio $d_+/d$ with target values $0,1/4,1/2,3/4,1$; we round $d$ and $d_+$ to integers from the target values. We use the corresponding peeling threshold for each $k$, giving 375 configurations in total. The number of common elements is $\min\{1000,M\}$. For $M=64,1024,16384$, we run 3,000,000, 900,000, and 100,000 trials per configuration, respectively, for a total of 500,000,000 trials, with 100,000,000 for each $k$. Table~\ref{tab:decode-paired} reports the results by $k$. Across all configurations, we observe zero paired success/failure disagreements, non-pure acceptances, incorrect recoveries, or residual-core disagreements. The two decoders together check 211,516,840,988 acceptance operations, all satisfying the true degree-one and direction conditions.

\paragraph{Second group: near the threshold.} The configurations use $k\in\{3,4,5,6,10\}$, $M\in\{1024,16384,65536\}$, 10 load values between 0.90 and 1.05, and $d_-/d\in\{0,1/2\}$. For $M=1024,16384,65536$, we run 100,000, 20,000, and 10,000 trials per configuration, respectively, for a total of 13,000,000, and all checks pass. At $M=16384$, the failure probability ranges from 0.020 to 0.060 at load 0.99 and from 0.993 to 0.998 at load 1.01; at $M=65536$, almost no trials fail at load 0.99 and almost all fail at load 1.01. After failure, the decoder still recovers the difference outside the 2-core. At $M=65536$ and load 1.02, the average recovery fraction ranges from 0.18 ($k=10$) to 0.50 ($k=3$).

\paragraph{Third group: rejected candidates.} Rejected candidates are cells with count $\pm1$ that fail the checksum or the cell-ownership check. Paired trials count them separately for each encoding. The configurations use $M=16384$, $k\in\{3,4,5,6,10\}$, load $0.5,0.95,1$, and nine values of $d_-/d$ from 0 to 1 in steps of $1/8$, with 50,000 trials per configuration for a total of 6,750,000, and all checks pass. Figure~\ref{fig:decode} reports the results for $k=3$ to 6. Standard encoding has no rejected candidates under one-sided differences because the count equals the degree; as the difference approaches balanced, more opposite-sign cancellation occurs and more candidates are rejected. Under random-sign encoding, the number of rejected candidates differs by at most 0.3\% across the nine sign compositions and equals the standard value in the balanced case, consistent with Proposition~\ref{prop:composition}: the coding coefficient is an iid random sign independent of $(d_+,d_-)$. The results for $k=10$ follow the same pattern, with both encodings yielding about 1.27 rejected candidates per difference key at load 1.

Across the three groups, the two decoders together reject $2.8\times10^{11}$ non-pure candidates with count $\pm1$, with zero false acceptances; the cell-ownership check rejects zero candidates, and the checksum rejects every incorrect candidate.

\begin{figure}[!htbp]
\centering\includegraphics[width=\linewidth]{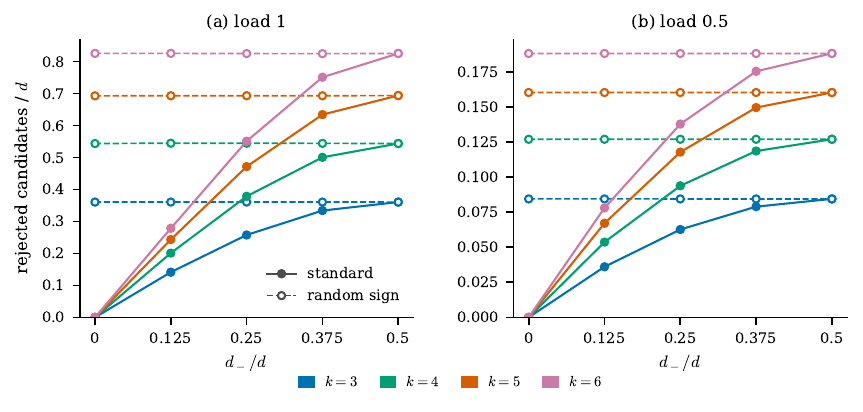}
\caption{Rejected candidates per first-round decoding, divided by $d$ ($M=16384$; both encodings share the same mappings for each configuration). Solid lines with filled markers show standard encoding; dashed lines with open markers show random-sign encoding.}
\label{fig:decode}
\end{figure}

\begin{table}[!htbp]
\centering
\caption{First-round paired decoding results.}
\label{tab:decode-paired}
\begin{tabular}{rrrr}
\toprule
$k$ & paired trials & successes & failures\\
\midrule
3 & 100,000,000 & 36,532,515 & 63,467,485\\
4 & 100,000,000 & 35,900,798 & 64,099,202\\
5 & 100,000,000 & 33,530,250 & 66,469,750\\
6 & 100,000,000 & 32,553,378 & 67,446,622\\
10 & 100,000,000 & 30,233,256 & 69,766,744\\
\bottomrule
\end{tabular}
\end{table}

\subsection{The post-failure estimate is conditionally unbiased for every sign composition}\label{app:dist}
Under random-sign encoding, the conditional mean of the post-failure $X$ agrees with zero on all tested configurations, whereas the conditional mean under standard encoding varies with sign composition and $k$. We give the results for both encodings and then describe the data of Figure~\ref{fig:encoding} in the main text.

\paragraph{Random-sign encoding.} The conditional-unbiasedness test (\exptag{UNB}) uses $k=3$ to 6 and $M=64$ to 262144, with loads covering the critical window, its left side, and 0.4 to 0.95. Of the 203 configurations, 196 have at least 1000 failure samples. For these, $|\E[X\mid F]|$ is at most 0.032, with 93\% within two standard errors, consistent with the exact unbiasedness in Proposition~\ref{prop:moments}; the conditional variance to the left of the critical window is 0.972--1.049. The 84 direct-sampling configurations of the statistical experiments (\exptag{STAT}) give the same result: $|\E[X\mid F]|$ is at most 0.019, with 98\% within two standard errors.

\paragraph{Composition-by-composition comparison of the two encodings.} Table~\ref{tab:mean-by-k} lists the values of Figure~\ref{fig:composition} in the main text. For each $(k,M,d)$, the same batch of mappings records $X$ under both encodings at five compositions, and both encodings use all failure records; the four configurations have failure probabilities of 0.099--0.102. All 20 values under random-sign encoding have absolute value at most 0.005 and lie within two standard errors of zero. Under standard encoding, the bias is largest for one-sided differences and decreases in absolute value as $d_-/d$ grows, approaching zero at balanced; for one-sided differences its sign changes with $k$, being positive for $k=3,4$ and negative for $k=5,6$. For one-sided differences $b^2/d=d$, so by~\eqref{eq:projection} of Proposition~\ref{prop:projection} the standard-encoding bias has the same sign as $c_F$, and the change of sign with $k$ therefore comes from $c_F$. The statistical experiments give the same sign change at load 1.0 and failure probability about 0.67: the conditional mean under one-sided differences is $+0.20$, $+0.02$, and $-0.08$ for $k=3,4,5$ (standard error 0.007).

\paragraph{Data of Figure~\ref{fig:encoding}.} The data of this figure are generated by the sampling code of the interactive experiment page Peeling Lens: on each random mapping it computes $X$ for three encodings, with random signs from an independent generator. The configuration has $c\approx104$. Under standard encoding, $\E[X\mid F]=+0.599\pm0.001$ for one-sided differences and $-0.002$ for balanced differences; under random-sign encoding, it is $+0.001\pm0.001$, with conditional variance 1.007, and the failure and full samples have similar distributions, consistent with the large-$c$ behavior in Theorem~\ref{thm:clt}.

\begin{table}[!htbp]
\centering
\caption{Values in Figure~\ref{fig:composition}: mean post-failure $X$ ($M=16384$, failure probability about 0.1). Standard errors are 0.0018 for $k=3$ and 0.0031--0.0032 for $k=4,5,6$.}
\label{tab:mean-by-k}
\begin{tabular}{rrlrrrrr}
\toprule
& & & \multicolumn{5}{c}{$d_-/d$}\\
\cmidrule(lr){4-8}
$k$ & $d$ & encoding & 0 (one-sided) & 0.1 & 0.25 & 0.4 & 0.5 (balanced)\\
\midrule
3 & 13306 & standard & $+0.638$ & $+0.405$ & $+0.158$ & $+0.028$ & $+0.003$\\
  &       & random sign & $+0.002$ & $-0.001$ & $0.000$ & $-0.004$ & $+0.002$\\
\addlinespace
4 & 12567 & standard & $+0.084$ & $+0.052$ & $+0.016$ & $-0.002$ & $-0.003$\\
  &       & random sign & $-0.001$ & $-0.002$ & $+0.005$ & $+0.001$ & $-0.001$\\
\addlinespace
5 & 11418 & standard & $-0.218$ & $-0.137$ & $-0.054$ & $-0.009$ & $+0.002$\\
  &       & random sign & $+0.003$ & $-0.005$ & $0.000$ & $-0.002$ & $-0.004$\\
\addlinespace
6 & 10363 & standard & $-0.387$ & $-0.244$ & $-0.096$ & $-0.021$ & $-0.004$\\
  &       & random sign & $+0.002$ & $+0.003$ & $+0.005$ & $+0.001$ & $+0.001$\\
\bottomrule
\end{tabular}
\end{table}

\subsection{Distance between the post-failure lower tail and the limit distribution}\label{app:limit}
This section uses the failure records of the statistical experiments (\exptag{STAT}) to compare the post-failure distribution with the two limit distributions. For small $c$, the discrete limit $G_d$ is close to the empirical distribution while the Gaussian shows a clear deviation; for $c\ge8$ the Gaussian is already accurate enough. The error of both approximations decreases as $M$ grows.

\paragraph{Statistics used in the comparison.} The comparison with $G_d$ uses the raw integer sum of squares $\widetilde d=\sum_vC_v^2/k$. By~\eqref{eq:raw-decomp} of Appendix~\ref{app:ams}, the difference between $\widehat d$ and $\widetilde d$ is about $0.02s(d)$ at $M=16384$. This perturbation is small relative to the standard deviation but shifts the discrete atoms, so we validate the distribution using $\widetilde d$, as in Theorem~\ref{thm:cross}. Figure~\ref{fig:crosscdf} retains all empirical values of $\widetilde d$ and standardizes them on the same scale as $G_d$; the KS distance is computed on both sides of every value taken by either distribution.

\paragraph{KS distance.} Table~\ref{tab:limit-ks} computes the full KS distances from the raw integer records, with the two columns corresponding to the two limits. The discrete-limit column compares $\widetilde d$ with $G_d$ for $c=1,2,4$ and $k=3,4,5$, giving nine configurations at each $M$; $M\le1024$ includes direct and planted samples, and $M\ge4096$ uses planted samples. The same column also lists, for $c=1$, the KS distance between $(\widetilde d-d)/s(d)$ and the Gaussian as a control. The Gaussian column compares $X=(\widehat d-d)/s(d)$ with $N(0,1)$ over the 75 direct-sampling configurations with $c\ge8$, including the critical window and overload. Each configuration has about 20{,}000 failures, and at this fixed sample size the per-configuration 95\% DKW error bound is about 0.010. The KS distance of the discrete limit decreases from 0.019--0.034 at $M=256$ to 0.003--0.012 at $M=16384$, while the Gaussian KS distance at $c=1$ stays above 0.21 throughout. For $c\ge8$, the Gaussian KS distance falls to the order of the DKW error bound at $M=16384$, and the 1\% quantile error decreases from $0.18s(d)$ to $0.05s(d)$.

\begin{table}[!htbp]
\centering
\caption{Full post-failure KS distance. The discrete limit uses $\widetilde d$; the Gaussian uses $\widehat d$. Quantile error is in units of $s(d)$ and gives the maximum absolute error across configurations.}
\label{tab:limit-ks}
\small
\begin{tabular}{rcccccc}
\toprule
& \multicolumn{3}{c}{discrete limit ($c=1,2,4$)} & \multicolumn{3}{c}{Gaussian ($c\ge8$, direct sampling)}\\
\cmidrule(lr){2-4}\cmidrule(lr){5-7}
$M$ & KS & KS of Gaussian ($c=1$) & configs & KS & 1\% quantile error & configs\\
\midrule
256 & 0.019--0.034 & 0.237--0.252 & 9 & 0.010--0.021 & 0.18 & 23\\
1024 & 0.008--0.023 & 0.224--0.244 & 9 & 0.005--0.015 & 0.10 & 23\\
4096 & 0.004--0.016 & 0.215--0.230 & 9 & 0.005--0.014 & 0.06 & 17\\
16384 & 0.003--0.012 & 0.214--0.230 & 9 & 0.003--0.010 & 0.05 & 12\\
\bottomrule
\end{tabular}
\end{table}

\paragraph{Conditional variance.} The conditional variance reflects the same structure. Among the 84 direct-sampling configurations of the statistical experiments, the conditional variance under random-sign encoding is 0.98--1.06 for $c\ge8$ and 0.985--1.020 within the critical window. For small $c$, it is well above one: about 1.12 at $c=4$, about 1.5 at $c=2$, and above 10 when $c<1$, consistent with $1+2/c^2$, obtained by converting the variance $4+2c^2$ of the limit distribution in Theorem~\ref{thm:cross} to the $X$ scale. The excess comes from the $\pm2$ shift of the duplicate edge pair, which is non-negligible relative to $s(d)$.

\subsection{Conditional miss rate and extra capacity at finite size}\label{app:cover-more}
This section supplements the numerical results of Section~\ref{sec:exp-cover}. The main text concludes that the Skellam rule keeps the conditional miss rate near $u$ on all configurations, the Gaussian rule does so only for large $c$, and the extra capacity $(U-d)/d$ of the two rules is almost the same and decreases as $M^{-1/2}$. We first give the complete results of the statistical experiments (\exptag{STAT}), the miss/u and extra capacity of both rules, and then recheck the configurations closest to the risk bound with independent samples (\exptag{WORST}).

\paragraph{Conditional miss rate of the Skellam rule.} Of the 189 configurations of the statistical experiments (\exptag{STAT}), 84 use direct sampling and 105 use planted sampling. Before judging whether miss/u exceeds 1, we consider sampling error: each configuration has about 20{,}000 failures, corresponding to about 200 misses at $u=0.01$, so the standard error of miss/u is about 0.07. At this precision, the Skellam rule's maximum miss/u on direct-sampling configurations is 1.04 (the same at both levels of $u$, occurring within the critical window), indistinguishable from 1; the maximum on planted configurations is 1.03, computed under the planted distribution. On direct-sampling configurations with $c\ge8$, the medians at the two levels of $u$ are 0.79--0.86 and 0.95--0.97. Low-load configurations with $c<8$ are more conservative, with a median of at most 0.76 at $u=0.01$, because of the discrete structure: the lower tail is concentrated on a few lattice points and the rule cannot use the whole risk budget exactly. Under the Skellam rule every sample has at least one accepted candidate; if no candidate is accepted, $U=+\infty$, which does not count as a miss.

\paragraph{Excess of the Gaussian rule.} For $c\ge8$, the Gaussian rule's miss/u is 0.83--1.19 across all configurations, at the same level as the Skellam rule. As $c$ decreases, the excess grows rapidly. Table~\ref{tab:gauss-c} lists the values at fixed $c$ for $M=16384$, corresponding to the triangles with $c\le16$ in Figure~\ref{fig:coverage} of the main text: at $u=0.01$ it is about 8--11 at $c=1$ and above 33 at $c=0.5$. Comparing table sizes, the values at $M=4096$ and 16384 differ by at most 10\% for $c\le4$, so the excess does not vanish as the table grows, consistent with the finite limit in Theorem~\ref{thm:cross}. For very small $c$ there is another case: when the observation lies near $d-2$, observations for $d=2,3$ fall below the acceptance threshold of every candidate, the Gaussian rule returns $+\infty$, and no miss occurs.

\paragraph{Extra capacity.} The price of controlling miss is an upper bound above $d$, and Table~\ref{tab:extra} summarizes this extra capacity. For $c\ge8$ the extra capacity is determined mainly by the table size and roughly halves when $M$ grows by a factor of 4, consistent with $s(d)/d\approx\sqrt{2/M}$; the medians of the two rules differ by at most 2 percentage points in the same configuration. When $c$ is bounded, $d$ itself is small and the ratio is less meaningful, so the right half of the table lists the excess number of elements $U-d$. Here the Gaussian rule has a lower upper bound, corresponding to its excess miss at small $c$. For fixed small $d$, the 90\% quantile of the upper bound at $d=2$ exceeds $d$ by 4--5.

\begin{table}[!htbp]
\centering
\caption{Extra capacity of the upper bound in the statistical experiments (\exptag{STAT}). Left: $(U-d)/d$ for $c\ge8$ and $u=0.01$, with the median taken over configuration medians and the 90\% quantile taken as the maximum over configuration 90\% quantiles; right: median of $U-d$ when $c$ is bounded.}
\label{tab:extra}
\begin{tabular}{rccc}
\toprule
& \multicolumn{2}{c}{Skellam rule} & Gaussian rule\\
\cmidrule(lr){2-3}\cmidrule(lr){4-4}
$M$ & median & 90\% quantile & median\\
\midrule
256 & 25\% & 41\% & 23.6\%\\
1024 & 11.3\% & 18\% & 10.9\%\\
4096 & 5.4\% & 8.5\% & 5.3\%\\
16384 & 2.6\% & 4.1\% & 2.6\%\\
\bottomrule
\end{tabular}
\qquad
\begin{tabular}{rcc}
\toprule
$c$ & Skellam rule & Gaussian rule\\
\midrule
1 & 5 & 3\\
2 & 8 & 6\\
4 & 15 & 14\\
\bottomrule
\end{tabular}
\end{table}

\paragraph{Case $M=64$.} When the table size drops to $M=64$, the overlap between the duplicate edge pair and the remaining edges, omitted by the null distribution, is no longer negligible (Section~\ref{sec:exp-cover}). Among the 42 configurations with $M=64$, the Skellam rule's maximum miss/u is 1.08 at $u=0.05$ ($k=4$, $d=6$, 95\% Wilson interval 1.02--1.15), beyond sampling error, and 1.04 at $u=0.01$.

\begin{table}[!htbp]
\centering
\caption{Gaussian rule miss/u at fixed $c$ ($M=16384$).}
\label{tab:gauss-c}
\begin{tabular}{rrrrrrr}
\toprule
& \multicolumn{3}{c}{$u=0.01$} & \multicolumn{3}{c}{$u=0.05$}\\
\cmidrule(lr){2-4}\cmidrule(lr){5-7}
$c$ & $k=3$ & $k=4$ & $k=5$ & $k=3$ & $k=4$ & $k=5$\\
\midrule
0.5 & 36.4 & 33.2 & 33.2 & 9.37 & 8.81 & 9.39\\
1 & 11.3 & 8.74 & 8.27 & 4.09 & 4.24 & 4.60\\
2 & 3.06 & 2.48 & 2.60 & 2.00 & 1.70 & 1.94\\
4 & 1.52 & 1.41 & 1.50 & 1.21 & 1.19 & 1.20\\
8 & 1.12 & 1.04 & 1.08 & 1.07 & 1.08 & 1.06\\
16 & 0.98 & 0.93 & 1.00 & 0.98 & 1.01 & 1.00\\
\bottomrule
\end{tabular}
\end{table}

\paragraph{Worst-configuration test (\exptag{WORST}).} The maximum miss/u of the statistical experiments lies within sampling error, but each configuration has only about 20{,}000 failures, which is insufficient to confirm that the highest-risk configurations also meet the target. This test uses samples independent of the statistical experiments (\exptag{STAT}) to recheck the Skellam rule's conditional miss rate on the configurations closest to the risk bound, with both the selection rule and the sample sizes fixed before sampling. We first select the 5 configurations whose miss/u exceeds 1 in exploration; the remaining slots are filled in descending order of the larger of the two one-sided 95\% exact upper bounds divided by $u$ in the statistical experiments (\exptag{STAT}). Four highly ranked configurations have only planted samples, because direct sampling cannot obtain failures at feasible cost, and the rule skips them in favor of later configurations; their miss/u under $Q$ is at most 1.02. Only one of the selected 12 configurations uses $M=256$.

The criteria are also fixed before sampling. The 12 configurations and two levels of $u$ give 24 judgments. A judgment passes if its one-sided simultaneous confidence upper bound is at most $1.1u$, using the Clopper--Pearson interval with a Bonferroni correction over the 24 judgments. Sample sizes are chosen by power: for each configuration and level of $u$, we use the previously estimated 95\% Wilson upper bound (at most 1) as the planning value, choose the failure count that gives a passing probability of at least $1-0.2/24$, and take the larger count over the two levels, so that at the planning value the probability that all 24 judgments pass is at least 80\%. We sample in blocks and analyze all blocks through the first point at which the cumulative failure count in block order reaches the target; stopping depends only on the failure count.

All 24 judgments pass, with a maximum upper bound of $1.024u$; 22 of the 24 two-sided 95\% intervals lie entirely below 1 and the other 2 cover 1. Table~\ref{tab:confirm} lists the load, failure count, miss/u, and one-sided simultaneous confidence upper bound of each test configuration, and the segments in Figure~\ref{fig:confirm} of the main text are two-sided 95\% intervals. As a control, at the two test configurations with $k=3$ and $M=1024$, the Gaussian rule's miss/u is 11.1 ($u=0.01$) and 4.1 ($u=0.05$) at $c=1$, and 2.9 and 1.8 at $c=2$.

\begin{table}[!htbp]
\centering
\caption{Skellam rule miss/u on the \exptag{WORST} configurations, with one-sided simultaneous confidence upper bounds in parentheses (per-item level $0.05/24$).}
\label{tab:confirm}
\begin{tabular}{lrrcc}
\toprule
configuration & load & failures & $u=0.01$ & $u=0.05$\\
\midrule
$k=3$, $M=1024$, $d=32$ & 0.04 & 55{,}509 & 0.53 (0.62) & 0.97 (1.02)\\
$k=3$, $M=1024$, $d=64$ & 0.08 & 289{,}001 & 0.90 (0.95) & 0.91 (0.94)\\
$k=3$, $M=1024$, $d=192$ & 0.23 & 289{,}394 & 0.79 (0.83) & 0.95 (0.97)\\
$k=4$, $M=256$, $d=128$ & 0.65 & 105{,}689 & 0.74 (0.82) & 0.93 (0.97)\\
$k=4$, $M=1024$, $d=751$ & 0.95 & 298{,}439 & 0.85 (0.90) & 0.95 (0.98)\\
$k=3$, $M=4096$, $d=3352$ & 1.00 & 333{,}476 & 0.93 (0.98) & 0.97 (1.00)\\
$k=5$, $M=4096$, $d=3018$ & 1.05 & 332{,}350 & 0.90 (0.95) & 0.99 (1.01)\\
$k=5$, $M=1024$, $d=755$ & 1.05 & 332{,}353 & 0.83 (0.87) & 0.96 (0.98)\\
$k=4$, $M=16384$, $d=16384$ & 1.29 & 293{,}250 & 0.95 (1.01) & 0.98 (1.01)\\
$k=3$, $M=4096$, $d=5029$ & 1.50 & 332{,}350 & 0.91 (0.96) & 0.98 (1.00)\\
$k=4$, $M=16384$, $d=18980$ & 1.50 & 290{,}808 & 0.96 (1.02) & 0.98 (1.00)\\
$k=4$, $M=16384$, $d=37959$ & 3.00 & 289{,}952 & 0.95 (1.01) & 0.97 (1.00)\\
\bottomrule
\end{tabular}
\end{table}

\paragraph{Data of Figure~\ref{fig:scale}.} Of the 12 configurations in the upper row of Figure~\ref{fig:scale}, two ($k=3$, $M=4096$ and $k=4$, $M=16384$) are test configurations; the other 10 follow the same design, with sample size and stopping rule fixed before sampling, and each has 291{,}000 to 332{,}000 direct-sampling failures (target about 290{,}000, stopping in blocks). Under random-sign encoding, common elements cancel in the difference table and sign composition does not affect the post-failure distribution, and the $|b|\le1$ used by the protocol under balanced differences also leaves the upper bound unchanged; the protocol experiments (\exptag{PROTO}), the statistical experiments (\exptag{STAT}), and these direct samples therefore estimate the same conditional miss rate, and we pool them by configuration in the figure. For these 10 configurations, the Skellam rule's one-sided upper bound is at most $1.025u$ at $u=0.01$ and $1.003u$ at $u=0.05$.

\subsection{Reliability and communication cost of the two-round protocol}\label{app:protocol}
The protocol experiments (\exptag{PROTO}) run the complete two-round synchronization on a real IBLT, corresponding to Section~\ref{sec:exp-e2e} of the main text. They test whether, when the second-round table length is set from the upper bound after a first-round failure, both the final reliability and the communication cost approach those of an oracle that knows the true $d$.

\paragraph{Trial construction.} Each trial generates common elements and difference elements separately, with all differences on the $A$ side for one-sided differences and half on each side for balanced differences. Common elements cancel in the difference table but participate as usual in table construction and cell-by-cell subtraction. The two rounds use independent cell hashes and share the sign hash and checksum hash. Trials that succeed in the first round require no upper bound, so only first-round-failure trials enter the comparison; the two levels of $u$ are evaluated separately on the same batch of trials. All upper bounds also share the same keys and hashes: for each first-round failure we record $U_b$, the second-round size, and the second-round outcome for each bound, so different bounds can be compared in paired fashion by trial. The second round decodes independently on the new table, consistent with Theorem~\ref{thm:protocol}; beyond the table length of Section~\ref{sec:flow} it imposes a minimum of 64 cells. All second-round recovery results match the true difference element by element.

\paragraph{Communication volume and sample size.} We count 20 bytes per cell: 4 bytes for the transmitted count and 8 bytes each for keySum and checksum, plus an 8-byte global counter for random-sign encoding. Both levels of $u$ use the same number of trials per configuration, set to give about 20{,}000 first-round failures, $\lceil2\times10^4/\widehat{\Pr}(F_1)\rceil$, with a cap of 3{,}000{,}000; the 13 configurations with no first-round failure in the pilot receive a fixed 100{,}000 trials. Among them, the 8 configurations with $k=4,5$ at load 0.95 and $M_1\ge4096$ still have no first-round failure in these trials and contribute only to the failure-rate and byte statistics. Intervals for the success-rate difference use the Newcombe hybrid score method for paired proportion differences.

Table~\ref{tab:e2e} summarizes the protocol results by $M_1$ and $u$. The final failure rate of both rules is at most 0.3\% even at the smallest $M_1=256$, and the conditional miss rate on balanced configurations agrees with the statistical experiments (\exptag{STAT}) of Section~\ref{sec:exp-cover}. Even when the upper bound is sufficient ($U_b\ge d$), second-round peeling itself can still fail, accounting for 0.06--0.08\% at $M_1=256$. The oracle's upper bound equals $d$, so by~\eqref{eq:risk} of Theorem~\ref{thm:protocol} its two-round failure comes only from the second term, the failure of second-round peeling with sufficient capacity. The oracle's second-round table is therefore smaller than those of the two rules, and in paired comparisons by trial the Skellam rule's success rate is up to 0.35 percentage points higher than the oracle's on some $M_1=256$ configurations. The bounds of the two rules occasionally underestimate $d$, and the second-round margin $1/(0.8\alpha_c)$ absorbs this slight underestimate (Section~\ref{sec:protocol}), so their final failure rates stay at the oracle's level.

\begin{table}[!htbp]
\centering
\caption{Two-round protocol summary ($k=3,4,5$). miss/u gives the median / maximum over balanced configurations with at least 1000 first-round failures. Final failure is the maximum over configurations. Second-round size and total bytes are medians relative to the oracle. The oracle's upper bound equals $d$ and is independent of $u$; both levels share the same trials, so the oracle results are identical and appear only in the upper half.}
\label{tab:e2e}
\begin{tabular}{@{}llrrrr@{}}
\toprule
$M_1$, $u$ & upper bound & miss/u & final failure & second-round size & total bytes\\
\midrule
256, 0.01 & Skellam & 0.73 / 0.87 & 0.24\% & +25.6\% & +14.6\%\\
& Gaussian & 1.00 / 1.13 & 0.25\% & +23.9\% & +13.7\%\\
& oracle & -- & 0.51\% & 0 & 0\\
1024, 0.01 & Skellam & 0.88 / 1.00 & 0.06\% & +11.4\% & +6.5\%\\
& Gaussian & 1.05 / 1.17 & 0.08\% & +11.1\% & +6.3\%\\
& oracle & -- & 0.09\% & 0 & 0\\
4096, 0.01 & Skellam & 0.92 / 1.01 & 0.02\% & +5.4\% & +3.1\%\\
& Gaussian & 1.01 / 1.09 & 0.03\% & +5.3\% & +3.0\%\\
& oracle & -- & 0.03\% & 0 & 0\\
16384, 0.01 & Skellam & 0.97 / 1.07 & 0.01\% & +2.6\% & +1.5\%\\
& Gaussian & 1.02 / 1.13 & 0.01\% & +2.6\% & +1.5\%\\
& oracle & -- & 0.02\% & 0 & 0\\
\addlinespace
256, 0.05 & Skellam & 0.93 / 1.01 & 0.29\% & +16.9\% & +9.7\%\\
& Gaussian & 1.01 / 1.07 & 0.27\% & +16.3\% & +9.4\%\\
1024, 0.05 & Skellam & 0.95 / 1.05 & 0.13\% & +7.8\% & +4.5\%\\
& Gaussian & 1.02 / 1.09 & 0.11\% & +7.7\% & +4.4\%\\
4096, 0.05 & Skellam & 0.97 / 1.04 & 0.02\% & +3.8\% & +2.2\%\\
& Gaussian & 1.01 / 1.07 & 0.02\% & +3.8\% & +2.1\%\\
16384, 0.05 & Skellam & 1.00 / 1.05 & 0.01\% & +1.9\% & +1.1\%\\
& Gaussian & 1.01 / 1.06 & 0.02\% & +1.9\% & +1.1\%\\
\bottomrule
\end{tabular}
\end{table}

\paragraph{Comparison with the calibration rule~\cite{wu2026selfsizing}.} The existing self-sizing approach enlarges the first-round estimate by an offline calibration factor that must cover the worst sign composition. This comparison examines how sensitive that approach is to sign composition under the two encodings. The baseline is the calibration rule of~\cite{wu2026selfsizing}: for $k=3$ that work groups results by first-round size and takes the worst 1\% quantile $q_{0.01}$ of post-failure $\widehat d/d$ over several loads and sign compositions (0.8045, 0.8997, 0.9482 for $M_1=256,1024,4096$). We use only its upper-bound component $U=\lceil\widehat d/q_{0.01}\rceil$ with the same second-round table length as our two rules; the capacity coefficient of that protocol, the deduction of elements recovered in the first round, and joint peeling across the two rounds are outside the comparison. Table~\ref{tab:vldb} compares this rule under both encodings, with and without accounting for $|b|$.

\begin{table}[!htbp]
\centering
\caption{Calibration rule versus the Skellam rule ($k=3$, $u=0.01$). The conditional miss rate /u gives the median / maximum for balanced configurations and the range for one-sided configurations (at least 1000 first-round failures). Final failure is the maximum. Total bytes are the median relative to the oracle.}
\label{tab:vldb}
\small
\setlength{\tabcolsep}{4pt}
\begin{tabular}{@{}rlcccc@{}}
\toprule
$M_1$ & rule & balanced & one-sided & final failure & total bytes\\
\midrule
256 & standard encoding, calibration factor & 0.92 / 1.02 & 0.09--0.90 & 0.27\% & +13.9\%\\
& standard encoding, calibration factor, with $|b|$ & 0.92 / 1.02 & 0 & 0.26\% & +13.9\%\\
& random sign, calibration factor & 0.87 / 0.97 & 0.85--0.93 & 0.23\% & +13.9\%\\
& random sign, Skellam rule & 0.72 / 0.80 & 0 & 0.24\% & +14.6\%\\
1024 & standard encoding, calibration factor & 0.88 / 0.94 & 0.05--1.03 & 0.07\% & +6.4\%\\
& standard encoding, calibration factor, with $|b|$ & 0.88 / 0.94 & 0 & 0.07\% & +6.4\%\\
& random sign, calibration factor & 0.98 / 1.03 & 0.80--1.04 & 0.08\% & +6.4\%\\
& random sign, Skellam rule & 0.89 / 0.92 & 0 & 0.06\% & +6.5\%\\
4096 & standard encoding, calibration factor & 0.84 / 0.86 & 0.36--0.89 & 0.02\% & +3.1\%\\
& standard encoding, calibration factor, with $|b|$ & 0.84 / 0.86 & 0 & 0.02\% & +3.1\%\\
& random sign, calibration factor & 0.90 / 0.94 & 0.77--0.95 & 0.02\% & +3.1\%\\
& random sign, Skellam rule & 0.97 / 1.01 & 0 & 0.01\% & +3.1\%\\
\bottomrule
\end{tabular}
\end{table}

With the same calibration factor, the standard encoding's conditional miss rate varies widely with sign composition, whereas the random-sign encoding's range is much smaller. For one-sided configurations, standard encoding ranges from $0.05u$ to $1.03u$ and random-sign encoding from $0.77u$ to $1.04u$. At fixed load 0.95--1.0, the difference is clearer: for one-sided and balanced differences, standard encoding gives $0.05$--$0.40u$ and $0.86$--$0.94u$, and random-sign encoding gives $0.80$--$0.95u$ and $0.85$--$1.01u$. Accounting for $|b|$ makes the one-sided miss zero under both encodings. In reliability and communication volume the two encodings hardly differ: in paired comparisons by trial, with the calibration factor and the choice of whether to account for $|b|$ fixed, the success-rate difference between random-sign and standard encoding lies within $\pm1\times10^{-3}$, with 2 of the 30 intervals excluding zero (both at $M_1=256$ and load 1.0), and the median total-byte difference is about 0.01\%.

\subsection{Computation cost}\label{app:timing}
The computation-cost measurement (\exptag{COST}) covers three parts: table size, construction time, and protocol time for the Skellam rule's lookup implementation; sizing time and Poisson-table memory for online computation; and the extra first-round decoding time of random-sign encoding.

\paragraph{Measurement and configurations.} The host has dual Intel Xeon Silver 4410Y processors (nominal 2.0 GHz, 24 physical cores and 48 logical CPUs), runs Linux, and uses a single-threaded Go implementation pinned to one core. Timing uses two runs. Before the first, we pause the host's task queue and wait for all tasks in the current batch to finish. We record that core's frequency and system load every 10 seconds; the frequency remains at 2.001 GHz throughout the run. Figure~\ref{fig:sizing} and Table~\ref{tab:dec-ratio} come from this run. In the second run, with the task queue idle, we repeat protocol timing with an added lookup branch. Table~\ref{tab:proto} comes from this run; its online and Gaussian protocol times differ from the first run by less than 3\%. Protocol timing uses $M_1=16384$ and 100,000 common elements, with the same 40 trials per configuration repeated 10 times and branch order rotated across repetitions. The second run includes lookup, online, Gaussian rule, oracle, and standard-encoding branches. First-round decoding time is counted per trial.

\paragraph{Table lookup.} Following Appendix~\ref{app:search-lut}, we build 24 tables for $M_1=1024,16384$, $k=3,4,5$, and $u=0.01,0.05$, with $R$ covered up to first-round loads of 1.5 and 3. Every $U$ is below $2^{16}$, so each entry occupies 2 bytes. At $M_1=16384$, tables covering load up to 1.5 occupy 121--172 KB and take 5--7 s to construct on a single thread; those covering load up to 3 occupy 241--345 KB and take 20--25 s. At $M_1=1024$, the tables occupy 8--22 KB and take less than 0.6 s to construct. For every $R$ in each table, we compare the stored result with the online implementation used in the experiments. All 24 tables agree entry by entry. A separately measured lookup at a random index takes about 1 ns. For protocol timing, we build three additional tables with $M_1=16384$ and $u=0.01$, covering first-round loads up to 4, and likewise verify them entry by entry. All first-round failure observations fall within these tables. In the protocol, each lookup-branch sizing takes 0.40--0.88 $\mu$s, compared with 0.75--1.33 $\mu$s for the Gaussian rule measured in the same way. Both include timing overhead. Total protocol time for the lookup branch differs from the Gaussian rule by less than 0.5\%.

\paragraph{Online sizing time and memory.} The Skellam rule computes the upper bound online when parameters vary across synchronizations or observations exceed the lookup table's range. It uses one Poisson table per candidate $n$, and whether these tables can be reused determines the sizing time, so the timing distinguishes cache conditions. Poisson tables are kept in a main-memory cache managed by the program. A cache entry corresponds to one value in the Poisson table of some candidate $n$ and stores its probability and cumulative probability as two double-precision numbers, 16 bytes in total; the capacity is 16,000,000 entries (about 256 MB). When the cache exceeds its limit, the program evicts tables in least-recently-used order until the occupancy plus the size of the incoming table is at most half the capacity, then inserts the new table. We time sizing under three cache conditions: cold starts each sizing with an empty cache; session has the same receiver perform consecutive synchronizations at fixed $(M,k)$, each possibly involving one or two protocol rounds and each processing a new observation, with the cache retained between synchronizations; warm immediately recomputes the same observation once. For sizing timings, we use the first 200 failure records per configuration from the statistical experiments (\exptag{STAT}) at $M=16384$, repeating each of the four modes, cold, session, warm, and Gaussian rule, 10 times, with their order rotated across repetitions.

Appendix~\ref{app:search-pois} predicts that sizing cost is roughly proportional to $\lambda_d$, that is, roughly $c^2$, and the measurements agree. Figure~\ref{fig:sizing} fixes $M$, with load proportional to $c$. Cold time grows roughly as load to the power 1.9, or $c^{1.9}$, from 0.06--0.12 ms at load about 0.08 ($c=8$) to 69--84 ms at load 3. Cache reuse reduces the time markedly: session is about one fifth of cold, and warm about one sixth to one seventh. At load 3, the Poisson tables requested cumulatively in session exceed the cache limit, the program evicts some old tables, and session rises to 41--55 ms. In the full protocol, peak session memory for Poisson tables is 49--86 MB at load 1.05, 128--174 MB at load 1.5, and reaches the 256 MB limit at load 3. Gaussian rule sizing takes only 0.06--0.29 $\mu$s.

The exact Skellam CDF can also be replaced by a Gaussian approximation at large parameters: in the exploration phase, across the 33 configurations with Poisson mean above 200, the miss count is unchanged, and the fraction of samples whose $U$ changes is at most 1.8\% in 32 of these configurations, and 10\% in the other configuration with only 10 samples. All results in this paper use the exact rule.

\begin{figure}[!htbp]
\centering\includegraphics[width=\linewidth]{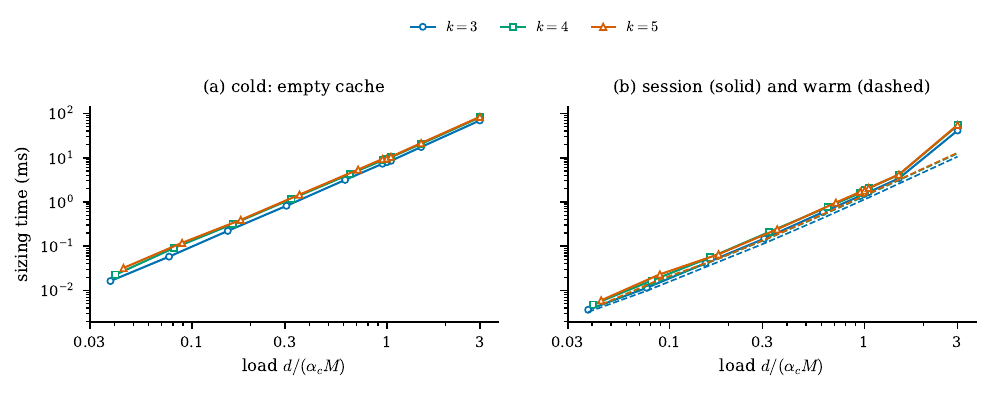}
\caption{Skellam rule time per sizing ($M=16384$, single thread, cache limit 16,000,000 entries). (a) cold; (b) session (solid) and warm (dashed).}
\label{fig:sizing}
\end{figure}

\paragraph{Extra decoding cost of random-sign encoding.} This cost arises in the pure-cell candidate check of first-round decoding. The decoder treats cells with count $\pm1$ as candidate pure cells and confirms them with the checksum. Under standard encoding with one-sided differences, all keys have the same sign, so a cell with count $\pm1$ necessarily contains exactly one key and every candidate passes; with both positive and negative differences, for example three keys with signs $+1,+1,-1$ in one cell also give count $+1$, and the checksum rejects such candidates. Under random-sign encoding the key signs are iid, so the situation equals that of balanced differences regardless of sign composition. A separate run (30 trials per configuration, 3 repetitions) counted the candidates checked per recovered key: random-sign encoding checks 1.77 ($k=3$) and 2.53 ($k=4$); standard encoding checks 1.00 under one-sided differences and matches random-sign encoding under balanced differences.

Table~\ref{tab:dec-ratio} gives the ratio of first-round decoding times of the two encodings. Under balanced differences the two are almost the same. Under one-sided differences, the extra time of random-sign encoding comes from the additional candidate checks, and the ratio grows with load; at load 3 it is about 3.1, because standard encoding stops almost immediately under one-sided differences. At load 3 the absolute times of both are only 0.04--0.16 ms, less than 1\% of total protocol time.

\begin{table}[!htbp]
\centering
\caption{Ratio of first-round decoding times (random-sign encoding / standard encoding; $M_1=16384$, 100,000 common elements, each trial timed as the median of 10 repetitions).}
\label{tab:dec-ratio}
\begin{tabular}{llccc}
\toprule
load & difference & $k=3$ & $k=4$ & $k=5$\\
\midrule
1.0 & one-sided & 1.26 & 1.41 & 1.40\\
1.05 & one-sided & 1.41 & 1.54 & 1.70\\
1.5 & one-sided & 1.75 & 2.24 & 2.41\\
3 & one-sided & 3.17 & 3.12 & 3.11\\
\addlinespace
1.05 & balanced & 1.06 & 1.04 & 1.03\\
3 & balanced & 1.00 & 0.99 & 1.00\\
\bottomrule
\end{tabular}
\end{table}

\end{document}